\newtheorem{theorem}{Theorem}
\newtheorem{definition}[theorem]{Definition}
\newtheorem{lemma}[theorem]{Lemma} 
\newtheorem{proposition}[theorem]{Proposition}
\newtheorem{corollary}[theorem]{Corollary}
\newtheorem{observation}[theorem]{Observation}
\newtheorem{implication}[theorem]{Implication}
\newcommand{\e}{\mathrm{e}}
\renewcommand{\i}{\mathrm{i}}
\DeclareMathOperator{\Tr}{Tr}
\DeclareMathOperator{\ran}{ran}
\renewcommand{\d}{\mathrm{d}}
\DeclareMathOperator{\rank}{rank}
\DeclareMathOperator{\Herm}{Herm}
\DeclareMathOperator{\Pos}{Pos}
\DeclareMathOperator{\CPT}{CPT}
\DeclareMathOperator*{\argmin}{arg\,min}
\renewcommand{\vec}{\operatorname{vec}} 
\newcommand{\U}{\operatorname{U}} 
\renewcommand{\O}{\operatorname{O}} 
\newcommand{\cone}{\operatorname{cone}} 
\newcommand{\sign}{\mathrm{sign}} 
\newcommand{\CC}{\mathbb{C}}
\newcommand{\RR}{\mathbb{R}}
\newcommand{\1}{\mathds{1}}
\renewcommand{\EE}{\mathbb{E}}
\DeclareMathAlphabet{\mathbbold}{U}{bbold}{m}{n}
\newcommand{\mc}[1]{\mathcal{#1}}
\renewcommand{\A}{\mc{A}}
\newcommand{\V}{\mc{V}}
\renewcommand{\W}{\mc{W}}
\newcommand{\norm}[1]{\left\Vert #1 \right\Vert}
\newcommand{\normn}[1]{\lVert #1 \rVert}
\newcommand{\normb}[1]{\bigl\Vert #1 \bigr\Vert}
\newcommand{\dnorm}[1]{\norm{#1}_\diamond} 
\newcommand{\snorm}[1]{\norm{#1}} 
\newcommand{\snormb}[1]{\normb{#1}}
\newcommand{\nnorm}[1]{\norm{#1}_\ast} 
\newcommand{\nnormb}[1]{\normb{#1}_\ast}
\newcommand{\fnorm}[1]{\norm{#1}_\fro} 
\newcommand{\fnormn}[1]{\normn{#1}_\fro}
\newcommand{\fnormb}[1]{\normb{#1}_\fro}
\newcommand{\jnorm}[1]{\norm{#1}_\mysquare} 
\newcommand{\kw}[1]{\frac{1}{#1}}
\newcommand{\tkw}[1]{\tfrac{1}{#1}}
\newcommand{\argdot}{{\,\cdot\,}}
\renewcommand{\L}{\operatorname{\mathrm{L}}}
\renewcommand{\M}{\operatorname{\mathbb{L}}}
\newcommand{\DM}{\operatorname{\mc{D}}}
\newcommand{\st}{\mathrm{subject\ to}}
\newcommand{\fro}{\mathrm{F}}
\newcommand{\minimize}{\mathrm{minimize }}
\newcommand{\maximize}{\mathrm{maximize }}
\providecommand{\DC}{\operatorname{\mathscr{D}}} 
\newcommand{\ad}{^\ast}
\newcommand{\ev}{\epsilon} 
\newcommand{\AG}{\A_G}
\newcommand{\Astr}{\A_{\mathrm{str}}}
\newcommand{\capstr}[1]{\textbf{#1}}
\newcommand{\eps}{\mathrm{eps}}
\newcommand{\mysquare}{{\protect\scalebox{0.5}{$\square$}}}
\newcommand{\fd}{f_{\mysquare}}
\begin{document}
\title{Improving compressed sensing \\ with the diamond norm
}


\author[1,2]{M.\ Kliesch}
\author[2]{R.\ Kueng}
\author[1]{J.\ Eisert}
\author[2]{D.\ Gross}

\affil[1]{\small Dahlem Center for Complex Quantum Systems, Freie Universit\"{a}t Berlin, Germany}

\affil[2]{\small Institute for Theoretical Physics, University of Cologne, Germany}

\date{}

\maketitle


\begin{abstract}
In low-rank matrix recovery, one aims to reconstruct a low-rank matrix from a minimal number of linear measurements. Within the paradigm of compressed sensing, this is made computationally efficient by minimizing the nuclear norm as a convex surrogate for rank. 

In this work, we identify an improved regularizer based on the so-called diamond norm, a concept imported from quantum information theory. We show that --for a class of matrices saturating a certain norm inequality-- the descent cone of the diamond norm is contained in that of the nuclear norm. This suggests superior reconstruction properties for these matrices. We explicitly characterize this set of matrices. Moreover, we demonstrate numerically that the diamond norm indeed outperforms the nuclear norm in a number of relevant applications: These include signal analysis tasks such as blind matrix deconvolution or the retrieval of certain unitary basis changes, as well as the quantum information problem of process tomography with random measurements.

The diamond norm is defined for matrices that can be interpreted as order-4 tensors and it turns out that the above condition depends crucially on that tensorial structure. In this sense, this work touches on an aspect of the notoriously difficult tensor completion problem.
\end{abstract}

\section{Introduction}
The task of recovering an unknown low-rank matrix from a small number
of measurements appears in a variety of contexts. Examples of this
task are provided by collaborative
filtering in machine learning \cite{KorBellVol09},
quantum state tomography in quantum information
\cite{GroLiuFla10,FlaGroLiu12},  the estimation of covariance
matrices \cite{BicLev08,CheChiGol15}, or face recognition
\cite{BasJac03}. If the measurements are
linear, the technical problem reduces to identifying the lowest-rank
element in an affine space of matrices. In general, this problem is
$\NP$-hard and it is thus unclear how to approach it algorithmically
\cite{Nat95}. 

In the wider field of compressed sensing \cite{FouRau13}, the strategy
for treating such problems is to replace the complexity measure -- 
here the rank -- with a tight convex relaxation. Often, it can be
rigorously proved that the resulting convex optimization problem has
the same solution as the original problem, 
while at the same time allowing for an efficient algorithm.
The tightest (in some sense \cite{FazHinBoy01}) convex relaxation of
rank is the \emph{nuclear  norm}, i.e.\ the sum of singular values.
Minimizing the nuclear norm subject to linear constraints is a
semi-definite program and a great number of rigorous performance
guarantees has been provided for low-rank reconstruction using
nuclear norm minimization
\cite{CanRec09,CanTao10,RecFazPar10,GroLiuFla10,Gro11,CanPla11,AhmRecRom12,KueRauTer15,KabKueRau15}.

The geometry of convex reconstruction schemes is now well-understood
(c.f.\ Figure~\ref{fig:geometry}). Starting with a convex regularizer
$f$ (e.g.\ the nuclear norm), geometric proof techniques like Tropp's
Bowling scheme \cite{Tro14} or Mendelson's small ball method
\cite{Men14,KolMen14} bound the reconstruction error in
terms of the descent cone of $f$ at the matrix that is to be
recovered. Moreover, these arguments suggest that the error would
decrease if another convex regularizer with smaller descent cone would
be used. This motivates the search for new convex regularizers that (i) are
efficiently computable and (ii) have a smaller descent cone at particular points of interest.

In this work, we introduce such an improved regularizer based on the
\emph{diamond norm} \cite{KitSheVya02}. 
This norm plays a fundamental role in the 
context of quantum information and operator theory \cite{Pau02}. 
For this work, it is convenient to also use a variant of the diamond norm that we call the \emph{square norm}. 
While not obvious from its definition, it has been found that the
diamond norm can be efficiently computed by means of a semidefinite
program (SDP) \cite{Wat09,BenTaS09,Wat12}.  
Starting from one such SDP characterization \cite{Wat12}, we identify
the set of matrices for which the square norm's descent cone is contained
in the corresponding one of the nuclear norm. 
As a result, low-rank matrix recovery guarantees that have been
established via analyzing the nuclear norm's descent cone
\cite{Tro14,KueRauTer15} are also valid for
square norm regularization, provided that the matrix of interest
belongs to said set. 
What is more,
bearing in mind the reduced size of the square norm's
descent cone, we actually expect an improved recovery. Indeed, with numerical
studies we show an improved performance. 

Going beyond low-rank matrix recovery, we identify several applications. 
In physics, we present numerical experiments that show that the
diamond norm offers improved performance for \emph{quantum process
tomography} \cite{ShaKosMoh11}. The goal of this important task is to
reconstruct a quantum process from suitable preparations of inputs and
measurements on outputs extending quantum \emph{state} tomography,
for which low-rank methods have been studied extensively
\cite{GroLiuFla10,FlaGroLiu12,Vor13,Kue15}. 
We then identify applications to problems from the context of signal
processing.  These include matrix versions of the
\emph{phase retrieval problem}
\cite{Wal63,CanEldStr11,CanLi13,CanStroVor13,AleBanFicMix14,CanLiSol14,GroKraKue15_partial,GroKraKue15_masked}, 
as well as a matrix version of the \emph{blind deconvolution problem}
\cite{AhmRecRom12}.
Recently, a number of \emph{bi-linear problems} combined with sparsity or low-rank
structures have been investigated in the context of compressed sensing, with first progress on recovery
guarantees being reported \cite{AhmRecRom12,WalJun13}.
The present work can be seen as a contribution to this recent development.

We conclude the introduction on a more speculative note.
The diamond norm is defined for linear maps
taking operators to operators -- i.e., for objects that can also be
viewed as order-$4$ tensors. 
We derive a characterization of those
maps for which the diamond norm offers improved recovery, and find
that it depends on the order-$4$ tensorial structure. In this sense,
the present work touches on an aspect of the notoriously difficult
\emph{tensor recovery problem} (no canonic approach or reference seems
to have emerged yet, but see Ref.\ \cite{RauSto15} for an up-to-date list of
partial results).
In fact, the ``tensorial nature'' of the diamond norm was the original
motivation for the authors to consider it in more detail as a
regularizer -- even though the eventual concrete applications we found
do not seem to have a connection to tensor recovery. It would be
interesting to explore this aspect in more detail.

\section{Preliminaries}
In this section, we introduce notation and mathematical preliminaries
used to state our main results. 
We start by clarifying some notational conventions. 
In particular, we introduce certain matrix norms and the partial trace for operators acting on a tensor product space. Moreover, we summarize a general geometric setting for the convex recovery of structured signals. 

\subsection{Vectors and operators}
\label{sec:notation}
For a positive integer $n$ we use the notation $[n] \coloneqq \{1,2, \dots, n\}$. 
Throughout this work we focus exclusively on finite dimensional and mostly complex vector spaces $\V, \W$ whose elements we mostly denote by lower case Latin letters, e.g. $x \in \V$. 
One can also set $\V = \CC^n$ and $\W = \CC^N$ throughout the paper. 
However, as low-rank matrix completion is basis independent and in order to avoid ambiguity, we will still refer to them as $\V$ and $\W$. 

We assume that each vector space $\V$ is equipped with an inner product $\langle \cdot, \cdot \rangle_\V$ -- or simply $\langle \cdot, \cdot \rangle$ for short -- that is linear in the second argument. 
Such an inner product induces the Euclidean norm
\begin{equation}
\fnorm{x} \coloneqq \sqrt{ \langle x, x \rangle_\V} \quad \forall x \in \V
\end{equation}
and moreover defines a conjugate linear bijection from $\V$ to its dual space $\V^\ast$:
to any $x \in \V$ we associate a dual vector $x\ad \in \V^\ast$ which is uniquely defined via $x\ad y = \langle x, y \rangle_\V$ $\forall y \in \V$. 
The vector space of linear maps from $\V$ to $\W$ is denoted by $\L(\V \to \W)$. Its elements being \emph{operators} are denoted by capital Latin letters (e.g. $X,Y,U,V$) and often we also refer to them as matrices. 
Indeed, for $\V = \CC^n$ and $\W = \CC^N$ an operator $X \in \L(\V \to \W)$ is given by a complex $N \times n$ matrix.
We also write $\L(\V) = \L (\V \to \V)$ for the sake of notational brevity. 
The \emph{adjoint} $X\ad \in \L ( \W \to \V)$ of an operator $X \in \L(\V \to \W)$ is determined by 
$\langle X\ad x,y \rangle_\V = \langle x, X y \rangle_\W$ for all $x \in \V$ and $y \in \W$
If $X$ is given by a matrix, then $X\ad$ is given by the complex conjugated and transposed matrix. 
We call an operator $X \in \L(V)$ \emph{self-adjoint}, or Hermitian, if $X\ad = X$. 
A self-adjoint operator $X$ is \emph{positive semidefinite}, if it has a non-negative spectrum. 
A particularly simple example for such an operator is the identity operator $\1_{\V} \in \L(\V)$. 
The set of positive semidefinite operators in $\L(\V)$ forms a convex cone which we denote by $\Pos(\V)$ \cite{Bar02}. 
This cone induces a partial ordering on $\L (\V)$ and we write $X \succeq Y$ if $X-Y \in \Pos(\V)$. 
On $\L(\V)$ we define the \emph{Frobenius} (or Hilbert-Schmidt) \emph{inner product} to be
\begin{equation}\label{eq:inner_prod}
\langle X,Y \rangle_{\L(\V)} \coloneqq \Tr( X\ad Y) \quad \forall X,Y \in \L(\V), 
\end{equation}
where $\Tr(Z)$ denotes the trace of an operator $Z \in \L(\V)$. 
By $\rank(X)$ we denote the \emph{rank}, i.e., the number of non-zero singular values of $X \in \L(\V)$. 
In addition to that, we are going to require three different matrix norms
\begin{alignat}{2}
\nnorm{X} &\coloneqq 
\Tr\bigl( \sqrt{ X\ad X}\, \bigr) &\quad & \text{(nuclear norm/trace norm)}, 
\\
\fnorm{X} &\coloneqq 
\sqrt{ \langle X,X \rangle} &&\text{(Frobenius norm)},
\\
\snorm{X} &\coloneqq 
\sup_{x \in \V} \frac{\fnorm{ X x }}{\fnorm{x}} && \text{(spectral norm)}.
\label{eq:snrom_def}
\end{alignat}
The Frobenius norm is induced by the inner product
\eqref{eq:inner_prod}, while the nuclear norm requires the operator square root: for $X \in \Pos (\V)$ we let $\sqrt{X} \in \Pos(\V)$ be the unique positive semi-definite operator obeying $\sqrt{X}^2 = X$. Note that these norms coincide with the Schatten $1$-, Schatten $2$- and Schatten $\infty$-norms, respectively. 
All Schatten norms are multiplicative under taking tensor products.
The Frobenius norm is preserved under 
any regrouping of indices, the prime example of such an operation
being the vectorization of matrices. This fact justifies our convention to 
extend the notation $\fnorm{\argdot}$ to the $2$-norms of vectors and (later on)
tensors.

A crucial role is played by the space of \emph{bipartite operators}
$\L(\W\otimes \V)$, 
by which we refer to operators that act on a tensor product space. 
For such operators we define the \emph{partial trace}
$\Tr_\W : \L(\W\otimes \V) \to \L(\V)$ as the linear extensions of the map given by
\begin{equation}\label{eq:TrW}
 \Tr_\W(B\otimes A) \coloneqq \Tr(B) \, A \, ,
\end{equation}
where $A \in \L(\V)$ and $B \in \L(\W)$, see also Figure~\ref{fig:tensor_diagram_TrW}.
When the underlying vector spaces are again written as $\V = \CC^n$ and $\W = \CC^N$, a bipartite operator $X \in \L(\W \otimes \V)$ is given by an array $X=(x_{i,j,k,l})_{j,l \in [n],\, i,k \in [N]}$.
Then $\Tr_\W(X)$ is given by an $n\times n$ matrix with components $\sum_{i=1}^N x_{i,j,i,l}$. 

Finally, we define our improved regularizer on $\L(\W\otimes \V)$ to be
\begin{equation}\label{eq:jnorm_variational}
 \jnorm{X} \coloneqq
 \max\bigl\{
 \nnorm{(\1_\W \otimes A)X(\1_\W \otimes B)}: \
	    A,B \in \L(\V), \
	\fnorm{A}=\fnorm{B} = \sqrt{\dim(\V)} \bigr\}
 \, .
\end{equation}
It is easy to see that $\jnorm{\argdot}$ is a norm and we call it the \emph{square norm}. It will become clear later on that the square norm is closely related to the diamond norm $\dnorm \argdot$ from quantum information theory \cite{Wat09}. 
More explicitly, as we will discuss in Section~\ref{sec:notation_maps}, 
$\jnorm{X} = \dim(\V)\, \dnorm{J^{-1}(X)}$, where $J$ denotes the so-called Choi-Jamio{\l}kowski isomorphism. Both, square and diamond norm can be calculated by a semidefinite program (SDP) satisfying strong duality \cite{Wat12}. 
Also, note that the pair $A = B = \1_\V$ is admissible in the maximization \eqref{eq:jnorm_variational}. Inserting it recovers $\nnorm{X}$ and establishes the bound $\nnorm{X} \leq \jnorm{X}$. This bound plays a crucial role for our results. 

\newcommand{\Mbox}[1]{
  \node (M) [Bbox] at (0,0) {#1};
  \def\d{.35}
  \def\len{.5}
  \path (M.west) ++ (0,\d\baselineskip) coordinate (Moli);
  \path (M.west) ++ (0,-\d\baselineskip) coordinate (Muli);
  \path (M.east) ++ (0,\d\baselineskip) coordinate (More);
  \path (M.east) ++ (0,-\d\baselineskip) coordinate (Mure);
  }
\begin{figure}
\centering
\leavevmode
\beginpgfgraphicnamed{fig1}%
\definecolor{niceblue}{rgb}{0.33,0.5,0.8}%
\tikzset{%
   sbox/.style = {draw, rounded corners = .5ex,%
		   minimum height = 1.5\baselineskip,%
		   minimum width = 1.8em},
   blau/.style = {top color=niceblue!12,%
		   bottom color=niceblue!90},
   Bbox/.style = {sbox, blau},
   leg/.style = {rounded corners = .5ex,thick},
   dir/.style = {gray,thick}
   }%
\begin{tikzpicture}[node distance = 1ex]%
 \node (M2){
   \begin{tikzpicture}
   \Mbox{$X$}
   \draw [leg] (More) -- ++(\len,0) --++(0,\len) node (Xo) [near end, right] {$\V$};
   \draw [leg] (Mure) -- ++(\len,0) --++(0,-\len) node (Xu) [near end, right] {$\V^\ast$};
   \draw [leg] (Moli) -- ++(-\len,0) --++(0,\len) node (Yo) [near end, left] {$\W$};
   \draw [leg] (Muli) -- ++(-\len,0) --++(0,-\len) node (Yu) [near end, left] {$\W^\ast$};
   \draw [dir, <-] (Xo.north east) ++(1ex,0) coordinate (ore) -- (ore|-Xu.south);
   \end{tikzpicture}
   };
 \node (TrY) [right = of M2]{\Large{$\overset{\Tr_\W}{\mapsto}$}};
 \node (M3) [right = of TrY]{
   \begin{tikzpicture}
   \Mbox{$X$}
   \draw [leg] (Moli) -- ++(-\len,0) |- (Muli);
   \draw [leg] (More) -- ++(\len,0) --++(0,\len) node (Xo) [near end, right] {$\V$};
   \draw [leg] (Mure) -- ++(\len,0) --++(0,-\len) node (Xu) [near end, right] {$\V^\ast$};
   \draw [dir, <-] (Xo.north east) ++(1ex,0) coordinate (ore) -- (ore|-Xu.south);
   \end{tikzpicture}
 };
\end{tikzpicture}
\endpgfgraphicnamed
  \caption{Tensor network diagrams: tensors are denoted by boxes with one line for each index. Contraction of two indices corresponds to connection of the corresponding lines. 
	     \newline
    \capstr{Left:} A bipartite operator 
      $X \in \L(\W\otimes \V)$ viewed as a tensor in $\W \otimes \V \otimes \W^\ast \otimes \V^\ast$, i.e., as a tensor with four indices. \newline
    \capstr{Right:} Its partial trace $\Tr_\W(X)$ as an operator on $\V$.}
  \label{fig:tensor_diagram_TrW}
\end{figure}

\subsection{Convex recovery of structured signals}
\label{sub:convex_recovery}
In this section, we summarize a recent but already widely used geometric proof technique for low-rank matrix recovery. 
Mainly following the exposition of Ref.\ \cite{Tro14}, we devote this section to explaining the general reconstruction idea.

In the setting of convex recovery of structured signals, one obtains a \emph{measurement vector} $y \in \CC^m$ of a \emph{signal} $x_0 \in \V$ in some vector space $\V$ via a \emph{measurement map} $\A : \V \to \CC^m$, 
\begin{equation}\label{eq:measurement}
 y = \A(x_0)+\ev \, ,
\end{equation}
where $\ev\in \CC^m$ represents additive noise in the sampling process. Throughout, we assume linear data acquisition, i.e., that $\A$ is linear. 

The goal is to efficiently obtain a good approximation to $x_0$ given $\A$ and $y$ for the case where one only has knowledge about some structure of $x_0$. Of course, it is desirable that the number $m$ of measurements $y_i$ required for a successful reconstruction is as small as possible. For several different structures of the signal $x_0$ a general approach of the following form has proven to be very successful \cite{ChaRecPar12}. One chooses a convex function $f : \V \to \overline \RR$ that reflects the structure of $x_0$ and performs the following convex minimization
\begin{equation}\label{eq:general_reconstruction}
 x^f_\eta = \argmin\{ f(x) : \ \fnorm{\A(x) - y} \leq \eta\}\, ,
\end{equation}
where $\eta \geq 0$ is some anticipated error bound.

Next, we give two definitions and a general error bound that has proven to be helpful to find such recovery guarantees. The \emph{descent cone} of a convex function is the set of non-increasing directions $u$. From the convexity of the function, it follows that the descent cone is a convex cone. The following definitions can also be found, e.g., in Ref.\ \cite{Tro14}.

\begin{definition}[Descent cone]\label{def:DC}
	The \emph{descent cone} $\DC(f,x)$ of a proper convex function $f: \V \to \overline \RR$ at the point $x\in \V$ is 
	\begin{equation}
	\DC(f,x) \coloneqq \bigcup_{\tau>0} \{u\in \V : \ f(x+\tau u) \leq f(x) \} \, .
	\end{equation}
\end{definition}

The \emph{minimum singular value} of a linear map $\A$ is the minimal value of $\fnorm{\A(x)}$ taken over all $x$ with $\fnorm x = 1$. Restricting this minimization to a cone yields the \emph{minimum conic singular value}. 

\begin{definition}[Minimum conic singular value] 
Let $\A:\V \to \CC^m$ be a linear map and $K \subset \V$ be a cone. 
The minimum singular value of $\A$ with respect to the cone $K$ is defined as 
\begin{equation}
 \lambda_{\min{}}(\A; K) \coloneqq \inf_{x \in K} \frac{\fnorm{\A(x)}}{\fnorm{x}} \, .
\end{equation}
\end{definition}

The following proposition is the basis for many recovery guarantees. 
In terms of the tangent cone of the unit ball of $f$ it has been proved in Ref.~\cite{ChaRecPar12} and was later restated in terms of the descent cone by Tropp. 

\begin{proposition}[Error bound for convex recovery, Tropp's version \cite{Tro14}]\label{prop:general_reconstruction}
 Let $x_0 \in \V$ be a signal, $\A\in \L(\V \to \CC^m)$ be a measurement map, $y = \A(x_0)+\ev$ a vector of $m$ measurements with additive error $\ev \in \CC^m$, and $x^f_\eta$ be the solution of the optimization \eqref{eq:general_reconstruction}. If $\fnorm{\ev} \leq \eta$ then
 \begin{equation}
  \fnormb{x^f_\eta - x_0}\leq \frac{2\eta}{\lambda_{\min{}} (\A;\DC(f,x_0))} \, .
 \end{equation}
\end{proposition}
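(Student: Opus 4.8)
The plan is to follow the short geometric argument underlying all bounds of this type: show that the error vector lies in the descent cone of $f$ at $x_0$, and then sandwich $\fnorm{\A(\text{error})}$ between a lower bound coming from the minimum conic singular value and an upper bound coming from feasibility.

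First I would observe that $x_0$ is itself feasible for the program \eqref{eq:general_reconstruction}. Indeed, since $y = \A(x_0) + \ev$ and $\fnorm{\ev} \le \eta$, we have $\fnorm{\A(x_0) - y} = \fnorm{\ev} \le \eta$. Because $x^f_\eta$ minimizes $f$ over the feasible set, this yields $f(x^f_\eta) \le f(x_0)$. Setting $u \coloneqq x^f_\eta - x_0$, the inequality $f(x_0 + u) \le f(x_0)$ shows, taking $\tau = 1$ in Definition~\ref{def:DC}, that $u \in \DC(f,x_0)$.

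Next I would bound $\fnorm{\A(u)}$ from above. By linearity of $\A$ and the triangle inequality, $\fnorm{\A(u)} = \fnorm{(\A(x^f_\eta) - y) - (\A(x_0) - y)} \le \fnorm{\A(x^f_\eta) - y} + \fnorm{\A(x_0) - y}$. The first summand is at most $\eta$ because $x^f_\eta$ is feasible, and the second is at most $\eta$ by the computation above; hence $\fnorm{\A(u)} \le 2\eta$.

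Finally I would combine this with the definition of the minimum conic singular value. If $u = 0$ the claimed bound holds trivially, so assume $u \ne 0$; since $u \in \DC(f,x_0)$ we then have $\fnorm{\A(u)}/\fnorm{u} \ge \lambda_{\min}(\A; \DC(f,x_0))$, and rearranging against $\fnorm{\A(u)} \le 2\eta$ gives exactly $\fnorm{x^f_\eta - x_0} \le 2\eta / \lambda_{\min}(\A;\DC(f,x_0))$. There is no genuine obstacle here — the statement is essentially a repackaging of the two definitions — the only point requiring a word of care being the degenerate case $\lambda_{\min}(\A;\DC(f,x_0)) = 0$, where the right-hand side is read as $+\infty$ and there is nothing to prove.
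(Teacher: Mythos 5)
Your argument is correct and is exactly the standard proof of this bound: the paper itself does not reproduce a proof but defers to Ref.~\cite{Tro14}, whose argument is precisely the one you give (feasibility of $x_0$, optimality placing the error vector in the descent cone, the triangle inequality yielding $\fnorm{\A(u)}\leq 2\eta$, and the definition of the minimum conic singular value). Your remarks on the degenerate cases $u=0$ and $\lambda_{\min{}}=0$, and the implicit use of only norm properties that hold over $\CC$ as well as $\RR$, are consistent with the paper's observation that the statement extends to complex vector spaces.
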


Note that the statement in Ref.\ \cite{Tro14} shows this result for real vector spaces
only. However, taking a closer look at the proof reveals that it also holds for complex vector spaces as well.
We make the following simple but important observation: 
\begin{observation}[Improved recovery]\label{obs:smaller_DC}
	The smaller the descent cone the better the recovery guarantee.
\end{observation}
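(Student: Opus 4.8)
The plan is to derive the statement directly from Proposition~\ref{prop:general_reconstruction}, using only the elementary fact that the minimum conic singular value is order-reversing with respect to inclusion of cones. So there is no real ``theorem'' to prove here — the content is entirely bookkeeping on top of Tropp's bound.

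First I would record the monotonicity of $\lambda_{\min{}}$. If $K_1 \subseteq K_2 \subseteq \V$ are cones and $\A \in \L(\V \to \CC^m)$ is linear, then
\begin{equation}
\lambda_{\min{}}(\A; K_1) \;\geq\; \lambda_{\min{}}(\A; K_2) \, ,
\end{equation}
because the infimum of the fixed nonnegative function $x \mapsto \fnorm{\A(x)}/\fnorm{x}$ over the smaller set $K_1 \setminus \{0\}$ can only be at least as large as the infimum over $K_2 \setminus \{0\}$.

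Next I would specialize to descent cones. Let $f_1, f_2 : \V \to \overline{\RR}$ be proper convex functions and $x_0 \in \V$ a signal with $\DC(f_1, x_0) \subseteq \DC(f_2, x_0)$. Combining the monotonicity above with the bound of Proposition~\ref{prop:general_reconstruction}, one gets for any admissible noise level $\eta$ (i.e.\ $\fnorm{\ev} \leq \eta$)
\begin{equation}
\fnormb{x^{f_1}_\eta - x_0} \;\leq\; \frac{2\eta}{\lambda_{\min{}}(\A; \DC(f_1, x_0))} \;\leq\; \frac{2\eta}{\lambda_{\min{}}(\A; \DC(f_2, x_0))} \, ,
\end{equation}
so the recovery guarantee obtained from $f_1$ dominates the one obtained from $f_2$. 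That is exactly the assertion of the observation.

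I do not anticipate any genuine obstacle: the claim is a one-line consequence of the shape of Tropp's bound, and the only points that deserve a word of care are the degenerate case $\lambda_{\min{}} = 0$, where the bound is vacuous on both sides, and the harmless exclusion of the origin in the conic infimum. The real work of the paper lies not in this observation but in exhibiting a concrete and useful pair of regularizers — here $f_1 = \jnorm{\argdot}$ and $f_2 = \nnorm{\argdot}$ — together with an explicit class of matrices $x_0$ for which the descent-cone inclusion $\DC(\jnorm{\argdot}, x_0) \subseteq \DC(\nnorm{\argdot}, x_0)$ actually holds; the observation merely certifies that chasing such inclusions is the right thing to do.
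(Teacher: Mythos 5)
Your argument is exactly the paper's intended justification: the observation is stated as an immediate consequence of Proposition~\ref{prop:general_reconstruction}, and the only content is the monotonicity of $\lambda_{\min{}}(\A;\argdot)$ under inclusion of cones, which you record correctly. The proposal is correct and matches the paper's (implicit) reasoning.
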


An important example is low-rank matrix recovery. 
Here, $x_0 = X_0$ is some $n\times N$ matrix with $\rank(X_0)=r$. A low rank $r$ provides the structure that allows for a reconstruction from significantly fewer measurements than the  dimension \mbox{$n\cdot N$} of the ambient space. 
For this case, choosing $f = \nnorm{\argdot}$ to be the nuclear norm has proven very successful, as the nuclear norm is the convex envelope of the matrix rank \cite{FazHinBoy01}. 
In order to give a concrete bound, consider a real matrix $X_0$ of rank $r$ and $m$ measurements $y_j = \Tr\bigl(A_j^\dagger X_0\bigr)+\ev_j$ with each $A_j$ being a real random matrix with entries drawn independently from a normalized Gaussian distribution. 
Then one can show that (see, e.g., Ref.\ \cite{Tro14})
\begin{equation}\label{eq:lambda_nnorm_Gaussian}
 \lambda_{\min{}}(\A;\DC(\nnorm{\argdot},X_0)) 
 \geq \sqrt{m-1} -\sqrt{3r(n_1+n_2 - r)} - t
\end{equation}
with probability $1-\e^{-t^2/2}$ (over the random measurements). 
As a consequence, a number of $\gtrsim 3 \rank(X_0)(n_1+n_2-\rank(X_0))$ measurements are enough for a successful reconstruction of the real-valued matrix $X_0$ with high probability. 

\section{Results}
We show that for certain structured recovery problems, replacing the regularizer $f$ in a convex recovery \eqref{eq:general_reconstruction} by an optimized regularizer $\fd$ can potentially improve performance; see also Figure~\ref{fig:geometry}. 
For the case where $f$ is the nuclear norm and $\fd$ the square norm, we show such an improvement with numerical simulations in Section~\ref{sec:application_maps}. 

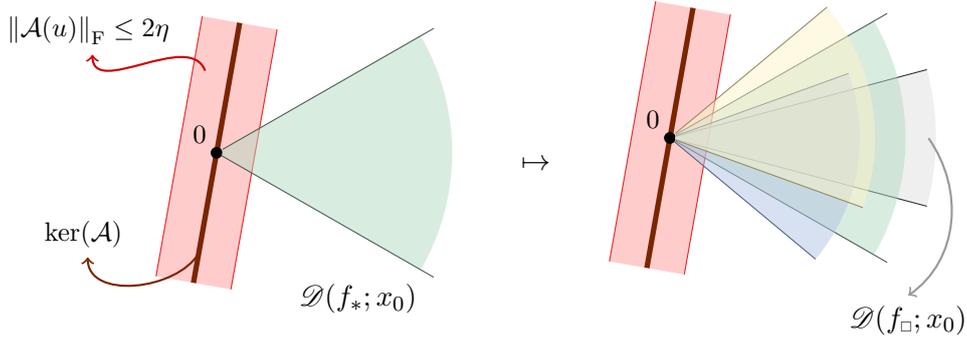
\begin{figure*}
\centering
\leavevmode
\beginpgfgraphicnamed{fig2}%
\def\len{3.5}%
\def\breite{.5}%
\def\rot{-10}%
\def\rotc{-10}%
\colorlet{Acol}{NavyBlue}%
\colorlet{Bcol}{Goldenrod}%
\colorlet{Ccol}{ForestGreen}%
\definecolor{Dcol}{gray}{0.6}%
	\def\coneAlen{2.5}%
	\def\coneAl{-40-\rotc}%
	\def\coneAr{20-\rotc}%
	\def\coneBlen{2.7}%
	\def\coneBl{-20-\rotc}%
	\def\coneBr{40-\rotc}%
	\def\coneClen{3.1}%
	\def\coneCl{-30-\rotc}%
	\def\coneCr{30-\rotc}%
	\def\coneDlen{3.5}%
	\def\coneDl{-15-\rotc}%
	\def\coneDr{15-\rotc}%
\begin{tikzpicture}[
ker/.style = {line width = 2pt, Brown},
bker/.style = {Red},
schlauch/.style = {red!40,semitransparent},
coneA/.style = {Acol!30,semitransparent},
bconeA/.style = {Acol!50!black},
coneB/.style = {Bcol!30,semitransparent},
bconeB/.style = {Bcol!50!black},
coneC/.style = {Ccol!30,semitransparent},
bconeC/.style = {Ccol!30!black},
coneD/.style = {Dcol!30,semitransparent},
bconeD/.style = {Dcol!30!black},
lab/.style = {transform shape = false}, 
]
\node (left) {
	\begin{tikzpicture}[rotate=\rot,transform shape]
	\path (0,0) coordinate (ker-li) ++ (\breite,0) coordinate (ker-li-o) -- ++(0,\len) coordinate (ker-re-o);
	\path (ker-li) ++ (-\breite,0) coordinate (ker-li-u) -- ++(0,\len) coordinate (ker-re-u);
	\fill [schlauch] (ker-li-u) rectangle (ker-re-o);
	\draw [bker] (ker-li-o) -- (ker-re-o);
	\draw [bker] (ker-li-u) -- (ker-re-u);
	\draw [ker]  (ker-li)  -- ++(0,\len) coordinate (ker-re) node [inner sep = 1pt] (origin) [midway]{};
	\filldraw (origin) circle (2pt) 
	node[above left, lab]{\small $0$};
	\path (origin) -- ++(\coneCr:\coneClen) coordinate (cCre) ++(\coneCr:.2) coordinate (bcCre);
	\path (origin) -- ++(\coneCl:\coneClen) coordinate (cCli) ++(\coneCl:.2) coordinate (bcCli);
	\fill [coneC] (origin) -- (cCli) arc (\coneCl:\coneCr:\coneClen) -- (origin);
	\draw [bconeC](origin) -- (bcCli)
	(origin) -- (bcCre);
	\path (ker-li-u) ++ (-.4,.5) 
	node [lab,anchor = east] (kerA) {\small $\ker(\A)$} ;
	\draw [->,Brown,thick] (ker-li)++(0,.4) to [out=-110, in = -80] (kerA);
	\path (origin) -- (ker-re-u) 
	node (schlauch) [midway]{};
	\path (schlauch) -- ++ (-.5,.6) node [lab,anchor = east] (eta) 
	{\small $\fnorm{\A(u)} \leq 2\eta$};
	\draw [->, red!80!black, thick]  (schlauch) to[out=120, in=-80] (eta);
	\path (cCli) ++(0,1) coordinate (anchorC)
	++(-.8,-1.2) node (DCC) [lab,anchor = north]{$\DC(f_{\ast};x_0)$};
	\end{tikzpicture}
}; 
\path (left.east) ++(.5,0) node (imp) [anchor = west] {$\mapsto$};
\path (imp.east) ++(.5,-.1) node (right) [anchor = west]{
	\begin{tikzpicture}[rotate=\rot, transform shape]
	\path (0,0) coordinate (ker-li) ++ (\breite,0) coordinate (ker-li-o) -- ++(0,\len) coordinate (ker-re-o);
	\path (ker-li) ++ (-\breite,0) coordinate (ker-li-u) -- ++(0,\len) coordinate (ker-re-u);
	\fill [schlauch] (ker-li-u) rectangle (ker-re-o);
	\draw [bker] (ker-li-o) -- (ker-re-o);
	\draw [bker] (ker-li-u) -- (ker-re-u);
	\draw [ker]  (ker-li)  -- ++(0,\len) coordinate (ker-re) node [inner sep = 1pt, anchor = center] (origin) [midway]{};
	\filldraw (origin) circle (2pt) 
	node[above left, lab]{\small $0$};
	\path (origin) -- ++(\coneDr:\coneDlen) coordinate (cDre);
	\path (origin) -- ++(\coneDl:\coneDlen) coordinate (cDli);
	\fill [coneD] (origin) -- (cDli) arc (\coneDl:\coneDr:\coneDlen) -- (origin);
	\draw [bconeD](origin) -- (cDli) 
	(origin) -- (cDre);
	\path (origin) -- ++(\coneCr:\coneClen) coordinate (cCre) ++(\coneCr:.2) coordinate (bcCre);
	\path (origin) -- ++(\coneCl:\coneClen) coordinate (cCli) ++(\coneCl:.2) coordinate (bcCli);
	\fill [coneC] (origin) -- (cCli) arc (\coneCl:\coneCr:\coneClen) -- (origin);
	\draw [bconeC](origin) -- (bcCli)
	(origin) -- (bcCre);
	\path (origin) -- ++(\coneAr:\coneAlen) coordinate (cAre);
	\path (origin) -- ++(\coneAl:\coneAlen) coordinate (cAli);
	\fill [coneA] (origin) -- (cAli) arc (\coneAl:\coneAr:\coneAlen) -- (origin);
	\draw [bconeA](origin) -- (cAli) 
	(origin) -- (cAre);
	\path (origin) -- ++(\coneBr:\coneBlen) coordinate (cBre);
	\path (origin) -- ++(\coneBl:\coneBlen) coordinate (cBli);
	\fill [coneB] (origin) -- (cBli) arc (\coneBl:\coneBr:\coneBlen) -- (origin);
	\draw [bconeB](origin) -- (cBli) 
	(origin) -- (cBre);
	\path (origin) -- (ker-re-u) 
	node (schlauch) [midway]{};
	\path (cDre) -- (cDli) coordinate [midway] (Dmw);
	\path (Dmw)++(.02,0) coordinate(anchorD) ++(.1,-2.1) 
	node (DCD) [lab,anchor = north]{$\DC(\fd;x_0)$};
	\draw [->, Dcol, thick] (anchorD) to[out=-45, in =45] (DCD.north);
	\end{tikzpicture}
}; 
\end{tikzpicture}
\endpgfgraphicnamed
\caption{Extension of the geometric arguments \cite{Tro14} used to establish Proposition~\ref{prop:general_reconstruction}. The descent cone $\DC(\fd;x_0)$ of the optimized regularizer $\fd$ is contained in an intersection of descent cones.} 
\label{fig:geometry}
\end{figure*}

\begin{proposition}[Optimizing descent cones]
  \label{prop:construction}
	Let $C \subset \V$ be a convex set and $I$ be a compact index set. 
	Moreover, let $\{f_i\}_{i \in I}$ be a family of upper semi-continuous convex functions $f_i:C \to \RR$. Define another convex function $\fd$ as the point-wise supremum $\fd(x) \coloneqq \sup_{i \in I} f_i(x)$. Then
	\begin{equation}
	\DC(\fd;x) 
	\subset \bigcap_{i \in I(x)} \DC(f_i;x)
	\end{equation}
	for any $x\in C$, where $I(x) \coloneqq \{i \in I : f_i(x) = \fd(x)\}$ is the active index set at $x$, where we use the convention 
	$\bigcap_{i \in \emptyset}\DC(f_i;x) \coloneqq \V$. 
\end{proposition}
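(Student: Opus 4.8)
The plan is to reduce the statement to a pointwise comparison between $\fd$ and the active functions $f_i$. First I would record the routine facts that make the objects well defined: after extending each $f_i$ (and hence $\fd$) to all of $\V$ by the value $+\infty$ outside $C$, the pointwise supremum $\fd$ is again convex and bounded below by each $f_i$, so the descent cones appearing in the statement are genuine convex cones in the sense of Definition~\ref{def:DC}. Compactness of $I$ together with upper semi-continuity of $i\mapsto f_i(x)$ guarantees that the supremum defining $\fd(x)$ is attained whenever it is finite, so that $I(x)\neq\emptyset$ in the nondegenerate case; if $\fd(x)=+\infty$ then $I(x)=\emptyset$ and both $\DC(\fd;x)$ and the right-hand side equal $\V$ (using the convention $\bigcap_{i\in\emptyset}\DC(f_i;x)\coloneqq\V$), so from now on I may assume $\fd(x)<\infty$.

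The heart of the argument is a single observation that uses no topology on $I$. Let $u\in\DC(\fd;x)$; by Definition~\ref{def:DC} there is some $\tau>0$ with $\fd(x+\tau u)\le\fd(x)$. Fix any active index $i\in I(x)$, so that $f_i(x)=\fd(x)$. Then, using only that each $f_j$ is a pointwise lower bound for the supremum,
\[
 f_i(x+\tau u)\ \le\ \sup_{j\in I} f_j(x+\tau u)\ =\ \fd(x+\tau u)\ \le\ \fd(x)\ =\ f_i(x),
\]
so the \emph{same} $\tau$ witnesses $u\in\DC(f_i;x)$. Since $i\in I(x)$ was arbitrary, $u\in\bigcap_{i\in I(x)}\DC(f_i;x)$, which is the claimed inclusion. (Note that $\fd(x)<\infty$ together with $\fd(x+\tau u)\le\fd(x)$ already forces $x+\tau u\in C$, so every term above is a genuine real number.)

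The step that deserves a word of care — and which I would flag as the (admittedly mild) main subtlety — is the observation that one single $\tau$, obtained once and for all from $u\in\DC(\fd;x)$, simultaneously certifies descent for \emph{every} active $f_i$: this works precisely because the displayed chain of inequalities never refers to $i$ except through the identity $f_i(x)=\fd(x)$, so no uniform control over $I$ is needed. It is worth stating explicitly in the write-up that the compactness and upper semi-continuity hypotheses play no role in the inclusion itself; they serve only to guarantee $I(x)\neq\emptyset$, so that the right-hand side is not vacuously all of $\V$, and, in the intended application where $\fd=\jnorm{\argdot}$, that the maximum in~\eqref{eq:jnorm_variational}, taken over $A,B$ on the Frobenius sphere $\{A\in\L(\V):\fnorm{A}=\sqrt{\dim\V}\}$, is indeed attained.
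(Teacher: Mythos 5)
Your proof is correct, and it takes a genuinely more elementary route than the paper. The paper's proof rewrites $\DC(\fd;x)$ as a union over $\tau$ of intersections of sublevel sets, passes to an intersection of the cones they generate over \emph{all} of $I$, and then invokes upper semi-continuity of the $f_i$ to argue that every non-active index contributes the whole space and may be discarded. You instead fix $u\in\DC(\fd;x)$ with its witness $\tau$, and observe that the chain $f_i(x+\tau u)\le\fd(x+\tau u)\le\fd(x)=f_i(x)$ certifies descent of every active $f_i$ with the \emph{same} $\tau$ -- a pointwise argument that uses only $f_i\le\fd$ and the defining identity of $I(x)$. This buys a cleaner proof and makes transparent that the compactness and semi-continuity hypotheses are irrelevant to the inclusion itself (a fact the paper's write-up obscures, since dropping indices from an intersection only enlarges it, so even there the usc step is not logically needed for the stated one-sided inclusion). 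Two minor points: your side remark that "compactness of $I$ together with upper semi-continuity of $i\mapsto f_i(x)$" forces attainment of the supremum silently replaces the stated hypothesis (each $f_i$ is usc \emph{in $x$ on $C$}) by upper semi-continuity \emph{in the index $i$}, which is a different assumption; since attainment is not needed for the inclusion (the empty-intersection convention covers $I(x)=\emptyset$), you should either drop that remark or state the extra hypothesis explicitly. The extension of the $f_i$ by $+\infty$ off $C$, and your observation that $\fd(x+\tau u)\le\fd(x)<\infty$ forces $x+\tau u\in C$, correctly handle the domain issue.
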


\begin{proof}[Proof of Proposition \ref{prop:construction}]
By $\cone(S) \coloneqq \bigcup_{\tau>0} \{ \tau s: s\in S\}$ we will denote the cone generated by a set $S$.
According to Definition \ref{def:DC} of the descent cone, we have 
\begin{equation}
  \DC(\fd; x) 
  = 
  \bigcup_{\tau>0} \{ u \mid \sup_{i\in I} f_i(x+\tau u) \leq \fd(x)\} \, .
\end{equation}
Writing the supremum as an intersection yields
\begin{align}
 \DC(\fd; x) 
 &= 
 \bigcup_{\tau>0} \bigcap_{i\in I} \{\tau u \mid f_i(x+u) \leq \fd(x)\}
 \\
  &\subset
  \bigcap_{i\in I} \cone \{u \mid f_i(x+u) \leq \fd(x)\}\, .
  \label{eq:cap_cone_set}
\end{align}
By $B_\epsilon \subset \V$ we denote the ball around the origin of radius $\epsilon$. Now, consider a non-active index $i \in I\setminus I(x)$. As $f_i$ is upper semi-continuous, there exists $\epsilon>0$ such that for all $u \in B_\epsilon$ we have $f_i(x+u) < \fd(x)$. Hence, $B_\epsilon \subset \{u \mid f_i(x+u) \leq \fd(x)\}$, so that the corresponding cone in Eq.~\eqref{eq:cap_cone_set} is the entire space. Therefore, every non-active index $i$ can be omitted in the intersection, 
\begin{equation}
 \DC(\fd; x) 
 \subset 
 \bigcap_{i\in I(x)} \cone \{u \mid f_i(x+u) \leq f_i(x)\}\, .
\end{equation}
The definition of the descent cone of $f_{i}$ finishes the proof.
\end{proof}

The square norm \eqref{eq:jnorm_variational} is a particular instance of such a supremum over nuclear norms. Thanks to the following nuclear norm bound \eqref{eq:nnorm_leq_jnorm}, Proposition~\ref{prop:construction} can lead to an improved recovery for any bipartite operator $X \in \L(\W\otimes \V)$ satisfying 
\begin{equation}\label{eq:equality}
  \nnorm{X} = \jnorm{X} \, . 
\end{equation}
Here, we will only need the lower bound on the square norm but, in
order to fully relate it to the usual matrix norms, we also provide
two upper bounds.

\begin{proposition}[Bounds to the square norm]\label{prop:bounds_on_diamond_norm}
	For any $X \in \L(\W\otimes \V)$ 
	\begin{align}
	\nnorm{X} &\leq \jnorm{X} \, ,  \label{eq:nnorm_leq_jnorm}  
	\\
	\jnorm{X} & \leq \dim (\V)\, \nnorm{X} \, , \label{eq:jnorm_leq_nnorm}
	\\
	\jnorm{X} & \leq \dim(\W\otimes \V) \, \snorm{X}  \, . \label{eq:jnorm_leq_snorm}
	\end{align} 
\end{proposition}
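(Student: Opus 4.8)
The plan is to establish the three inequalities in order, using only multiplicativity of Schatten norms under tensor products (already recorded in the preliminaries), the elementary estimate $\snorm{\argdot}\le\fnorm{\argdot}$, and two instances of the non-commutative H\"older inequality for triple products, namely $\nnorm{PXQ}\le\snorm{P}\nnorm{X}\snorm{Q}$ and $\nnorm{PXQ}\le\fnorm{P}\snorm{X}\fnorm{Q}$, both valid for $P,X,Q\in\L(\W\otimes\V)$.

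For \eqref{eq:nnorm_leq_jnorm} I would simply note, as was already remarked below \eqref{eq:jnorm_variational}, that the pair $A=B=\1_\V$ is feasible in the maximization \eqref{eq:jnorm_variational} (because $\fnorm{\1_\V}=\sqrt{\dim(\V)}$) and that substituting it makes the objective equal to $\nnorm{(\1_\W\otimes\1_\V)X(\1_\W\otimes\1_\V)}=\nnorm{X}$; hence the maximum is at least $\nnorm{X}$. For \eqref{eq:jnorm_leq_nnorm} I would bound the objective for an arbitrary feasible pair $A,B$ by the first H\"older inequality,
\begin{equation*}
\nnorm{(\1_\W\otimes A)X(\1_\W\otimes B)}\le\snorm{\1_\W\otimes A}\,\nnorm{X}\,\snorm{\1_\W\otimes B}=\snorm{A}\snorm{B}\,\nnorm{X}\le\fnorm{A}\fnorm{B}\,\nnorm{X}=\dim(\V)\,\nnorm{X},
\end{equation*}
where the first equality uses $\snorm{\1_\W\otimes A}=\snorm{\1_\W}\snorm{A}=\snorm{A}$, the following inequality uses $\snorm{\argdot}\le\fnorm{\argdot}$, and the last step the normalization $\fnorm{A}=\fnorm{B}=\sqrt{\dim(\V)}$; maximizing over feasible pairs gives the bound.

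The only genuinely non-routine point is \eqref{eq:jnorm_leq_snorm}. The naive move is to keep $\nnorm{X}$ -- e.g.\ $\nnorm{PXQ}\le\fnorm{P}\fnorm{X}\snorm{Q}$ followed by the rank estimate $\fnorm{X}\le\sqrt{\dim(\W\otimes\V)}\snorm{X}$ -- but that route leaves a superfluous factor of order $\sqrt{\dim(\V)}$. The remedy is to put the \emph{spectral} norm on $X$ and Frobenius norms on the tensor factors (the second H\"older inequality above), because the factors $\1_\W\otimes A$ and $\1_\W\otimes B$ are exactly where the favorable normalization of $A$ and $B$ lives:
\begin{equation*}
\nnorm{(\1_\W\otimes A)X(\1_\W\otimes B)}\le\fnorm{\1_\W\otimes A}\,\snorm{X}\,\fnorm{\1_\W\otimes B}=\dim(\W)\,\fnorm{A}\fnorm{B}\,\snorm{X}=\dim(\W\otimes\V)\,\snorm{X},
\end{equation*}
using $\fnorm{\1_\W\otimes A}=\fnorm{\1_\W}\fnorm{A}=\sqrt{\dim(\W)}\,\fnorm{A}$, the normalization $\fnorm{A}=\fnorm{B}=\sqrt{\dim(\V)}$, and $\dim(\W)\dim(\V)=\dim(\W\otimes\V)$; maximizing over feasible pairs yields \eqref{eq:jnorm_leq_snorm}. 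Should one wish to avoid invoking H\"older, the inequality $\nnorm{PXQ}\le\fnorm{P}\snorm{X}\fnorm{Q}$ is itself short to prove: from the duality $\nnorm{Y}=\max_{\snorm{U}\le1}\operatorname{Re}\langle U,Y\rangle$ and the identity $\langle U,PXQ\rangle=\langle P\ad U Q\ad,X\rangle$ one obtains $\nnorm{PXQ}\le\snorm{X}\max_{\snorm{U}\le1}\nnorm{P\ad U Q\ad}$ (using $|\langle M,X\rangle|\le\nnorm{M}\snorm{X}$), and then $\nnorm{P\ad U Q\ad}\le\fnorm{P\ad U}\fnorm{Q\ad}\le\fnorm{P}\fnorm{Q}$ because $\fnorm{P\ad U}^2=\Tr(PP\ad UU\ad)\le\snorm{UU\ad}\fnorm{P}^2\le\fnorm{P}^2$ when $\snorm{U}\le1$.

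The whole argument is short, and the single place where one can take a wrong turn is the last bound. Viewing $\jnorm{X}$ as the nuclear norm of a triple product nudges one toward retaining $\nnorm{X}$, but the dimensional factor $\dim(\W)$ in \eqref{eq:jnorm_leq_snorm} is produced solely by $\fnorm{\1_\W}^2=\dim(\W)$, a quantity the spectral norm cannot see since $\snorm{\1_\W}=1$. The one real idea is therefore to trade $\nnorm{X}$ for $\snorm{X}$ and let the Frobenius norms of the Frobenius-normalized ancilla factors $\1_\W\otimes A$ and $\1_\W\otimes B$ absorb the dimensional cost.
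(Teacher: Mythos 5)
Your proof is correct. For the first two bounds \eqref{eq:nnorm_leq_jnorm} and \eqref{eq:jnorm_leq_nnorm} you follow exactly the paper's route: feasibility of $A=B=\1_\V$ in \eqref{eq:jnorm_variational}, and the H\"older estimate $\nnorm{(\1_\W\otimes A)X(\1_\W\otimes B)}\le\snorm{\1_\W\otimes A}\nnorm{X}\snorm{\1_\W\otimes B}\le\fnorm{A}\fnorm{B}\nnorm{X}$. For the third bound \eqref{eq:jnorm_leq_snorm}, however, you take a genuinely different path. The paper switches to the dual SDP \eqref{eq:Watrous_SDP_dual}: it uses Lemma~\ref{lem:block_matrix_psd1} to certify that $Y=Z=\snorm{X}\,\1_{\W\otimes\V}$ is dual feasible, evaluates the objective there to get $\dim(\W)\dim(\V)\snorm{X}$ via $\snormb{\Tr_\W(\1_{\W\otimes\V})}=\dim(\W)$, and concludes by the fact that the dual is a minimization. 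You instead stay entirely within the variational definition \eqref{eq:jnorm_variational} and apply the three-factor H\"older inequality with exponents $(2,\infty,2)$,
\begin{equation*}
\nnorm{(\1_\W\otimes A)X(\1_\W\otimes B)}\le\fnorm{\1_\W\otimes A}\,\snorm{X}\,\fnorm{\1_\W\otimes B}=\dim(\W\otimes\V)\,\snorm{X},
\end{equation*}
using multiplicativity of the Frobenius norm under tensor products and the normalization $\fnorm{A}=\fnorm{B}=\sqrt{\dim(\V)}$; your optional self-contained derivation of $\nnorm{PXQ}\le\fnorm{P}\snorm{X}\fnorm{Q}$ via nuclear--spectral duality and Cauchy--Schwarz is also sound. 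What each approach buys: the paper's argument showcases the dual SDP and the block-matrix lemma it needs anyway for Theorem~\ref{thm:extremality}, whereas yours is more elementary and uniform -- all three bounds come from the single primal formula \eqref{eq:jnorm_variational} with no semidefinite programming, and your closing observation correctly isolates why the spectral norm must sit on $X$ (the factor $\dim(\W)$ can only be produced by $\fnorm{\1_\W}^2$, which the spectral norm cannot see).
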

A proof of this proposition is given in Section~\ref{sec:SDPs}.

Our second main result fully characterizes the set of operators satisfying Eq.~\eqref{eq:equality}, i.e., saturating inequality~\eqref{eq:nnorm_leq_jnorm}. 
As we will see below, for such operators, recovery guarantees for square norm reconstructions can be inherited from those of the nuclear norm. 

\begin{theorem}[Extremal operators] \label{thm:extremality}
	Let $X \in \L(\W\otimes \V)$ be a bipartite operator. 
	Then Eq.~\eqref{eq:equality} holds if and only if 
	\begin{equation}\label{eq:structure}
	\Tr_{\W} \bigl( \sqrt{ XX\ad} \bigr)
	= \Tr_{\W} \bigl( \sqrt{ X\ad X} \bigr)\,=
	\frac{\nnorm{X}}{\dim(\V)} \,  \1_{\V}.
	\end{equation}
\end{theorem}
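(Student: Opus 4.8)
The plan is to read the identity~\eqref{eq:equality} through the variational formula~\eqref{eq:jnorm_variational}. Since $\nnorm X\le\jnorm X$ always holds by Proposition~\ref{prop:bounds_on_diamond_norm}, saturating it is the same as saying that the admissible pair $(A,B)=(\1_\V,\1_\V)$ is optimal in~\eqref{eq:jnorm_variational}. Fix once and for all a singular value decomposition $X=\sum_k\sigma_k\,\psi_k\phi_k\ad$ with $\sigma_k>0$ and $\{\psi_k\}$, $\{\phi_k\}$ orthonormal systems in $\W\otimes\V$, so that $\sqrt{XX\ad}=\sum_k\sigma_k\,\psi_k\psi_k\ad$, $\sqrt{X\ad X}=\sum_k\sigma_k\,\phi_k\phi_k\ad$, $\nnorm X=\sum_k\sigma_k$, and the ``sign'' operator $U_X\coloneqq\sum_k\psi_k\phi_k\ad$ is a partial isometry with $\snorm{U_X}\le1$ and $XU_X\ad=\sqrt{XX\ad}$. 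I will use two elementary facts repeatedly: the partial-trace identity $\Tr\bigl((\1_\W\otimes A)Y\bigr)=\Tr\bigl(A\,\Tr_\W Y\bigr)$ for $A\in\L(\V)$, $Y\in\L(\W\otimes\V)$, and the nuclear/spectral duality $\nnorm M=\sup_{\snorm U\le1}\bigl|\Tr(U\ad M)\bigr|$, which for the polar decomposition $M=W|M|$ gives the clean form $\nnorm M=\Tr(W\ad M)$ (a nonnegative real).

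For the ``if'' direction, assume~\eqref{eq:structure} and pick any admissible pair $A,B$ (so $\fnorm A^2=\fnorm B^2=\dim(\V)$). Put $M\coloneqq(\1_\W\otimes A)X(\1_\W\otimes B)$ with polar decomposition $M=W|M|$. Cyclicity of the trace rewrites $\nnorm M=\Tr(W\ad M)=\Tr(N\ad X)$ with $N\coloneqq(\1_\W\otimes A\ad)\,W\,(\1_\W\otimes B\ad)$. Expanding $\Tr(N\ad X)$ along the SVD of $X$ and then using $\snorm W\le1$ together with the Cauchy--Schwarz inequality bounds $\nnorm M$ by
\[
 \nnorm M\ \le\ \Bigl(\sum_k\sigma_k\,\fnorm{(\1_\W\otimes A)\psi_k}^2\Bigr)^{1/2}\Bigl(\sum_k\sigma_k\,\fnorm{(\1_\W\otimes B\ad)\phi_k}^2\Bigr)^{1/2}.
\]
Each factor collapses via the partial-trace identity: the first sum equals $\Tr\bigl(A\ad A\,\Tr_\W\sqrt{XX\ad}\bigr)$ and the second equals $\Tr\bigl(BB\ad\,\Tr_\W\sqrt{X\ad X}\bigr)$, and by the hypothesis~\eqref{eq:structure} these are $\tfrac{\nnorm X}{\dim(\V)}\fnorm A^2=\nnorm X$ and $\tfrac{\nnorm X}{\dim(\V)}\fnorm B^2=\nnorm X$, respectively. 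Hence $\nnorm M\le\nnorm X$ for every admissible pair, so $\jnorm X\le\nnorm X$ and equality holds.

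For the ``only if'' direction --- the genuinely delicate half --- assume $\nnorm X=\jnorm X$; the case $X=0$ is trivial, so assume $\nnorm X>0$. The point is that the bound above is useless read backwards, and one must instead lower-bound $\jnorm X$ sharply enough to force~\eqref{eq:structure}. My proposed test takes $B=\1_\V$ and pairs $(\1_\W\otimes A)X$ against the sign $U_X$: using $\snorm{U_X}\le1$, cyclicity, and $XU_X\ad=\sqrt{XX\ad}$,
\[
 \nnorm{(\1_\W\otimes A)X}\ \ge\ \bigl|\Tr\bigl(U_X\ad(\1_\W\otimes A)X\bigr)\bigr|\ =\ \bigl|\Tr\bigl((\1_\W\otimes A)\sqrt{XX\ad}\bigr)\bigr|\ =\ \bigl|\Tr(A\,Z)\bigr|,\qquad Z\coloneqq\Tr_\W\sqrt{XX\ad}\in\Pos(\V).
\]
Choosing $A=\tfrac{\sqrt{\dim(\V)}}{\fnorm Z}\,Z$ (admissible, and $\fnorm Z>0$ since $\Tr Z=\nnorm X>0$) gives $|\Tr(A Z)|=\sqrt{\dim(\V)}\,\fnorm Z$, hence $\jnorm X\ge\sqrt{\dim(\V)}\,\fnorm Z$. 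On the other hand, Cauchy--Schwarz against $\1_\V$ yields $\nnorm X=\Tr Z=\langle\1_\V,Z\rangle\le\fnorm{\1_\V}\,\fnorm Z=\sqrt{\dim(\V)}\,\fnorm Z$, with equality only when $Z\propto\1_\V$. Chaining these with $\nnorm X=\jnorm X$ forces each inequality to be an equality, so $Z\propto\1_\V$; since $\Tr Z=\nnorm X$ this pins it down to $Z=\tfrac{\nnorm X}{\dim(\V)}\1_\V$, the first equation of~\eqref{eq:structure}. The second equation follows by running the same argument on $X\ad$: one has $\jnorm{X\ad}=\jnorm X$ (substitute $M\mapsto M\ad$ and relabel $(A,B)\mapsto(B\ad,A\ad)$ in~\eqref{eq:jnorm_variational}), $\nnorm{X\ad}=\nnorm X$, and $\sqrt{X\ad(X\ad)\ad}=\sqrt{X\ad X}$, so the first equation of~\eqref{eq:structure} applied to $X\ad$ reads exactly $\Tr_\W\sqrt{X\ad X}=\tfrac{\nnorm X}{\dim(\V)}\1_\V$.

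The only real obstacle I anticipate is the ``only if'' direction, and within it the correct choice of test operator: the upper-bound computation of the first part is too lossy to invert, and the crucial idea is that pairing against the sign $U_X$ reduces the question to maximising $|\Tr(AZ)|$ over the Frobenius sphere, whose optimum is precisely $\sqrt{\dim(\V)}\,\fnorm{\Tr_\W\sqrt{XX\ad}}$, after which the quantitative Cauchy--Schwarz gap between $\fnorm Z$ and $\Tr Z$ and its ``flat'' equality case finish the job. The remaining ingredients --- the partial-trace identity, nuclear/spectral duality, the polar decomposition, and the degree-one homogeneity of $\nnorm{\argdot}$ used in the rescaling step --- are routine.
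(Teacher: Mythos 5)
Your proof is correct, and it takes a genuinely different route from the paper's. The paper works through Watrous' SDP for the square norm in standard form: it constructs an explicit primal optimal feasible point built from the sign matrix $S_X$, invokes strong duality and complementary slackness to pin down the block structure of any dual optimal point, and reads off Eq.~\eqref{eq:structure} from the resulting constraints; the converse is obtained by plugging a specific dual feasible point into the objective and using weak duality. You instead work directly with the variational definition \eqref{eq:jnorm_variational}: for the ``if'' direction, polar decomposition, trace duality and a two-stage Cauchy--Schwarz argument collapse $\nnorm{(\1_\W\otimes A)X(\1_\W\otimes B)}$ onto $\Tr\bigl(A\ad A\,\Tr_\W\sqrt{XX\ad}\bigr)^{1/2}\Tr\bigl(BB\ad\,\Tr_\W\sqrt{X\ad X}\bigr)^{1/2}$, which \eqref{eq:structure} evaluates to $\nnorm{X}$; for the ``only if'' direction, testing against $B=\1_\V$ and $A\propto\Tr_\W\sqrt{XX\ad}$ yields the clean intermediate bound $\jnorm{X}\ge\sqrt{\dim(\V)}\,\fnormb{\Tr_\W\sqrt{XX\ad}}$, and the equality case of $\Tr Z\le\sqrt{\dim(\V)}\,\fnorm{Z}$ forces $Z\propto\1_\V$; the second identity in \eqref{eq:structure} follows by the symmetry $\jnorm{X\ad}=\jnorm{X}$. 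I checked the individual steps (the partial-trace identity, $XU_X\ad=\sqrt{XX\ad}$, the admissibility and normalizations of your test operators, and the transfer to $X\ad$) and they all go through. What each approach buys: yours is elementary and self-contained, needing neither the SDP standard form nor strong duality, and the ``if'' half independently re-derives $\jnorm{X}\le\nnorm{X}$ under \eqref{eq:structure}; the paper's approach is heavier but delivers more, namely the explicit primal and dual optimal feasible points of Watrous' SDP (items (ii) and (iii) of Theorem~\ref{thm:extremal}), which are of independent use for the reconstruction algorithms. One cosmetic remark: your $U_X$ is exactly $S_X\ad$ in the notation of Definition~\ref{def:sgn}, so the same object underlies both proofs.
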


For now, we content ourselves with sketching the proof idea and present the full proof later. 

\begin{proof}[Proof idea]
For the case where Eq.~\eqref{eq:equality} is satisfied, we single out a primal feasible optimal point. Exact knowledge of this point together with complementary slackness then allows us to severely restrict the range of possible dual optimal points. Relation \eqref{eq:structure} is an immediate consequence of these restrictions.

To show the converse, we insert a particular feasible point into the dual SDP of the square norm.
Eq.~\eqref{eq:structure} enables us to explicitly evaluate the objective function at this point. Doing so yields $\nnorm{X}$ which, in turn, implies $\jnorm{X} \leq \nnorm{X}$ by weak duality.
Combining this implication with the converse bound from Proposition~\ref{prop:bounds_on_diamond_norm}
establishes $\snorm{X}=\jnorm{X}$, as claimed.
\end{proof}

As an implication of Theorem~\ref{thm:extremality} and Proposition~\ref{prop:construction} we obtain the following. 

\begin{corollary}[Intersection of descent cones]\label{cor:subset}
Let $X \in \L(\W\otimes \V)$ satisfy Eq.~\eqref{eq:equality}. Then
  \begin{equation}
    \DC(\jnorm \argdot; X) 
      \subset 
    \bigcap_{(A,B)\in I(X)} \DC(\nnorm{(\1_\W \otimes A)(\argdot)(\1_\W \otimes B)}; X) \, ,
  \end{equation}
  where $I(X)$ contains all $A,B \in \L(V)$ with 
  $\fnorm{A}=\fnorm{B}=\sqrt{\dim(\V)}$ 
  and being active in the sense that 
$\jnorm{X} = \nnorm{(\1_\W \otimes A) X(\1_\W \otimes B)}$. 
\end{corollary}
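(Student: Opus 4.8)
The plan is to realize the square norm as a pointwise supremum of nuclear-norm functionals and then apply Proposition~\ref{prop:construction} directly. Concretely, put $I \coloneqq \{(A,B)\in \L(\V)\times\L(\V) : \fnorm{A}=\fnorm{B}=\sqrt{\dim(\V)}\}$ and, for $(A,B)\in I$, define $f_{(A,B)}\colon \L(\W\otimes\V)\to\RR$ by $f_{(A,B)}(Y)\coloneqq\nnorm{(\1_\W\otimes A)Y(\1_\W\otimes B)}$. The variational formula \eqref{eq:jnorm_variational} says precisely that $\jnorm{Y}=\sup_{(A,B)\in I}f_{(A,B)}(Y)$ for every $Y\in\L(\W\otimes\V)$, so the square norm is of exactly the form treated in Proposition~\ref{prop:construction}.

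First I would verify the hypotheses of that proposition. The index set $I$ is a product of two Frobenius spheres in the finite-dimensional space $\L(\V)$, hence compact. Each $f_{(A,B)}$ is the composition of the linear map $Y\mapsto(\1_\W\otimes A)Y(\1_\W\otimes B)$ with the nuclear norm, hence convex, and it is continuous on all of $\L(\W\otimes\V)$, in particular upper semi-continuous. Taking $C=\L(\W\otimes\V)$, Proposition~\ref{prop:construction} then yields
\[
\DC(\jnorm{\argdot};X)\ \subset\ \bigcap_{(A,B)\in I(X)}\DC\bigl(f_{(A,B)};X\bigr),
\qquad I(X)=\{(A,B)\in I : f_{(A,B)}(X)=\jnorm{X}\}.
\]
It remains to match this with the normalization used in the statement. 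Since $f_{(sA,sB)}=s^2 f_{(A,B)}$ for $s>0$ and scaling a convex function by a positive constant leaves its descent cone unchanged, $\DC(f_{(A,B)};X)$ depends only on the direction of the pair $(A,B)$; rescaling every pair in $I$ by $1/\sqrt{\dim(\V)}$ therefore identifies the intersection above with the one appearing in the corollary, the active set $I(X)$ transforming accordingly. This gives the claimed inclusion.

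Finally, to explain why the hypothesis $\nnorm{X}=\jnorm{X}$ is imposed — this is where Theorem~\ref{thm:extremality} enters — I would add the remark that $A=B=\1_\V$ is admissible in \eqref{eq:jnorm_variational}, because $\fnorm{\1_\V}=\sqrt{\dim(\V)}$, and that $f_{(\1_\V,\1_\V)}(X)=\nnorm{X}$. Thus Eq.~\eqref{eq:equality} is exactly the assertion that the identity pair is active, i.e.\ lies in $I(X)$; hence the right-hand side above is in particular contained in $\DC(\nnorm{\argdot};X)$, which by Observation~\ref{obs:smaller_DC} is the point of the construction (a descent cone no larger than the nuclear norm's).

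There is no substantial obstacle: the corollary is a specialization of Proposition~\ref{prop:construction}, and the inclusion itself does not even use Eq.~\eqref{eq:equality}. The only things needing a line of justification are (i) compactness of the spherical index set together with continuity of each $f_{(A,B)}$, so that Proposition~\ref{prop:construction} applies as stated, and (ii) the harmless bookkeeping reconciling the two Frobenius normalizations — $\sqrt{\dim(\V)}$ in definition \eqref{eq:jnorm_variational} versus $1$ in the corollary — via scale-invariance of descent cones.
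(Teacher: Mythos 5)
Your proof is correct and follows exactly the paper's intended route: the corollary is presented there as a direct implication of Proposition~\ref{prop:construction} applied to the square norm written as a pointwise supremum of nuclear-norm functionals over the compact index set of Frobenius spheres, with continuity of each $f_{(A,B)}$ supplying upper semi-continuity. Your two side remarks --- that the inclusion itself does not actually use Eq.~\eqref{eq:equality} (the hypothesis only guarantees that the identity pair is active, so that the intersection sits inside $\DC(\nnorm{\argdot};X)$), and that the normalization $\fnorm{A}=\fnorm{B}=1$ in the corollary must be reconciled with the $\sqrt{\dim(\V)}$ of Eq.~\eqref{eq:jnorm_variational} via scale-invariance of descent cones --- are both accurate.
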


Setting $A = B = \1_\V$ gives an element of $I(X)$  and yields the inclusion
\begin{equation}\label{eq:inclusionDC}
 \DC(\jnorm \argdot; X) \subset \DC(\nnorm{\argdot};X)
\end{equation} for any $X$ satisfying Eq.~\eqref{eq:equality}. 
As an immediate application, we will see in the next section that the square norm inherits recovery guarantees from the nuclear norm for signals $X$ satisfying $\jnorm{X} = \nnorm{X}$. 
In the case where $\jnorm{X} \neq \nnorm{X}$, the inclusion of descent cones \eqref{eq:inclusionDC} does, in general, not hold. Indeed, we have observed in numerical experiments that the usual nuclear norm reconstruction performs better in that case. 

\section{Applications to low-rank matrix recovery}
\label{sec:application_low_rank}
In this section we focus on low-rank matrix recovery of Hermitian bipartite operators $X_0 \in \L (\W \otimes \V)$ satisfying the condition~\eqref{eq:equality}
that are either real-valued or complex-valued.
As already mentioned in Section~\ref{sub:convex_recovery}, there the task is to efficiently recover an unknown matrix $X_0$ of low-rank at most $r$ 
from $m$ noisy linear measurements of the form
\begin{equation}
y_i = \Tr \left(A_i X_0 \right) + \epsilon_i, \quad i = 1,\ldots, m \, , 
\end{equation}
where $A_1,\ldots,A_m \in \L (\W \otimes \V)$ are the measurement matrices and $\epsilon_1,\ldots, \epsilon_m \in \RR^m$ denotes additive noise in the sampling process. 
By introducing a measurement map $\A: \L (\W \otimes \V) \to \RR^m$ of the form $\A (X_0) = \sum_{i=1}^m \Tr \left(A_i X_0 \right)e_i$, where $e_1,\ldots,e_m$ denotes the standard basis in $\RR^m$, the entire measurement process can be summarized as
\begin{equation} \label{eq:measurement_operator}
y = \A (X_0) + \epsilon \, .
\end{equation}
Here, $y = \left( y_1,\ldots,y_m \right)^T\in \RR^m$ contains all measurement outcomes and $\epsilon \in \RR^m$ denotes the noise vector. 
If a bound $\fnorm{\epsilon} \leq \eta$ on the noise is available,
many measurement scenarios have been identified
where estimating $X_0$ by
\begin{equation}\label{eq:Mast}
  X^\ast_\eta \coloneqq \argmin\{ \nnorm{X}: \fnorm{\A(X)-y} \leq \eta \} 
\end{equation}
stably recovers $X_0$.
Note that by employing the well-known SDP formulation of the nuclear norm \cite{Boy04} this optimization can be recast as
\begin{equation}
\arraycolsep=1.5pt
\begin{array}{r l l}
X^\ast_\eta
=&\underset{X,Y,Z}{\argmin} & \frac{1}{2}\bigl( \Tr(Y) + \Tr(Z) \bigr)
\\[.3em]
&\st &
	\begin{pmatrix}
	Y & -X \\ -X\ad & Z
	\end{pmatrix}
\succeq 0 \, ,
\\[.9em]
&&  Y,Z \in \Pos(\W\otimes \V) \, ,
\\[.2em]
&& \fnorm{\A(X)-y} \leq \eta \, .
\end{array}
\label{eq:nnorm_reconstruction}
\end{equation}
What is more, several of these recovery guarantees can be established using the geometric proof techniques presented in Section~\ref{sub:convex_recovery}.
For results established that way, combining Observation~\ref{obs:smaller_DC} with Corollary~\ref{cor:subset} allows us to draw the following conclusion.

\begin{implication}[Inheriting recovery guarantees]\label{imp:Inheriting}
	For bipartite operators $X_0 \in \L(\W \otimes \V)$ that satisfy $\nnorm{X_0} = \jnorm{X_0}$, any recovery guarantee for nuclear norm minimization,
	which is based on the nuclear norm's descent cone, also holds for square norm minimization.
\end{implication}
 
This implication 
indicates that replacing nuclear norm regularization \eqref{eq:Mast} by
\begin{equation}\label{eq:Mdiamond}
X^\mysquare_\eta \coloneqq \argmin\{ \jnorm{X}: \fnorm{\A(X)-y} \leq \eta \} 
\end{equation}
results in an estimation procedure that performs at least as well whenever $\jnorm{X_0} = \nnorm{X_0}$. 
In fact, Observation~\ref{obs:smaller_DC} suggests that it may actually outperform traditional recovery procedures. 
Also, the SDP formulation for the square norm \cite{Wat12} allows one to recast the optimization 
\eqref{eq:Mdiamond} as
\begin{equation}
\label{eq:jnorm_reconstruction}
\arraycolsep=1.5pt
\begin{array}{r l l}
X^\mysquare_\eta
=&\underset{X,Y,Z}{\argmin} &\hspace{-2ex} 
  \frac{\dim(\V)}{2}\bigl( \snorm{\Tr_\W(Y)} + \snorm{\Tr_\W(Z)} \bigr)
\\[.3em]
&\st &
	\begin{pmatrix}
	Y & -X \\ -X\ad & Z
	\end{pmatrix}
\succeq 0 \, ,
\\[.9em]
&& Y,Z \in \Pos(\W\otimes \V) \, ,
\\[.2em]
&&  \fnorm{\A(X)-y} \leq \eta \, ,
\end{array}
\end{equation}
which, just like the optimization \eqref{eq:nnorm_reconstruction}, is a convex optimization problem that can be solved computationally efficiently and also practically using standard software such as CVX \cite{cvx,GraBoy08}.
In the remainder of this section, we present three measurement scenarios for which Implication~\ref{imp:Inheriting} holds. 
The first one is a version of Ref.\ \cite[Example 4.4]{Tro14} 
which is valid for reconstructing real-valued matrices. 
In its original formulation with nuclear norm minimization, it follows from
combining Proposition~\ref{prop:general_reconstruction} and Eq.~\eqref{eq:lambda_nnorm_Gaussian}. 

\begin{proposition}[Stable recovery of real matrices via Gaussian measurements]
	\label{prop:Gaussian_CS}
	Let $X_0 \in \L (\W \otimes \V)$ be a real valued, bipartite matrix of rank $r$
	that obeys $\jnorm{X_0} = \nnorm{X_0}$. 
	Also, suppose that each measurement matrix $A_i$ is a real-valued standard Gaussian matrix and the overall noise is bounded as $\fnorm{\epsilon}\leq \eta$. 
	Then, $m \geq C r \dim (\W \otimes \V)$ noisy measurements of the form \eqref{eq:measurement_operator} suffice to guarantee
	\begin{equation}
	\fnorm{X^\mysquare_\eta - X_0} \leq \frac{C' \eta}{\sqrt{m}}
	\label{eq:recovery_bound}
	\end{equation}
	with probability at least $1-\e^{-C'' m}$. Here, $C$, $C'$ and $C''$ denote  absolute constants.
\end{proposition}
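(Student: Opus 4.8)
The plan is to obtain Proposition~\ref{prop:Gaussian_CS} as a direct corollary of Implication~\ref{imp:Inheriting} applied to the known nuclear-norm recovery guarantee for Gaussian measurements. The starting point is the deterministic error bound of Proposition~\ref{prop:general_reconstruction} together with the conic singular value estimate \eqref{eq:lambda_nnorm_Gaussian}, which is the standard ingredient behind \cite[Example~4.4]{Tro14}. The crux is that the recovery bound for the nuclear norm is \emph{entirely} a statement about $\lambda_{\min}(\A;\DC(\nnorm{\argdot},X_0))$, so it transfers verbatim to the square norm once we know $\DC(\jnorm{\argdot};X_0)\subset\DC(\nnorm{\argdot};X_0)$.

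Concretely, I would proceed as follows. First, invoke the hypothesis $\jnorm{X_0}=\nnorm{X_0}$: by Theorem~\ref{thm:extremality} this says $X_0$ is extremal, and hence by Corollary~\ref{cor:subset} (taking the admissible pair $A=B=\1_\V$ in the active index set $I(X_0)$) we get the cone inclusion $\DC(\jnorm{\argdot};X_0)\subset\DC(\nnorm{\argdot};X_0)$. Consequently, by the monotonicity of the minimum conic singular value under shrinking the cone,
\begin{equation}
\lambda_{\min}\bigl(\A;\DC(\jnorm{\argdot};X_0)\bigr)\;\geq\;\lambda_{\min}\bigl(\A;\DC(\nnorm{\argdot};X_0)\bigr).
\end{equation}
Second, apply Proposition~\ref{prop:general_reconstruction} with $f=\jnorm{\argdot}$: since $\fnorm{\epsilon}\leq\eta$, the square-norm estimator $X^\mysquare_\eta$ of \eqref{eq:Mdiamond} satisfies
\begin{equation}
\fnorm{X^\mysquare_\eta-X_0}\;\leq\;\frac{2\eta}{\lambda_{\min}(\A;\DC(\jnorm{\argdot};X_0))}\;\leq\;\frac{2\eta}{\lambda_{\min}(\A;\DC(\nnorm{\argdot};X_0))}.
\end{equation}
Third, plug in the Gaussian bound \eqref{eq:lambda_nnorm_Gaussian}: with $n_1+n_2=2\dim(\W\otimes\V)$ in the present (square, Hermitian) setting, choosing $m\geq C\,r\dim(\W\otimes\V)$ for a suitable absolute constant $C$ ensures $\sqrt{m-1}-\sqrt{3r(n_1+n_2-r)}\geq c\sqrt{m}$ for some absolute $c>0$, and then taking $t=c\sqrt{m}/2$ (say) shows $\lambda_{\min}(\A;\DC(\nnorm{\argdot},X_0))\geq (c/2)\sqrt{m}$ with probability at least $1-\e^{-c^2 m/8}$. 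Substituting into the displayed bound yields \eqref{eq:recovery_bound} with $C'=4/c$ and $C''=c^2/8$.

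The only genuinely delicate point is bookkeeping of the constants in step three — one must confirm that a single constant $C$ can be chosen (uniformly, not depending on $r$ or the dimensions) so that the subtracted term $\sqrt{3r(n_1+n_2-r)}$ is dominated by a constant fraction of $\sqrt{m}$, and that $r\dim(\W\otimes\V)$ is the right sample-complexity scaling (it is, since $r(n_1+n_2-r)\leq r(n_1+n_2)=2r\dim(\W\otimes\V)$). There is a minor subtlety worth flagging: Proposition~\ref{prop:general_reconstruction} as quoted from \cite{Tro14} is stated for real vector spaces, but the present proposition concerns real-valued matrices, so this causes no difficulty; one simply treats $\L(\W\otimes\V)$ with its real structure. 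Everything else is a mechanical combination of results already established in the excerpt, so I do not anticipate any real obstacle beyond the constant-chasing.
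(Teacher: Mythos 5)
Your proposal is correct and follows exactly the paper's route: the paper obtains Proposition~\ref{prop:Gaussian_CS} by combining the descent-cone inclusion of Corollary~\ref{cor:subset} (via Implication~\ref{imp:Inheriting}) with Tropp's error bound (Proposition~\ref{prop:general_reconstruction}) and the Gaussian conic singular value estimate \eqref{eq:lambda_nnorm_Gaussian}, which is precisely your three-step argument. Your constant bookkeeping is sound, and the only cosmetic remark is that Corollary~\ref{cor:subset} applies directly under the hypothesis $\jnorm{X_0}=\nnorm{X_0}$, so the detour through Theorem~\ref{thm:extremality} is unnecessary.
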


With high probability (w.h.p.), this statement assures \emph{stable} recovery, meaning that the reconstruction error \eqref{eq:recovery_bound} scales linearly in the noise bound $\eta$ and inversely proportional to $\sqrt{m}$. 

For the sake of clarity, we have refrained from providing explicit values for the constants $C,C'$ and $C''$ in Proposition~\ref{prop:Gaussian_CS}. 
However, resorting to Tropp's bound \eqref{eq:lambda_nnorm_Gaussian} on the minimal conical eigenvalue of a Gaussian sampling matrix reveals that 
stably recovering any rank-$r$ matrix obeying Eq.~\eqref{eq:equality} requires roughly
\begin{equation}
m \gtrsim 6 r\, ( \dim (\V) \dim (\W) - r)
\end{equation}
independently selected Gaussian measurements.

Proposition~\ref{prop:Gaussian_CS} is a prime example for a \emph{non-uniform} recovery guarantee: 
For any fixed rank-$r$ matrix $X_0$ obeying Eq.~\eqref{eq:equality}, 
$m$ randomly chosen measurements of the form \eqref{eq:measurement} suffice to stably reconstruct $X_0$ w.h.p. 
For some measurement scenarios, stronger recovery guarantees can be established. Called \emph{uniform} recovery guarantees, these results assure that one choice of sufficiently many random measurements suffices w.h.p.\ to reconstruct all possible matrices of a given rank. 

A uniform recovery statement can be established for the following real-valued measurement scenario \cite{KabKueRau15}: suppose that with respect to an arbitrary orthonormal basis of $\W \otimes \V$, each matrix element of $A_i$ is an independent instance of a real-valued random variable $a$ obeying
\begin{equation}
\EE \left[ a \right] = 0, \quad
\EE \left[ a^2 \right] = 1, \quad \textrm{and} \quad
\EE \left[ a^4 \right] \leq F, \label{eq:fourth_moments}
\end{equation}
where $F \geq 1$ is an arbitrary constant. Measurement matrices of this form can be considered as a generalization of Gaussian measurement matrices, where each matrix element corresponds to a standard Gaussian random variable. 
In Ref.\ \cite{KabKueRau15} -- see also Refs.\ \cite{KabRauTer15a,KabRauTer15b} -- a uniform recovery guarantee for such measurement matrices has been established by means of the \emph{Frobenius robust rank null space property} \cite[Definition 10]{KabKueRau15}.
Such a proof technique is different from the geometric one introduced in Section~\ref{sub:convex_recovery}. However, as laid out in the appendix, some auxiliary statements allow for reassembling technical statements from these works
to yield a slightly weaker, but still uniform, statement by means of analyzing descent cones. Implication~\ref{imp:Inheriting} is applicable for such a result and yields the following. 

\begin{proposition}[Stable, uniform recovery of real matrices via measurement matrices with finite fourth moments] \label{prop:fourth_moments}
Consider the measurement process described in Eq.~\eqref{eq:measurement_operator}, 
where each $A_i \in \L (\W \otimes \V)$ is an independent random matrix of the form \eqref{eq:fourth_moments}. Fix $r\geq 1$ 
and suppose that $m \geq C_F\, r \dim (\W \otimes \V)$. 
Then, w.h.p., every real-valued matrix $X_0 \in \L (\V \otimes \W)$ of rank at most $r$ and
obeying $\nnorm{X} = \snorm{X}$ can be stably reconstructed from the measurements  \eqref{eq:measurement_operator} by means of square norm minimization \eqref{eq:Mdiamond}. Here, $C_F$ is a constant that only depends on the fourth-moment bound $F$.
\end{proposition}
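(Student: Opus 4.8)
The plan is to reduce Proposition~\ref{prop:fourth_moments} to the known uniform recovery result of \cite{KabKueRau15} for nuclear norm minimization together with Implication~\ref{imp:Inheriting}. First I would recall that \cite{KabKueRau15} establishes, under the fourth-moment assumption \eqref{eq:fourth_moments} and $m \geq C_F\, r\dim(\W\otimes\V)$, that the measurement map $\A$ w.h.p.\ satisfies the Frobenius robust rank null space property of order $r$, from which stable uniform recovery of all rank-$\le r$ matrices by nuclear norm minimization follows. The key bridge is the appendix material alluded to in the excerpt: I would invoke the auxiliary statements that translate the Frobenius robust rank null space property into a lower bound on the minimum conic singular value $\lambda_{\min}(\A;\DC(\nnorm{\argdot},X_0))$ that is uniform over all rank-$\le r$ matrices $X_0$. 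Concretely, one shows $\lambda_{\min}(\A;\DC(\nnorm{\argdot},X_0)) \geq c\sqrt{m}$ simultaneously for every such $X_0$, with $c>0$ absolute (this is the slightly weaker but still uniform reformulation mentioned). Feeding this into Proposition~\ref{prop:general_reconstruction} yields $\fnorm{X^\ast_\eta - X_0} \leq 2\eta/(c\sqrt m)$ for nuclear norm minimization.

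Next I would apply the transfer principle. By Theorem~\ref{thm:extremality}, the hypothesis $\nnorm{X_0} = \snorm{X_0}$ — wait, one should be careful here: the statement as written says $\nnorm{X}=\snorm{X}$, but the relevant extremality condition from Eq.~\eqref{eq:equality} is $\nnorm{X_0}=\jnorm{X_0}$; I would read the hypothesis as the latter (the excerpt's Implication~\ref{imp:Inheriting} is phrased in terms of $\nnorm{X}=\jnorm{X}$, and Proposition~\ref{prop:Gaussian_CS} similarly uses $\jnorm{X_0}=\nnorm{X_0}$). Under that hypothesis, Corollary~\ref{cor:subset} (with the choice $A=B=\1_\V$) gives $\DC(\jnorm{\argdot};X_0)\subset\DC(\nnorm{\argdot};X_0)$, hence $\lambda_{\min}(\A;\DC(\jnorm{\argdot},X_0)) \geq \lambda_{\min}(\A;\DC(\nnorm{\argdot},X_0)) \geq c\sqrt m$ for every rank-$\le r$ matrix $X_0$ obeying the extremality condition. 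Applying Proposition~\ref{prop:general_reconstruction} once more, now with $f = \jnorm{\argdot}$ and with $X^\mysquare_\eta$ as in \eqref{eq:Mdiamond}, yields $\fnorm{X^\mysquare_\eta - X_0} \leq 2\eta/(c\sqrt m)$, which is exactly the claimed stable, uniform recovery bound with an appropriate constant. The probability bound $1-\e^{-C''m}$ is inherited verbatim from the underlying event that $\A$ satisfies the null space property.

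I would organize the write-up in three short steps: (1) state and cite the uniform null-space-property guarantee of \cite{KabKueRau15}; (2) invoke the appendix lemmas to convert it into a uniform conic-singular-value lower bound of order $\sqrt m$, valid on the nuclear-norm descent cones of all rank-$\le r$ matrices; (3) combine Corollary~\ref{cor:subset} and Proposition~\ref{prop:general_reconstruction} to push the bound onto the square norm for those $X_0$ satisfying the extremality relation. The routine calculations are the constant-chasing in steps (1)–(2), which I would defer to the appendix.

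The main obstacle I anticipate is step (2): the original result of \cite{KabKueRau15} is phrased via the Frobenius robust rank null space property rather than via descent cones, and it is not entirely routine to extract a clean, uniform lower bound on $\lambda_{\min}(\A;\DC(\nnorm{\argdot},X_0))$ that holds for \emph{all} rank-$\le r$ matrices at once — the null space property controls the error directly, so one must essentially re-derive a descent-cone version (this is precisely why the excerpt flags that the resulting statement is ``slightly weaker''). Once that uniform conic bound is in place, the rest is a mechanical application of the geometry already developed, since the inclusion of descent cones only enlarges the minimum conic singular value and therefore can only improve (or preserve) the bound.
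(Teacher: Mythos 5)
Your outer architecture is right and matches the paper's: establish a lower bound on $\lambda_{\min{}}(\A;\DC(\nnorm{\argdot},X_0))$ that is uniform over all rank-$r$ matrices, then use the descent-cone inclusion $\DC(\jnorm{\argdot};X_0)\subset\DC(\nnorm{\argdot};X_0)$ from Corollary~\ref{cor:subset} (valid under $\nnorm{X_0}=\jnorm{X_0}$ --- and you are right that the hypothesis as printed should read $\jnorm{X}$ rather than $\snorm{X}$) together with Proposition~\ref{prop:general_reconstruction}. That transfer step is exactly the paper's Corollary~\ref{cor:sufficient_diamond}.

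The gap is your step (2), which you yourself flag as the main obstacle and do not resolve. The paper does \emph{not} derive the Frobenius robust rank null space property and then convert it into a conic singular value bound; it bypasses the null space property entirely. What it imports from Refs.~\cite{KabKueRau15,KabRauTer15a} are only two raw probabilistic estimates for the fourth-moment ensemble, namely the small-ball bound $Q_{1/\sqrt{2}} \geq 1/(4\max\{3,F\})$ and the spectral-norm bound $\EE\left[\snorm{H}\right]\leq C_F\sqrt{n}$. These are fed into a variant of Mendelson's small ball method (Theorem~\ref{thm:mendelson}) applied directly to the \emph{union} $K_r$ of all nuclear-norm descent cones anchored at matrices of rank at most $r$, which yields $\lambda_{\min{}}(\A;K_r)\geq C_3\sqrt{m}$ w.h.p.\ in one shot (Proposition~\ref{prop:sufficient_criteria}). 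The ingredient that makes this work --- and that your proposal is missing --- is Lemma~\ref{lem:effective_low_rank}: every $Y\in K_r$ obeys $\nnorm{Y}\leq(1+\sqrt{2})\sqrt{r}\,\fnorm{Y}$, proven via a pinching inequality for general (not necessarily Hermitian) matrices; this is what controls the mean empirical width $W_m(E_r,\A)$ via H\"older's inequality. Without that lemma, or some substitute for it, there is no route from the cited moment bounds to a uniform conic singular value estimate, so the proof as you sketch it does not close.
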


This is a uniform recovery guarantee that assures stability towards additive noise corruption $\epsilon$ in the measurement process \eqref{eq:measurement_operator}.
However, it does not establish robustness towards the model assumption of low rank. For nuclear norm minimization, the main results in \cite{KabKueRau15} employ a null space property argument that  does cover this additional stability aspect.

We conclude this section with two uniform recovery guarantees for Hermitian low-rank matrices from measurement matrices $A_i$ that are proportional to rank-one projectors, i.e., $A_i = a_i a_i^\ast$ for some $a_i \in \W \otimes \V$. 
Originally established for nuclear norm minimization in Ref.\ \cite{KueRauTer15}, by using an extension of the geometric proof techniques presented in Section~\ref{sub:convex_recovery}, Implication~\ref{imp:Inheriting} is directly applicable to such measurements. 

\begin{proposition}[Stable, uniform recovery of Hermitian matrices from rank-one measurements] \label{prop:rank_one}
	Consider recovery of Hermitian rank-$r$ matrices $X_0 \in \L (\W
	\otimes \V)$ that obey $\jnorm{X_0}=\nnorm{X_0}$
	from rank-one
	measurements of the form $A_i = a_i a_i^\ast$. 
	Let $n = \dim(\W \otimes \V)$. 
	Then stable and uniform recovery guarantees for square norm
	minimization \eqref{eq:Mdiamond} analogous to
	Proposition~\ref{prop:fourth_moments} hold if either
	\begin{enumerate}
		\item 
		the measurements $a_i$ are 
		$m \geq C_G r n$ random Gaussian vectors in 
		$\W \otimes \V$ or
		\item
		the measurements $a_i$ are 
		$m \geq C_{4D} r n \log (n)$ vectors
		drawn uniformly from a 
		complex projective $4$-design (see Eq.~\eqref{eq:def_design} below).
	\end{enumerate}
	Once more, $C_G$ and $C_{4D}$
	denote absolute constants of sufficient size.
\end{proposition}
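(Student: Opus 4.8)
The plan is to obtain the claim from the nuclear-norm guarantees of Ref.\ \cite{KueRauTer15} in exactly the way Proposition~\ref{prop:fourth_moments} is obtained, that is, through Corollary~\ref{cor:subset} and Implication~\ref{imp:Inheriting}. The one fact that has to be extracted from \cite{KueRauTer15} is that its uniform recovery results for rank-one measurements are of \emph{descent-cone type}: in both the Gaussian and the complex projective $4$-design case, the argument — a version of Mendelson's small ball method and Tropp's bowling scheme \cite{Tro14,Men14,KolMen14} — proceeds by showing that, with probability at least $1-\e^{-cm}$ over the draw of the $a_i$,
\begin{equation}
\lambda_{\min}\bigl(\A; K_r\bigr) \geq c'\sqrt{m}, \qquad
K_r \coloneqq \bigcup_{\substack{X \in \L(\W\otimes\V)\ \text{Hermitian}\\ \rank(X)\leq r}} \DC(\nnorm{\argdot}; X),
\end{equation}
as soon as $m \geq C_G r n$ (Gaussian) respectively $m \geq C_{4D} r n \log n$ ($4$-design); together with Proposition~\ref{prop:general_reconstruction} this is precisely the stated nuclear-norm recovery guarantee.

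First I would restrict everything to the real vector space of Hermitian operators on $\W\otimes\V$, which contains the measurement matrices $A_i = a_i a_i^\ast$ and all signals $X_0$ of interest; by the remark following Proposition~\ref{prop:general_reconstruction} the entire geometric machinery of Section~\ref{sub:convex_recovery} applies verbatim there. For every Hermitian rank-$r$ matrix $X_0$ obeying $\nnorm{X_0} = \jnorm{X_0}$, Corollary~\ref{cor:subset} with the active pair $A = B = \1_\V$ gives $\DC(\jnorm{\argdot}; X_0) \subset \DC(\nnorm{\argdot}; X_0) \subset K_r$. Since $\lambda_{\min}(\A; \argdot)$ is monotone under inclusion of cones, on the same high-probability event one has $\lambda_{\min}\bigl(\A; \DC(\jnorm{\argdot}; X_0)\bigr) \geq \lambda_{\min}(\A; K_r) \geq c'\sqrt{m}$ simultaneously for all admissible $X_0$.

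Next I would invoke Proposition~\ref{prop:general_reconstruction} with $f = \jnorm{\argdot}$ and $x_0 = X_0$: whenever $\fnorm{\epsilon}\leq \eta$, the square-norm estimator $X^\mysquare_\eta$ of \eqref{eq:Mdiamond} satisfies $\fnorm{X^\mysquare_\eta - X_0} \leq 2\eta/\lambda_{\min}(\A; \DC(\jnorm{\argdot}; X_0)) \leq (2/c')\,\eta/\sqrt{m}$, uniformly over all Hermitian rank-$r$ $X_0$ with $\nnorm{X_0} = \jnorm{X_0}$, with probability at least $1-\e^{-cm}$. Taking $C_G$, $C_{4D}$ to be the constants inherited from \cite{KueRauTer15} yields a recovery bound of the form \eqref{eq:recovery_bound}, i.e.\ the analogue of Proposition~\ref{prop:fourth_moments}. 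This is Observation~\ref{obs:smaller_DC} at work; moreover, since the inclusion $\DC(\jnorm{\argdot}; X_0) \subset K_r$ is typically strict (compare Section~\ref{sec:application_maps}), one in fact expects strictly better performance than for the nuclear norm.

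The main obstacle is the assertion in the first paragraph. One must check that the proofs in \cite{KueRauTer15} genuinely deliver a \emph{uniform} lower bound on the minimum conic singular value over the fixed cone $K_r$ — rather than, say, a bound at a single $X_0$, or an estimate formulated through a robust rank null space property as in \cite{KabKueRau15}. For the Gaussian ensemble this is immediate, since Mendelson's small ball method bounds exactly this quantity and $K_r$ does not depend on the measurements. For the $4$-design ensemble the same method applies once one controls the relevant mean empirical width of $K_r$ and a small-ball (marginal tail) condition for $\langle a a^\ast, \argdot\rangle$ on $K_r$; both ingredients are supplied in \cite{KueRauTer15}, so — as carried out in the appendix for the fourth-moment scenario — only a modest reassembly of their technical lemmas is required, with no loss beyond absolute constants. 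Once this is secured, the remainder is the purely geometric inclusion $\DC(\jnorm{\argdot}; X_0) \subset K_r$ and a direct application of Proposition~\ref{prop:general_reconstruction}.
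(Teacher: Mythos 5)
Your proposal is correct and follows essentially the same route as the paper: the paper's appendix likewise reduces the claim to a uniform lower bound $\lambda_{\min}(\A;K_r)\gtrsim\sqrt{m}$ over the union $K_r$ of nuclear-norm descent cones (via Lemma~\ref{lem:effective_low_rank}, Mendelson's small ball method in Theorem~\ref{thm:mendelson}, and the $Q_\xi$ and $\EE[\snorm{H}]$ estimates imported from Ref.~\cite{KueRauTer15}), and then transfers it to the square norm through the inclusion $\DC(\jnorm{\argdot};X_0)\subset\DC(\nnorm{\argdot};X_0)\subset K_r$ from Corollary~\ref{cor:subset} together with Proposition~\ref{prop:general_reconstruction}. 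The "obstacle" you flag is exactly what Proposition~\ref{prop:sufficient_criteria} and Corollaries~\ref{cor:sufficient_diamond} and~\ref{cor:derandomizations2} resolve, so no further idea is missing.
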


In the statement above, a \emph{complex projective $t$-design} is a
configuration of vectors $\{w_j\}_{j \in [k]}$ which is ``evenly distributed'' on a sphere
in the sense that sampling uniformly from it reproduces the moments of the
Haar measure up to order $2t$
\cite{DelGoeSei77,RenBluRob04,AmbEme07}.
More precisely,
\begin{equation}\label{eq:def_design}
	\frac{1}{k}\sum_{j=1}^k (w_j w_j^\dagger)^{\otimes t} =
	\int_{\fnorm{w}=1} (w w^\dagger)^{\otimes t} \mathrm{d}w.
\end{equation}
The second statement in Proposition~\ref{prop:rank_one} can be seen as
``partial derandomization'' of the first one
\cite{GroKraKue15_partial}.

\section{Application to the recovery of linear maps on operators}
\label{sec:application_maps}
Now we come to three concrete applications concerning linear maps that take operators in $\L(\V)$ to operators in $\L(\W)$. Our reconstruction based on the square norm can be applied to such maps by identifying them with operators in $\L(\W \otimes \V)$. 
We start with introducing some relevant notation and explain such an identification, the Choi-Jamio{\l}kowski isomorphism, in more detail. 
Then we present numerical results on retrieval of certain unitary basis changes, quantum process tomography, and blind matrix deconvolution. 

\subsection{Notation concerning linear maps on operators} 
\label{sec:notation_maps}
Our square norm is closely related to the diamond norm, which is defined for linear operators $M: \L (\V) \to \L(\W)$ that map operators
to operators. We call such objects \emph{maps} and denote their space by 
$\M (\V,\W)\coloneqq \L\left( \L(\V) \to \L(\W) \right)$, 
or simply by 
$\M \left(\V \right) \coloneqq \M \left( \V,\V \right)$. 
We also denote maps by capital Latin letters. Concretely, for 
$M \in \M (\V,\W)$ and $X \in \L (\V)$ we write $M(X) \in\L(\W)$. 
A particularly simple example is the identity map
$\1_{\L(\V)} \in \M(\V)$ which obeys $\1_{\L(\V)} (X) = X$ for all $X\in \L(\V)$. 

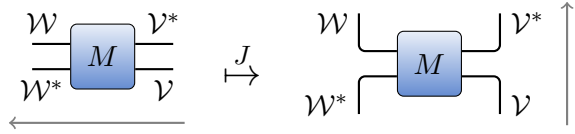
\begin{figure}
	\centering
	\leavevmode
	\beginpgfgraphicnamed{fig3}%
	\definecolor{niceblue}{rgb}{0.33,0.5,0.8}%
	\tikzset{%
	    sbox/.style = {draw, rounded corners = .5ex,%
			   minimum height = 1.8\baselineskip,%
			   minimum width = 2.2em},
	    blau/.style = {top color=niceblue!12,%
			   bottom color=niceblue!90},
	    Bbox/.style = {sbox, blau},
	    leg/.style = {rounded corners = .5ex,thick},
	    dir/.style = {gray,thick}
	    }%
	\begin{tikzpicture}[node distance = 1ex]
	  \node (M1) {
	    \begin{tikzpicture}
	    \Mbox{$M$}
	    \draw [leg] (More) -- ++(\len,0) node (Xo) [near end, above] {$\V^\ast$};
	    \draw [leg] (Mure) -- ++(\len,0) node (Xu) [near end, below] {$\V$};
	    \draw [leg] (Moli) -- ++(-\len,0) node (Yo) [near end, above] {$\W$};
	    \draw [leg] (Muli) -- ++(-\len,0) node (Yu) [near end, below] {$\W^\ast$};
	    \draw [dir, ->] (Xu.south east) ++(0,-1ex) coordinate (ure) -- (ure-|Yu.west);
	    \end{tikzpicture}    
	    };
	  \node (J) [right = of M1] {\Large{$\overset{J}{\mapsto}$}};
	  \node (M2) [right = of J]{
	    \begin{tikzpicture}
	    \Mbox{$M$}
	    \draw [leg] (More) -- ++(\len,0) --++(0,\len) node (Xo) [near end, right] {$\V^\ast$};
	    \draw [leg] (Mure) -- ++(\len,0) --++(0,-\len) node (Xu) [near end, right] {$\V$};
	    \draw [leg] (Moli) -- ++(-\len,0) --++(0,\len) node (Yo) [near end, left] {$\W$};
	    \draw [leg] (Muli) -- ++(-\len,0) --++(0,-\len) node (Yu) [near end, left] {$\W^\ast$};
	    \draw [dir, <-] (Xo.north east) ++(1ex,0) coordinate (ore) -- (ore|-Xu.south);
	    \end{tikzpicture}
	    };
	\end{tikzpicture}
	\endpgfgraphicnamed
	\caption{Tensor network diagrams: tensors are denoted by boxes with one line for each index. Contraction of two indices corresponds to connection of the corresponding lines. 
		\newline
		\capstr{Left:} Order-$4$ tensor $M$ as a map from $\L(\V)\cong \V\otimes \V^\ast$ to $\L(\W) \cong \W\otimes \W^\ast$. \newline
		\capstr{Right:} Its Choi-matrix $J(M)$ as an operator on $\W^\ast \otimes \V\cong \W \otimes \V$. 
	}
	\label{fig:tensor_diagram}
\end{figure}

We would like to identify maps in $\M(\V,\W)$ with operators in $\L(\W\otimes \V)$, for which we have discussed certain reconstruction schemes. For this purpose, we employ a very useful isomorphism, called the
\emph{Choi-Jamio{\l}kowski isomorphism} \cite{Cho75,Jam72}. In order to explicitly define this isomorphism, we fix an orthogonal basis $(e_i)$ of $\V$. This also gives rise to an operator basis 
\begin{equation}\label{eq:operator_basis}
E_{i,j}\coloneqq e_i e_j^T \in \L(\V)  
\end{equation}
and we define \emph{vectorization} $\vec : \L(\V)\to \V\otimes \V$ by the linear extension of 
\begin{equation}
  \vec(E_{i,j}) \coloneqq e_i\otimes e_j \, .
\end{equation}
Then the Choi-Jamio{\l}kowski isomorphism $J$ is defined by
\begin{equation}
\begin{aligned}
J: \M (\V,\W) 
&\to \L ( \W \otimes \V) \\
M  
&\mapsto \sum_{i,j=1}^{\dim(\V)} M(E_{i,j}) \otimes E_{i,j} \label{eq:choi} \, .
\end{aligned}
\end{equation}
The resulting operator $J(M)$ is called the \emph{Choi matrix} of $M$. 
It can be straightforwardly checked that Eq.~\eqref{eq:choi} is equivalent to setting
\begin{equation}\label{eq:Jdef}
J(M) = \Bigl( M \otimes \1_{\L(\V)} \Bigr) \bigl( \vec(\1_\V) \vec(\1_\V)^T \bigr) .
\end{equation}
In Appendix~\ref{sec:appendixChoi}, we provide a basis independent definition of the Choi-Jamio{\l}kowski isomorphism, which is an instance of the natural isomorphism 
$\W \otimes\W^\ast\otimes \V^\ast \otimes \V \cong \W \otimes \V^\ast \otimes\W^\ast \otimes \V$. 
This identification is illustrated in Figure~\ref{fig:tensor_diagram}.

Let us also explicitly mention the case where the underlying vector spaces are again written as $\V = \CC^n$ and $\W = \CC^N$. 
Then $M(X)$ is given by an $N\times N$ matrix with elements 
$M(X)_{i,j} = \sum_{k,l=1}^n M_{i,j,k,l} X_{k,l}$. 
Here, the map $M$ is represented as an array 
$(M_{i,j,k,l})_{i,j\in [N],\, k,l \in [n]}$. 
The Choi matrix is obtained by swapping the second and third index and then taking the joint first two and joint last two indices as first and second index of the Choi matrix, respectively, i.e., 
$J(M)_{(i,k),(j,l)} = M_{i,j,k,l}$ with $i,j\in [N]$ and $k,l \in [n]$. 	

Similarly to the definition of the spectral norm \eqref{eq:snrom_def}, the nuclear norms on $\L(\V)$ and $\L(\W)$ induce a norm on $\M(\V,\W)$,  
\begin{equation}
\norm{M}_{\ast \to \ast} \coloneqq 
\sup_{X \in \L(\V)} \frac{ \nnorm{M (X)}}{\nnorm{X}} \, . \label{eq:induced}
\end{equation}
Perhaps surprisingly, the induced nuclear norm of maps of the
form $M\otimes \1_{\L(\V)}$ can be computed efficiently
\cite{Wat09,BenTaS09, Wat12}, as explained in detail below. 
This motivates studying the
\emph{diamond norm} \cite{KitSheVya02}
\begin{equation}
	\dnorm{M} \coloneqq \norm{M \otimes \1_{\L(\V)}}_{\ast \to \ast}.  \label{eq:diamond}
\end{equation}
It plays an important role in quantum mechanics \cite{KitSheVya02} and
is also the core concept of this work.  Using the
Choi-Jamio{\l}kowski isomorphism, the diamond norm \eqref{eq:diamond}
can indeed be written \cite{Wat12} as
\begin{equation}\label{eq:dnorm_variational}
 \dnorm{M} = \frac{\jnorm{J(M)}}{\dim(\V)} \, ,
\end{equation}
where the square norm was defined variationally in Eq.~\eqref{eq:jnorm_variational}. Hence, for the case of a measurement map $\A : \M(\V,\W) \to \CC^m$, the reconstruction based on the square norm \eqref{eq:jnorm_reconstruction} can also be written as
\begin{equation}
\label{eq:dnorm_reconstruction}
\arraycolsep=1.5pt
\begin{array}{r l l}
 M_\eta^\diamond
 =&\argmin & \tkw 2 \snorm{\Tr_\W(Y)} + \tkw 2 \snorm{\Tr_\W(Z)}
 \\[.3em]
  &\st\ &
	\begin{pmatrix}
          Y & -J(M) \\ -J(M)\ad & Z
        \end{pmatrix}
             \succeq 0 \, ,
  \\[.9em]
  &&  Y,Z \in \Pos(\W\otimes \V) \, ,
\\[.2em]
  && \fnorm{\A(M)-y} \leq \eta \, .
\end{array}
\end{equation} 
In our numerical experiments we solve this minimization problem using CVX \cite{cvx,GraBoy08}.

\begin{figure}
  \centering
	\includegraphics[width = .5\linewidth]{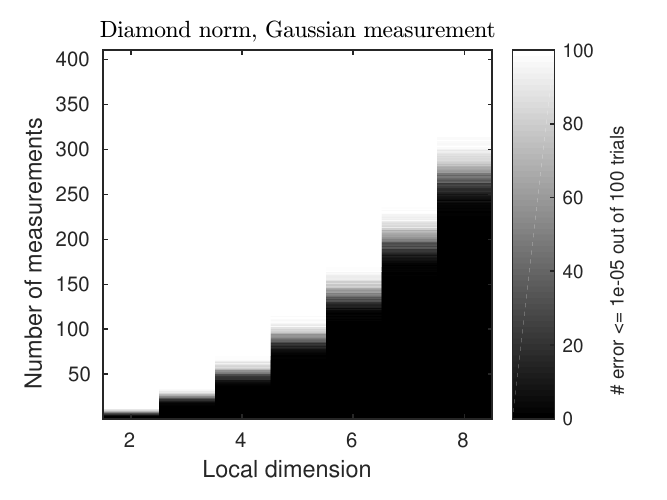}%
    \includegraphics[width = .5\linewidth]{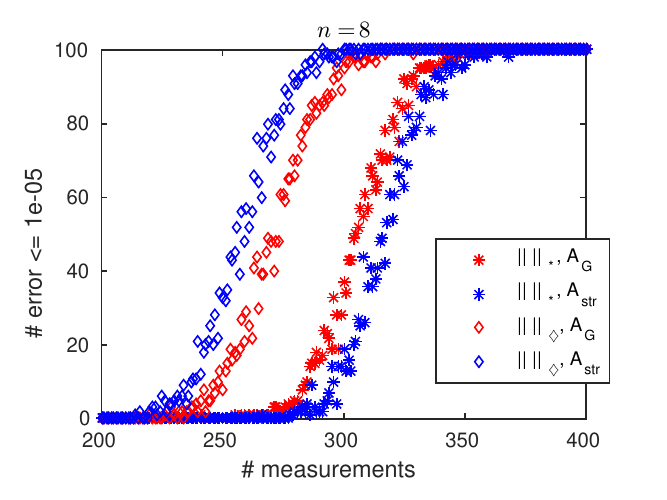}%
  \caption{Retrieval of $M(X) = UXV$ for the case of real numbers and $\ev=0$. 
	   $U,V \in\O(n)$ are orthogonal matrices drawn from the Haar measure in each trial. 
	   The plots show the number of trails out of $100$ with small errors, $\fnormn{M^\diamond_\eps-M_0} \leq 10^{-5}$ and $\fnormn{M^\ast_\eps-M_0} \leq 10^{-5}$, respectively, and with $\eta$ chosen as machine precision $\eps$. 
	   \newline
    \capstr{Left:} Diamond norm minimization with Gaussian measurements for different local dimensions~$n$. \newline
    \capstr{Right:} Comparison of diamond norm and nuclear norm with Gaussian and structured measurements.
  Note that the structured measurements improve the reconstruction based on the diamond norm while for the reconstruction based on the nuclear norm Gaussian measurements turn out to work better. The computation time needed for the recovery is approximately the same for both methods. }
  \label{fig:UV}
\end{figure}

\subsection{Retrieval of certain unitary basis changes}\label{sec:UV}
Our problem of retrieval of unitary basis changes is motivated by the phase retrieval problem. Retrieving phases from measurements that are ignorant towards them has a long-standing history in various scientific disciplines \cite{Wal63}. 
A discretized version of this problem can be phrased as the task of inferring a complex vector $x \in \CC^n$ from measurements of the form
\begin{equation}\label{eq:phaseless_measurements}
y_i = \left| \langle a_i, x_i \rangle \right|^2, 
\end{equation}
where $a_1,\ldots,a_m \in \CC^n$. 
Recently, the mathematical structure of this problem has received considerable attention
\cite{Wal63,CanEldStr11,CanLi13,CanStroVor13,AleBanFicMix14,CanLiSol14,GroKraKue15_partial,GroKraKue15_masked}. 
One way of approaching this problem is to recast it as a matrix problem which has the benefit that the measurements \eqref{eq:phaseless_measurements} become linear. Indeed, setting $X \coloneqq x x\ad$ and $A_i = a_i a_i\ad$ reveals that
\begin{equation}
y_i = \left| \langle a_i, x \rangle \right|^2 
= 
\Tr \bigl( a_i a_i\ad x x\ad \bigr) 
= 
\Tr \left( A_i X \right).
\end{equation}
This ``lifting'' trick allows for re-casting the phase retrieval problem as the task of recovering a Hermitian rank-one matrix $X = x x\ad$ from linear measurements of the form $A_i = a_i a_i\ad$. 

Recently, Ling and Strohmer \cite{LinStr15} used similar techniques to recast the important problem of self-calibration in hardware devices 
as the task to recover a non-Hermitian rank-one matrix $X = x y\ad$ from similar linear measurements.

In this section, we consider the matrix-analogue of such a task and set $\V = \CC^n = \W$ but keep $\V$ and $\W$ as labels.
Concretely, we consider maps $M \in \M(\V,\W)$ of the form
\begin{equation}\label{eq:MeqUXV}
  M(X) = U X V \, ,
\end{equation}
where $U$ and $V$ are fixed unitaries. Note that any such map has a Choi matrix of the form
\begin{equation}
\begin{aligned}
J (M) &= \bigl[M \otimes \1_{\L(\V)}\bigr]\bigl( \vec (\1_{\V} ) \vec (\1_{\V} )\ad \bigr)
\\
&= \left(  U \otimes \1_{\V}\vec (\1_{\V} ) \right) 
  \left(  V \otimes \1_{\V}\vec (\1_{\V} ) \right)\ad 
\\
&= \vec(U)\vec(V)\ad  ,
\end{aligned}
\end{equation}
which corresponds to an outer product of the form $x y\ad$.
Moreover, unitarity of both $U$ and $V$ assures that all such maps meet the requirements of Theorem \ref{thm:extremality}.

We aim to numerically recover such maps from two different types of measurements:
(i) Gaussian measurements and (ii) structured measurements. The Gaussian measurements are given by a measurement map $\AG : \M(\V,\W) \to \CC^m$ with real and imaginary parts of all of its components drawn from a normal distribution with zero mean and unit variance. 
In the case of structured measurements, $M$ receives \mbox{rank-$1$} inputs $x_j y_j\ad$ and then inner products with regular measurement matrices $A_j$ are measured. 
More precisely, the measurement map $\Astr : \M(\V,\W) \to \CC^m$ is given by 
\begin{align}
  \Astr(M)_j &\coloneqq \Tr\bigl( A_j M(x_j y_j\ad)\bigr) \, , \quad j\in [m]\, ,
\end{align}
where $x_j,y_j$ are chosen uniformly from the complex unit sphere $\left\{ z \in \V: \; \fnorm{z}=1 \right\} \subset \V$.
The random matrices $A_j$ are independently distributed as the random matrix $U_A D V_A$, where $D \in \L (\V)$ is fixed as a real-valued diagonal matrix and both $U_A$ and $V_A$ are chosen independently from the Haar measure over $U(\dim(\V))$. 
For our numerical studies, we mostly restrict ourselves to even dimensions $n= \dim(\V)$ and set
$D = \frac{2}{n}(1,-1,2,-2,\ldots,n/2,-n/2)$. 
This in particular assures $\snorm{D}=1$.
As we will see, similar types of measurements can be used in quantum process tomography and blind matrix deconvolution. 

For both measurement setups, we find that diamond norm reconstruction outperforms nuclear norm reconstruction; see Figure~\ref{fig:UV}. Interestingly, the structured measurements are better than the Gaussian measurements for the diamond norm reconstruction, while for the nuclear norm reconstruction we find the converse. 

Finally, we would like to to point out that Ling and Strohmer introduced a new algorithm -- dubbed ``SparseLift'' -- to efficiently reconstruct the signals they consider and simultaneously promote sparsity \cite{LinStr15}. 
It is an intriguing open problem to compare the performance of SparseLift to the constrained diamond norm minimization advocated here for different types of practically relevant measurement ensembles. We leave this idea to future work. 

\subsection{Quantum process tomography}\label{sec:quantum}
The problem of reconstructing quantum mechanical processes from measurements is referred to as \emph{quantum process tomography}. 
As explained in the next paragraph, quantum processes are described by
maps that saturate the norm inequality \eqref{eq:equality} and thus are natural candidates for diamond norm-based methods.

\begin{figure}
  \centering
  \includegraphics[width = .48\linewidth]{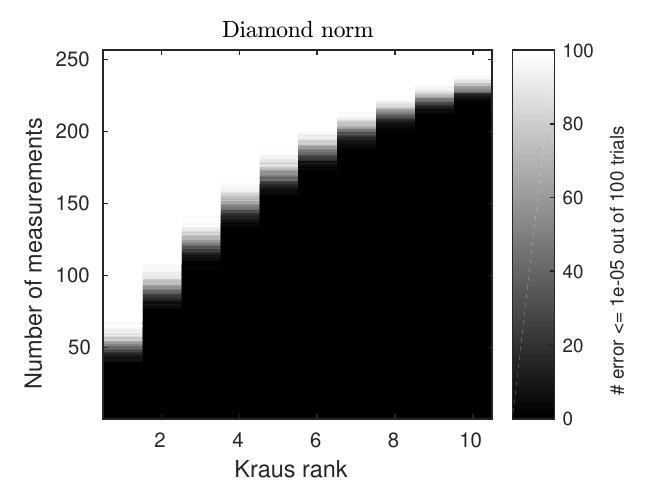}
  \includegraphics[width = .48\linewidth]{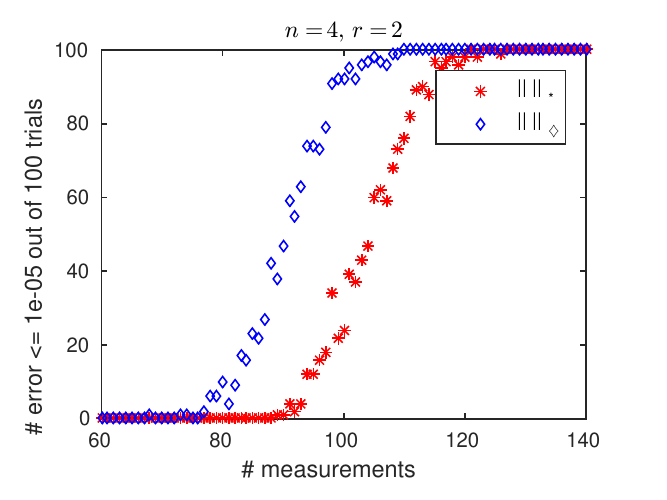}
  \caption{Retrieval of random quantum channels $M_0$ acting on two qubits ($n=4$) with $\ev=0$. 
	   The plots show the number of trails out of $100$ with small errors, $\fnorm{M^\diamond_\eps-M_0} \leq 10^{-5}$ and $\fnorm{M^\ast_\eps-M_0} \leq 10^{-5}$, respectively, and with $\eta$ chosen as machine precision $\eps$. 
	   In each trial, $M_0$ is constructed from the Haar measure on $U \in \U(r\, \dim(\V))$ by tracing out an $r$ dimensional space. 
	   \newline
    \capstr{Left:} Diamond norm recovery for different Kraus ranks. \newline
    \capstr{Right:} Comparison of diamond norm and trace norm of the Choi matrix for Kraus rank $r=2$. The diamond norm recovery works with fewer measurements than the conventional nuclear norm recovery, while the computation time is approximately the same.}\label{fig:qm}
\end{figure}

In the following paragraph we briefly outline the mathematical essentials of (finite dimensional) quantum mechanics in general, and quantum process tomography in particular. 
We content ourselves with introducing the key concepts and defer the interested reader to the book~\cite{NieChu10} for a thorough introduction to quantum mechanics from a computer scientist's perspective.

\paragraph*{Preliminaries} 
A positive semidefinite operator $\rho \in \Pos(\V)$ with unit trace $\Tr(\rho)=\nnorm{\rho}=1$ is called a \emph{density operator} and a matrix representation is a \emph{density matrix}. The convex space of density operators is denoted by $\DM(\V) \subset \Pos(\V)$ and its elements are referred to as \emph{quantum states}. 
The extreme elements of $\DM(\V)$ are called \emph{pure states} and are given by rank-one operators of the form $\psi \psi\ad$ with $2$-norm normalized \emph{state vectors} $\psi \in \V$. An \emph{observable} is a self-adjoint operator $A \in \Herm(\V)$ and the \emph{expectation value} of $A$ in state $\rho \in \DM(\V)$ is $\Tr(\rho\,A)$. Note that in the case where $\rho$ and $A$ are diagonal, $\rho$ corresponds to a classical probability vector and $A$ to a random variable also with expectation value $\Tr(\rho\, A)$. For the following definitions it is helpful to know that quantum systems are composed to larger quantum systems by taking tensor products of operators. 
A map $M \in \M(\V,\W)$ is called \emph{completely positive} if $J(M)\in \Pos(\W\otimes \V)$ with $J$ from Eq.~\eqref{eq:Jdef}. This is the case if and only if for every vector space $\V$ the map $M \otimes \1_{\L(\V)}$ preserves the cone $\Pos(\V\otimes \W)$ of positive semidefinite operators. $M\in \M(\V,\W)$ is called \emph{trace preserving} if $\Tr(M(X)) = \Tr(X)$ for all $X \in \L(\V)$. The convex space of maps that are both, completely positive and trace preserving is denoted by $\CPT(\V,\W) \subset \M(\V,\W)$ and its elements are quantum operations as they map density operators to density operators and they are also called \emph{quantum channels}. 
Importantly, any $M \in \CPT(\V,\W)$ satisfies $\dnorm{M}=1$ and $\nnorm{J(M)}=\dim(\V)$. Due to the relation \eqref{eq:dnorm_variational} between the diamond and square norm $J(M)$ automatically fulfils the extremality condition \eqref{eq:equality}.

The \emph{Kraus rank} of a quantum channel $M \in \CPT(\V,\W)$ is the rank of its Choi matrix $J(M)$. A channel $M \in \CPT(\V,\W)$ of Kraus rank $r$ can be written as 
\begin{equation}\label{eq:Krausrep}
  M(\rho) = \sum_{j=1}^r K_j \rho K_j\ad \, ,
\end{equation}
where $K_j \in \L(\V \to \W)$ are so-called \emph{Kraus operators} satisfying $\sum_{j=1}^r K_j\ad K_j = \1_{\V}$, 
and no other such decomposition has fewer terms. A special role is
played by \emph{unitary channels}, which are channels of unit Kraus
rank. In this case, the single Kraus operator in the Kraus representation \eqref{eq:Krausrep}
has to be unitary. Unitary quantum channels describe coherent
operations in the sense that for isolated quantum systems (i.e.,
systems that are decoupled from anything else) one can only have
unitary quantum channels. Quantum channels describing situations where
the system is affected by noise have Kraus ranks larger than one. In
many experimental situations, one aims at the implementation of a
unitary channel, but actually implements a channel whose Kraus rank is
larger than one, but is still approximately low. Therefore, process
tomography of quantum channels with low Kraus rank is an important
task in quantum experiments. Also, in the context of \emph{quantum
error correction}, low-rank deviations turn out to have a particularly
adverse impact \cite{KueLonFla15}. 
This underscores the need to
design efficient estimation protocols for this case.

In the next paragraph, we present numerical results showing that,
indeed, replacing the nuclear norm with the diamond norm in ``compressive process tomography'' improves the
results. We expect that using the diamond norm as a ``drop
in replacement'' for the nuclear norm will also lead to improvements
in other, more involved process tomography schemes. For example,
Kimmel and Liu \cite{KimLiu15} combine compressed process tomography
with ideas from \emph{randomized benchmarking}
\cite{EmeAliZyc05,GraFerCor14}. This combination allows recovery using
only Clifford measurements that are robust to state preparation and
measurement (SPAM) errors. Their recovery guarantees are based on the
geometric arguments presented in Section~\ref{sub:convex_recovery} and allow for measurements drawn from unitary $2$-designs. It thus seems fruitful to also investigate the diamond norm in their setting. 

\paragraph*{Numerical results for quantum process tomography}
The task is to reconstruct $M_0\in \CPT(\V,\W)$ from measurements of the form
\begin{equation}
y = \A (M_0)+ \ev,
\end{equation}
where $\A: \M (\V,\W)\to \RR^m$ encodes linear data acquisition, 
 $y \in \RR^m$ summarizes the measurement outcomes, and $\ev \in \RR^m$
 represents additive noise. 
The most general measurements conceivable in this context are so-called \emph{process POVMs} \cite{Zim08}. However, here we consider the case where $\A$ is given by the preparation of
pure states given by state vectors $\psi_j \in \V$ and measurements
of observables $A_j \in \Herm(\W)$, where $j\in[m]$. 
This yields similar measurements as in Section~\ref{sec:UV}, 
\begin{equation}
  y_j = \A(M_0)_j \coloneqq \Tr\bigl(A_j M_0(\psi_j\psi_j\ad)\bigr) + \ev_j \, , \quad j \in [m] \, , 
\end{equation}
where each $\psi_j \in \V$ is chosen uniformly and independently from the complex unit
sphere in $\V$.
Each observable $A_j \in \Herm (\W)$ 
is of the form $A_j = U_j D U_j\ad$, where each $U_j \in U(\dim(\W))$ is drawn independently from the Haar measure over all unitaries.
Once more, $D \in \Herm (\W)$ 
is a fixed Hermitian operator. 
With this measurement setup, quantum channels can be recovered from
few measurements. Once more, diamond norm reconstruction outperforms the
conventional nuclear norm reconstruction, see Figure~\ref{fig:qm}. 

\begin{figure*}
	  \centering
    \includegraphics[width = .48\linewidth]{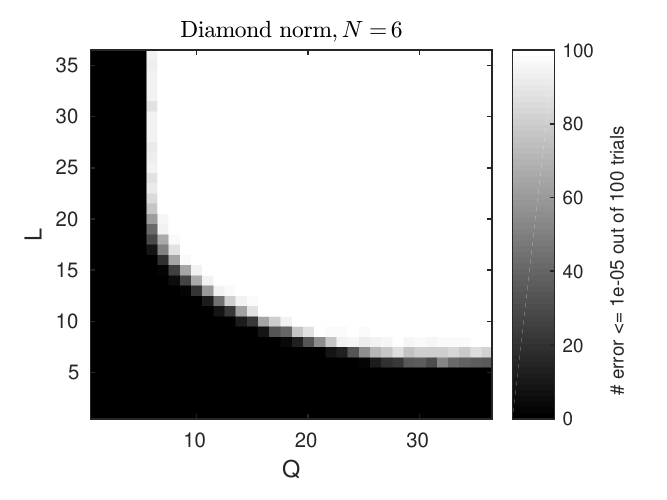}
	\includegraphics[width = .48\linewidth]{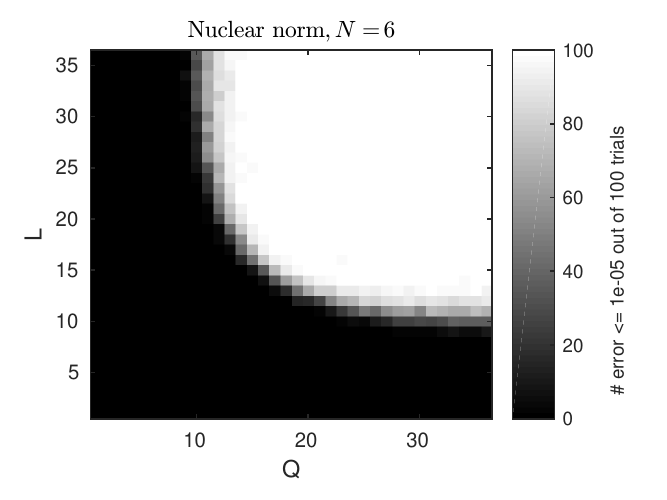}
	\caption{Blind matrix deconvolution with $N=6$ and $\ev=0$. 
		The plots show the number of trials out of $100$ with small errors, $\fnorm{M^\diamond_\eps-M_0} \leq 10^{-5}$ and $\fnorm{M^\ast_\eps-M_0} \leq 10^{-5}$, respectively, and with $\eta$ chosen as machine precision $\eps$. 
		In each trial, $M_0$ is constructed from Haar-random unitaries $U$ and $V$.
		\newline
		\capstr{Left:} Recovery via diamond norm. 
		\capstr{Right:} Recovery via nuclear norm.
		\newline
		The diamond norm recovery works with fewer measurements than the nuclear norm recovery, while the computation time is approximately the same.}
	\label{fig:deconvolution}
\end{figure*}

\subsection{Blind matrix deconvolution}\label{sec:deconv}
The \emph{blind deconvolution scheme} as considered in Ref.\ \cite{AhmRecRom12} 
aims to reconstruct unknown vectors $h\in \RR^k$ and $m\in \RR^n$. 
From these vectors, length $L$ signals are being generated as
\begin{equation}
	w = Bh \qquad\text{and}\qquad
	x = Cm,
\end{equation}
for known $B\in \L \left(\RR^k \to \RR^L \right)$ and $C\in \L \left( \RR^n \to \RR^L \right)$. The observed quantity is the circular convolution of $w$ and $x$,
\begin{equation}
	y = w* x = 
	\sum_{i=1}^L \left( \sum_{j=1}^L w_j\, x_{i-j+1 \mod L}\right)e_i \, ,
\end{equation}
where $(e_1,\ldots,e_L)$ denotes the standard basis of $\RR^L$.
This gives rise to  a bi-linear problem, which can still be solved using a lifting technique to a variant of the matrix completion problem.

The type of problem considered in this work allows for the \emph{blind matrix deconvolution}, in which not vectors $h,w$, but orthogonal or unitary matrices $U,V$ reflecting unknown rotations are reconstructed. 

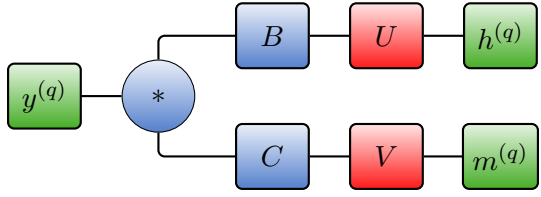
\begin{figure}
	\centering%
	\leavevmode
	\beginpgfgraphicnamed{fig7}%
	\definecolor{niceblue}{rgb}{0.33,0.5,0.8}%
	\colorlet{mygreen}{OliveGreen!90!blue}%
	\tikzset{%
		sbox/.style = {draw, rounded corners = .5ex,%
			minimum height = 1.8\baselineskip,%
			minimum width = 2.5em},
		blau/.style = {top color=niceblue!12,%
			bottom color=niceblue},
		rot/.style = {top color=red!10,%
			bottom color=red!90},
		grau/.style = {top color=gray!12,%
			bottom color=gray!90},
		gruen/.style = {top color=mygreen!10,%
			bottom color=mygreen!70},
		starbox/.style = {draw, circle, blau, minimum height = 2\baselineskip},
		Bbox/.style = {sbox,blau},%
		Gbox/.style = {sbox,gruen},%
		Rbox/.style = {sbox,rot},%
		leg/.style = {rounded corners = .5ex,thick},
		dir/.style = {gray,thick}
	}%
	\def\ydist{.8}%
	\def\xdist{1.5}%
\begin{tikzpicture}%
	\node (star) [starbox] at (0,0) {$\ast$};
	\draw [leg] (star) -- ++ (-\xdist,0) node (y) [Gbox] {$y^{(q)}$};
	\draw [leg] (star) |- ++ (\xdist,\ydist) node (B) [Bbox] {$B$};
	\draw [leg] (star) |- ++ (\xdist,-\ydist) node (C) [Bbox] {$C$};
	\draw [leg] (B) -- ++(\xdist,0) node (U) [Rbox]{$U$};
	\draw [leg] (C) -- ++(\xdist,0) node (V) [Rbox]{$V$};
	\draw [leg] (U) -- ++(\xdist,0) node (h) [Gbox]{$h^{(q)}$};
	\draw [leg] (V) -- ++(\xdist,0) node (m) [Gbox]{$m^{(q)}$};
\end{tikzpicture}%
\endpgfgraphicnamed
\caption{Blind matrix deconvolution: Measurement vectors in green, fixed operations in blue, and unknown signal in red. }
\label{fig:def_deconv}
\end{figure}

In this new problem, for known $B,C\in \L \left(\CC^N \to \CC^L \right)$
and real vectors $h^{(q)},m^{(q)}\in \RR^N$
with $q \in [Q]$, that are an input to the problem, we seek to reconstruct $U,V\in U(n)$ 
from the circular convolutions $y^{(q)} = w^{(q)}* x^{(q)}$ of $w^{(q)}$ and $x^{(q)}$, where now
\begin{equation}
\begin{aligned}
	w^{(q)} &= B U h^{(q)},
	\\
	x^{(q)} &= CV m^{(q)},
\end{aligned}
\end{equation}
see also Figure~\ref{fig:def_deconv}.
The observations are given by the $Q$ vectors $y^{(q)} = w^{(q)} \ast x^{(q)}$ or, equivalently, by
\begin{equation}
\begin{aligned}
	\hat y^{(q)} &= \hat w ^{(q)}\circ \hat x^{(q)}
	\\ 
	&= (FBUh^{(q)}) \circ (FCVm^{(q)}) \, ,
\end{aligned}
\end{equation}
where $F_{j,k} \coloneqq \e^{2\pi \i \, (j-1)(k-1)/N}/\sqrt{N}$ defines the Fourier transform $F$ and $(a \circ b)_j \coloneqq a_j b_j$ the Hadamard product of vector $a$ and $b$. Let us denote the $j$-th rows of $FB$ and $FC$ by $\hat b_j$ and $\hat c_j$, respectively. Then 
\begin{equation}
 y_j^{(q)} 
 = \hat b_j^T Uh^{(q)}\, \hat c_j V m^{(q)}
 =
 \Tr(E_j\, U \rho^{(q)} V^T) 
\end{equation}
with the unit rank matrices 
$E_j \coloneqq \hat c_j^T \hat b_j$ and $\rho^{(q)} \coloneqq h^{(q)} m^{(q)^T}$. 

Indeed, this is precisely a problem of the form discussed here, 
\begin{equation}
y_l^{(q)} = \langle E_l\ad, M(\rho^{(q)} )\rangle 
\end{equation}
with $\V = \W=\CC^n$ and
\begin{equation}
	M(X) = U X \overline V.
\end{equation}
Up to a complex phase factor, $U$ and $V$ can be trivially reconstructed from $M$. 
That is to say, a matrix version of blind deconvolution can readily be cast into the
form of problems considered in this work. 
Numerically, we find a recovery from few samples and that the diamond norm reconstruction outperforms the nuclear norm based reconstruction from Ref.\ \cite{AhmRecRom12} adapted to our setting; see Figure~\ref{fig:deconvolution}. 
Many practical applications of this problem are conceivable: The reconstruction of an unknown drift of a polarization degree of freedom in a channel problem is only one of the many natural ramifications of this setup. 

\section{Proofs}
\label{sec:proofs}
In this section, we prove Proposition~\ref{prop:bounds_on_diamond_norm} and an extension of Theorem \ref{thm:extremality}. 
In order to do so, we first define a generalization of the sign matrix to matrices that are not necessarily Hermitian. This will give rise to the left and right absolute values of arbitrary matrices. Then we introduce SDPs, complementary slackness, and state the SDP for the square norm in standard form. Combining all these concepts, this section cumulates in the proofs of Proposition~\ref{prop:bounds_on_diamond_norm} and Theorem~\ref{thm:extremality}. 

\paragraph{Remark:}
After publication of this work a direct proof of Theorem \ref{thm:extremality}, which is not based on SDPs, has been found \cite{MicKliKue16}.

\subsection{Auxiliary statements}
The singular value decomposition of a matrix $X \in \L\left( \CC^n \right)$ is 
\begin{equation}
X = U \Sigma V\ad, \label{eq:svd}
\end{equation}
where $U,V \in \U(n)$ are unitaries and $\Sigma\in \Pos(\CC^n)$ is positive-semidefinite and diagonal. This decomposition allows one to define a ``sign matrix'' of $X$:

\begin{definition}[Sign matrix] \label{def:sgn}
For any matrix $X \in \L \left( \CC^n \right)$ with singular value decomposition \eqref{eq:svd} we define its \emph{sign matrix} to be $S_X \coloneqq V U\ad$.
\end{definition}

Note that the sign matrix is in general not unique, but always unitary and it obeys
\begin{align}
X S_X &= 
U \Sigma U\ad = \sqrt{X X\ad},   \label{eq:sgnaux1}\\
X\ad S_X\ad &= 
V \Sigma V\ad = \sqrt{X\ad X}. \label{eq:sgnaux2} 
\end{align}
Therefore, $S_X$ indeed generalizes the sign-matrix $\sign(X)$ (which is defined exclusively for Hermitian matrices).

The following auxiliary statement will be required later on and follows from a Schur complement rule.

\begin{lemma}\label{lem:block_matrix_psd1}
For every $A\in \L(\V\to \W)$, one has
\begin{equation}
\begin{pmatrix}
\snorm{A} \1_\W & \pm A \\
\pm A\ad & \snorm{A} \1_\V
\end{pmatrix} \succeq 0  \, .\label{eq:block_matrix_psd1} 
\end{equation}
\end{lemma}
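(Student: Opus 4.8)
The statement is elementary and follows, as anticipated, via the Schur complement rule. First I would reduce the two sign cases to one: conjugating the block operator by the unitary $\diag(\1_\W,-\1_\V)$ interchanges the $+A$ and $-A$ versions while preserving positive semidefiniteness, so it suffices to establish
\[
\begin{pmatrix} \snorm{A}\1_\W & A \\ A\ad & \snorm{A}\1_\V \end{pmatrix} \succeq 0 .
\]

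If $A\neq 0$, then $\snorm{A}>0$ and the pivot block $\snorm{A}\1_\W$ is positive definite, so the Schur complement criterion applies in its clean ``if and only if'' form: the block operator is positive semidefinite exactly when its Schur complement
\[
\snorm{A}\1_\V - A\ad\bigl(\snorm{A}\1_\W\bigr)^{-1}A = \snorm{A}\1_\V - \tkw{\snorm{A}}\,A\ad A
\]
is positive semidefinite. Multiplying by $\snorm{A}>0$, this is equivalent to $\snorm{A}^2\,\1_\V \succeq A\ad A$, which holds because $\snorm{A}^2 = \snorm{A\ad A}$ is precisely the largest eigenvalue of the positive semidefinite operator $A\ad A$ (recall that $\snorm{\argdot}$ is the spectral norm, Eq.~\eqref{eq:snrom_def}). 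The degenerate case $A=0$ is immediate, as the block operator is then the zero operator.

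I do not expect a genuine obstacle here; the only point needing a word of care is that the sharp version of the Schur complement rule requires an invertible pivot, which is why $A=0$ is treated separately. If one prefers to avoid this case split, an equivalent route is to take a singular value decomposition $A = U\Sigma V\ad$ and conjugate the block operator by $\diag(U\ad,V\ad)$, bringing it into block-diagonal form with $2\times 2$ blocks $\left(\begin{smallmatrix}\snorm{A} & \sigma_i\\ \sigma_i & \snorm{A}\end{smallmatrix}\right)$ (together with scalar blocks $\snorm{A}\geq 0$ accounting for any dimension mismatch between $\V$ and $\W$); each such block is positive semidefinite since $0\le\sigma_i\le\snorm{A}$ makes both its trace and its determinant $\snorm{A}^2-\sigma_i^2$ nonnegative.
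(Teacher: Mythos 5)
Your proof is correct and follows exactly the route the paper indicates: the paper offers no written proof beyond the remark that the lemma ``follows from a Schur complement rule,'' and your Schur-complement argument (with the sign reduction, the careful treatment of the degenerate case $A=0$, and the equivalence to $\snorm{A}^2\1_\V \succeq A\ad A$) is precisely the intended justification. The alternative SVD route you sketch is a fine bonus but not needed.
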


\subsection{Semidefinite programming}\label{sec:SDPs}
Semidefinite programs (SDPs) are a class of optimization problems that can be evaluated efficiently with standard software, e.g. by using CVX \cite{cvx,GraBoy08}. 

\begin{definition}[Semidefinite program]\label{def:SDP}
A \emph{semidefinite program} is specified by a triple $\left( \Xi, C,D \right)$, where $C \in \Herm(\V)$ and $D \in \Herm(\W)$ are self-adjoint operators and 
$\Xi: \; \L(\V) \to \L(\W)$
is a hermiticity preserving linear map.
With such a triple, one associates a pair of optimization problems:
\begin{alignat}{2}
{\normalfont \textbf{Primal:}}\qquad&
  \maximize &\quad &\!\! \Tr(C Z ) \label{eq:primal}
  \\
  &\st &&  \Xi (Z) = D\, , \label{eq:Xi(X)exD}
  \\
  &&& Z \in \Pos(\V) \, ,\label{eq:Xsucceq0}
  \\
{\normalfont \textbf{Dual:}}\qquad &
  \minimize && \!\! \Tr(D Y ) \label{eq:dual}
  \\
  &\st && \Xi\ad (Y) \succeq C \, , \label{eq:Xiad(Y)succeqC}
  \\
  &&&  Y \in \Herm(\W) \, . 
\qquad 
\end{alignat}
$Z^\sharp\in \Herm(\V)$ is called \emph{primal feasible} if it satisfies Eq.~\eqref{eq:Xi(X)exD} and Eq.~\eqref{eq:Xsucceq0}. It is called \emph{optimal primal feasible} if, additionally, for $Z=Z^\sharp$ in Eq.~\eqref{eq:primal} the maximum is attained. Similarly, $Y^\sharp \in \Herm(\W)$ is called \emph{dual feasible} if it satisfies Eq.~\eqref{eq:Xiad(Y)succeqC} and \emph{optimal dual feasible}, if for $Y = Y^\sharp$ the minimum in Eq.~\eqref{eq:dual} is attained. 
\end{definition}

SDPs that exactly reproduce the problem structure outlined in this definition are said to be in \emph{standard form}. But for specific SDPs, equivalent formulations might often be more handy. 

\emph{Weak duality} refers to the fact that the value of the primal SDP cannot be larger than the value of the dual SDP, i.e., that $\Tr(CZ) \leq \Tr(DY)$ for any primal feasible point $Z$ and dual feasible point $Y$. 
An SDP is said to satisfy \emph{strong duality} if the optimal values coincide, i.e., if for some optimal primal feasible and dual feasible points $Z^\sharp$ and $Y^\sharp$ it holds that $\Tr(CZ^\sharp) = \Tr(DY^\sharp)$. 
In fact, from a weak condition, called Slater's condition, strong duality follows. 

\begin{lemma}[Complementary slackness, see, e.g., Ref.~\cite{Bar02}]
\label{lem:comp_slack}
Suppose that $(\Xi,C,D)$ characterizes an SDP that obeys strong duality and let 
$Z^\sharp \in \Herm(\V)$ and $Y^\sharp \in \Herm(\W)$ denote optimal primal and dual feasible points, respectively (i.e. $\Tr\left(C Z^\sharp \right) = \Tr\left( D Y^\sharp \right)$).
Then
\begin{equation}
\Xi\ad(Y^\sharp) Z^\sharp 
= 
C Z^\sharp \quad \textrm{and} \quad \Xi (Z^\sharp) Y^\sharp = D Y^\sharp.
\end{equation}
\end{lemma}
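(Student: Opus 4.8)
The plan is to run the standard complementary-slackness argument, taking care only to respect the Hermiticity-preserving setting of Definition~\ref{def:SDP}. First I would set up the equality chain. Strong duality gives $\Tr(CZ^\sharp) = \Tr(DY^\sharp)$. Using the primal feasibility constraint \eqref{eq:Xi(X)exD}, namely $\Xi(Z^\sharp) = D$, together with the adjoint identity $\Tr\bigl(Y\,\Xi(Z)\bigr) = \Tr\bigl(\Xi\ad(Y)\,Z\bigr)$ — which holds for Hermitian $Y$ and $Z$ because $\Xi$, and hence $\Xi\ad$, is Hermiticity preserving, so no conjugates intervene — one rewrites $\Tr(DY^\sharp) = \Tr\bigl(\Xi(Z^\sharp)Y^\sharp\bigr) = \Tr\bigl(\Xi\ad(Y^\sharp)Z^\sharp\bigr)$. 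Hence $\Tr\bigl((\Xi\ad(Y^\sharp) - C)\,Z^\sharp\bigr) = 0$.

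Next I would exploit positivity. Dual feasibility \eqref{eq:Xiad(Y)succeqC} says $P \coloneqq \Xi\ad(Y^\sharp) - C \succeq 0$, and primal feasibility \eqref{eq:Xsucceq0} says $Z^\sharp \succeq 0$. The one genuine ingredient is the elementary fact that $\Tr(PQ) = 0$ forces $PQ = 0$ whenever $P,Q \succeq 0$: writing $\Tr(PQ) = \Tr\bigl((Q^{1/2}P^{1/2})\ad(Q^{1/2}P^{1/2})\bigr) = \fnorm{Q^{1/2}P^{1/2}}^2$ shows $Q^{1/2}P^{1/2} = 0$, so $QP = 0$ and therefore $PQ = (QP)\ad = 0$. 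Applying this with $Q = Z^\sharp$ yields $(\Xi\ad(Y^\sharp) - C)Z^\sharp = 0$, i.e.\ $\Xi\ad(Y^\sharp)Z^\sharp = CZ^\sharp$, the first claimed identity. The second identity $\Xi(Z^\sharp)Y^\sharp = DY^\sharp$ is then immediate, since $\Xi(Z^\sharp) = D$ by primal feasibility \eqref{eq:Xi(X)exD}.

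I do not expect a real obstacle here: the argument is just bookkeeping with the hypotheses plus the $\Tr(PQ)=0 \Rightarrow PQ=0$ lemma for positive semidefinite operators, which is the only step that uses anything beyond linearity and the defining constraints. The sole point deserving a line of care is verifying that Hermiticity preservation passes from $\Xi$ to $\Xi\ad$, so that the adjoint can be moved across the trace without introducing adjoints on $Y^\sharp$ or $Z^\sharp$; this is standard, but it is what makes the Hermitian (rather than merely complex) setting go through cleanly.
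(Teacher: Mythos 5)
Your proof is correct. Note that the paper does not actually prove this lemma --- it is quoted from Ref.~\cite{Bar02} without proof --- so there is no in-paper argument to compare against; your derivation is the standard one, and every step checks out: the chain $\Tr(DY^\sharp)=\Tr\bigl(\Xi(Z^\sharp)Y^\sharp\bigr)=\Tr\bigl(\Xi\ad(Y^\sharp)Z^\sharp\bigr)$ is legitimate because $\Xi\ad$ inherits Hermiticity preservation from $\Xi$, the reduction to $\Tr(PQ)=0\Rightarrow PQ=0$ for $P,Q\succeq 0$ is exactly the right key fact, and the second identity is indeed an immediate consequence of primal feasibility alone.
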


The following, somewhat exhaustive, classification of the square norm's SDP will be instrumental later on.

\begin{lemma}[Watrous' SDP for the diamond norm in standard \cite{Wat12}] 
\label{lem:Watrous_SDP}
\hfill
\label{thm:diamond}
Let $X \in \L(\W \otimes \V)$ be a bipartite operator. 
Then its square norm $\jnorm{X}$ can be evaluated by means of an SDP $(\Xi,C,D)$ that satisfies strong duality. In standard form, it is given by the block-wise defined matrices
\begin{align}\label{eq:Watrous_C}
C &=
\frac{\dim(\V)}{2}
\begin{pmatrix}
0 & 0 & 0 & 0 \\
0 & 0 & 0 & 0 \\
0 & 0 & 0 & X \\
0 & 0 &  X\ad & 0  \\
\end{pmatrix}
\in
\Herm \left( \V \oplus \V \oplus (\W \otimes \V) \oplus (\W \otimes \V) \right) ,
\\
D &= 
\begin{pmatrix}
1 & 0  &  0 &  0\\
0 & 1 & 0 & 0 \\
0 & 0 & 0 & 0 \\
0 & 0 & 0 & 0 
\end{pmatrix}
\in 
\Herm \left( \CC \oplus \CC \oplus (\W \otimes \V) \oplus (\W \otimes \V) \right) \, ,
\end{align}
where the zeros are zero vectors or operators of appropriate dimension. 
Finally, the map  
\begin{equation} 
 \begin{split}
\Xi: \Herm\bigl( (\V \oplus \V \oplus (\W \otimes \V) \oplus (\W \otimes \V ) \bigr)
\to
\Herm \bigl( \CC \oplus \CC \oplus (\W \otimes \V ) \oplus (\W \otimes \V) \bigr)
 \end{split}
\end{equation}
acts as
\begin{align}
\Xi &
\begin{pmatrix}
W_0 & \cdot & \cdot & \cdot 
\\
\cdot & W_1 & \cdot & \cdot 
\\
\cdot & \cdot & Z_0 & \cdot \\
\cdot & \cdot & \cdot & Z_1
\end{pmatrix}
= 
\begin{pmatrix}
\Tr(W_0) & 0  &  0 &  0\\
0 & \Tr(W_1) & 0 &  0 \\
0 & 0 & Z_0 - \1_\W \otimes W_0& 0 \\
0 & 0 & 0 & Z_1 - \1_\W \otimes W_1
\end{pmatrix} 
\end{align}
and has an adjoint map given by
\begin{align}\label{eq:Xi_ad}
\Xi\ad &
\begin{pmatrix}
\lambda_0 & \cdot & \cdot & \cdot \\
\cdot & \lambda_1 & \cdot & \cdot \\
\cdot & \cdot & Y_0 & \cdot \\
\cdot & \cdot & \cdot & Y_1
\end{pmatrix}
=
\begin{pmatrix}
\lambda_0 \1_\V{} - \Tr_{\W} \left( Y_0 \right)
& 0
& 0
& 0
\\
0 & \lambda_1 \1_\V - \Tr_{\W} \left( Y_1 \right)&0 & 0
\\
0 & 0 & Y_0 &  0  \\
0 & 0 &  0 & Y_1
\\
\end{pmatrix} .
\end{align}
\end{lemma}

Lemma~\ref{lem:Watrous_SDP} presents an SDP for the square norm in standard form. Although this standard form is going to be important for our proofs, it is somewhat unwieldy. 
Fortunately, elementary modifications 
allow \cite{Wat12} to reduce the SDP to the following pair.
\begin{align}
\label{eq:Watrous_primal}
&
\begin{array}{rll}
	\multicolumn{1}{l}{\textbf{Primal:}}&&
	\\ 
	 \jnorm{X} =& \max & \kw 2 \Tr(X Z) + \kw 2\Tr( X\ad Z\ad)
	\\[.5em]
	&\st & 
		\begin{pmatrix}
		\1_\W \otimes \rho & Z \\ Z\ad & \1_\W \otimes \sigma
		\end{pmatrix}
	\succeq 0\, , \hspace{1cm} \quad
	\\[1em]
	&& \Tr(\rho) = \Tr(\sigma) = \dim(\V) \, ,
	\\[.2em]
	&& \rho,\sigma \in \Pos(\V)\, ,
	\\[.2em]
	&& Z \in \L(\W\otimes \V)   \,   ,
\end{array}
\\[1em]
\label{eq:Watrous_SDP_dual}
&
\begin{array}{rll}
	\multicolumn{1}{l}{\textbf{Dual:}}&&
	\\
	 \jnorm{X}
	=& \min & \frac{\dim(\V)}{2}\Bigl(\snorm{\Tr_\W(Y)} + \snorm{\Tr_\W(Z)}\Bigr)
	\\[.5em]
	&  \st &  
		\begin{pmatrix}
		Y & -X \\ -X\ad & Z
		\end{pmatrix}
	\succeq 0 \, ,
	\\[1em]
	&& Y,Z \in \Pos(\W\otimes \V) \, .
\end{array}
\end{align}
This simplified SDP pair for the square norm comes in handy for establishing the final claim in Proposition~\ref{prop:bounds_on_diamond_norm}. 
For Hermitian matrices, the first two bounds presented there were already established in Ref.\ 
\cite[Lemma 7]{WalFla14}. Here, we show that an analogous strategy remains valid for matrices that need not be Hermitian.

\begin{proof}[Proof of Proposition~\ref{prop:bounds_on_diamond_norm}]
Let us start with recalling the variational definition \eqref{eq:jnorm_variational} of the square norm:
\begin{equation}\label{eq:jnorm_def2}
 \jnorm{X} \coloneqq
 \max\{\nnorm{(\1_\W \otimes A)X(\1_\W \otimes B)}:\
	   A,B \in \L(\V), \, 
	   \fnorm{A}=\fnorm{B} = \sqrt{\dim(\V)} \}.
\end{equation}
As already mentioned, inserting $A=B=\1$ into Eq.~\eqref{eq:jnorm_def2} establishes the lower bound \eqref{eq:nnorm_leq_jnorm} ($\nnorm{X} \leq \jnorm{X}$).

A generalized version of H\"older's inequality and $\snorm{\1_\W \otimes A} = \snorm{\1_\W} \snorm{A} \leq \fnorm{A}$ assures
\begin{align}
 \nnorm{(\1_\W \otimes A)X(\1_\W \otimes B)}
\leq &
\snorm{\1_\W \otimes A} \nnorm{X} \snorm{\1_\W \otimes B} \nonumber
\\
\leq &
\fnorm{A} \fnorm{B} \nnorm{X}
\end{align}
for any $A,B \in \L (\V)$ and $X \in \L (\W \otimes \V)$.
Inserting this bound into Eq.~\eqref{eq:jnorm_def2} results in
$\jnorm{X} \leq \dim (\V)\nnorm{X}$, which is the second bound \eqref{eq:jnorm_leq_nnorm}. 

The final bound ($\jnorm{X} \leq \dim (\W \otimes \V) \snorm{X}$) can be proved similarly using a generalized version of H\"older's inequality. 
However, in order to demonstrate the usefulness of SDPs in this context, we provide a different proof. 
For this purpose, we consider the simplified version of the square norm's dual SDP \eqref{eq:Watrous_SDP_dual}. 
Lemma~\ref{lem:block_matrix_psd1} assures that setting $Y = Z = \snorm{X} \1_{\W \otimes \V}$ results in a feasible point of this program.
Inserting this point into the objective function yields a value of $\dim (\W) \dim (\V) \snorm{X}$, because $\snorm{ \Tr_\W \left( \1_{W \otimes \V} \right)} = \snorm{ \dim (\W) \1_\V} = \dim (\W)$.
The bound follows from this value and the structure of the optimization problem \eqref{eq:Watrous_SDP_dual}.
\end{proof}

\subsection{Proof of Theorem~\ref{thm:extremality}}
In this section, we prove an extension of Theorem~\ref{thm:extremality}. In particular, this more general result relates Theorem~\ref{thm:extremality} to optimal feasible points in Watrous' SDP from Lemma~\ref{lem:Watrous_SDP}. 
These will contain the generalizations of the sign matrix from Definition~\ref{def:sgn}.

\newcommand{\Zsharp}{
	\begin{pmatrix}
		\1_{\V} & 0 & 0 & 0 \\
		0 & \1_\V & 0 & 0 \\
		0 & 0 & \1_{\W \otimes \V} & S_{X}\ad \\
		0 & 0 &  S_{X} & \1_{\W \otimes \V}  \\
	\end{pmatrix}
}

\begin{theorem}[Extremal operators as optimal feasible points] \label{thm:extremal}
 Let $X \in \L(\W\otimes \V)$ be a bipartite operator and set $n\coloneqq \dim(\V)$. 
 Then the points \ref{item:extremal}--\ref{item:reduced_eq} are equivalent:\begin{enumerate}[label=\roman*)]
  \item \label{item:extremal}
  $X$ satisfies
  \begin{equation}\label{eq:extremality_assumption}
    \jnorm{X} =  \nnorm{X},
  \end{equation}
  \item
  \label{item:Xsharp}
  Some $Z^\sharp\in \Herm((\V\oplus \V \oplus (\W \otimes \V) \oplus (\W \otimes \V))$ of the form
  \begin{equation}
    Z^\sharp \coloneqq \kw{n} \Zsharp
\end{equation}
is a primal optimal feasible point for Watrous' SDP $\left( \Xi, C,D \right)$ from Lemma~\ref{lem:Watrous_SDP}.
 \item 
 \label{item:Ysharp}
  Some $Y^\sharp\in \Herm \left( \CC \oplus \CC \oplus (\W \otimes \V ) \oplus (\W \oplus \V) \right)$ of the form
  \begin{equation}
  \label{eq:Y_sharp}
  Y^\sharp
  = 
  \frac{1}{2}\begin{pmatrix}
  \nnorm{X} & \cdot & \cdot & \cdot \\
  \cdot &  \nnorm{X} & \cdot & \cdot \\
  \cdot & \cdot & n \, \sqrt{ X X\ad} & \cdot \\
  \cdot & \cdot & \cdot & n \, \sqrt{ X\ad X}
  \end{pmatrix}
  \end{equation}
  is a dual optimal feasible point for Watrous' SDP $\left( \Xi, C,D \right)$ from Lemma~\ref{lem:Watrous_SDP}.

 \item 
 \label{item:reduced_propto}
 $X$ satisfies 
 \begin{equation}
   \Tr_{\W} \left( \sqrt{ X X\ad} \right)
    \propto 
    \Tr_{\W} \left( \sqrt{ X\ad X} \right)
    \propto
     \1_{\V} \, .
 \end{equation}
 
 \item 
 \label{item:reduced_eq}
 $X$ satisfies 
 \begin{equation}
	 \label{eq:equality_proof}
   \Tr_{\W} \left( \sqrt{ X X\ad} \right)
    = 
    \Tr_{\W} \left( \sqrt{ X\ad X} \right)
    = 
    \frac{\nnorm{X}}{n} \,  \1_{\V} \, .
 \end{equation}
\end{enumerate}
\end{theorem}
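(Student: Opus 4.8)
The plan is to establish the cycle $(\ref{item:extremal})\Leftrightarrow(\ref{item:Xsharp})\Rightarrow(\ref{item:reduced_propto})\Rightarrow(\ref{item:reduced_eq})\Rightarrow(\ref{item:Ysharp})\Rightarrow(\ref{item:extremal})$, working throughout with Watrous' standard-form SDP $(\Xi,C,D)$ from Lemma~\ref{lem:Watrous_SDP} and the sign-matrix identities \eqref{eq:sgnaux1}--\eqref{eq:sgnaux2}. Two computations carry the whole argument. First, the candidate primal point $Z^\sharp$ appearing in $(\ref{item:Xsharp})$ is \emph{always} primal feasible and has objective value exactly $\nnorm{X}$; since the primal value equals $\jnorm{X}$, this makes $Z^\sharp$ primal optimal precisely when $\jnorm{X}=\nnorm{X}$, which is the equivalence $(\ref{item:extremal})\Leftrightarrow(\ref{item:Xsharp})$. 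Second, feeding $Z^\sharp$ into complementary slackness against an arbitrary dual optimum pins that optimum down to the block form \eqref{eq:Y_sharp} and forces the reduced-trace structure \eqref{eq:equality_proof}.

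For $(\ref{item:extremal})\Leftrightarrow(\ref{item:Xsharp})$: substituting $W_0=W_1=\tfrac1n\1_\V$ and $Z_0=Z_1=\tfrac1n\1_{\W\otimes\V}$ into the block formula for $\Xi$ yields $\Tr(W_0)=\Tr(W_1)=1$ and $Z_i-\1_\W\otimes W_i=0$, so $\Xi(Z^\sharp)=D$; positivity $Z^\sharp\succeq0$ follows from unitarity of $S_X$ via a Schur complement, since $\left(\begin{smallmatrix}\1 & S_X\ad\\ S_X & \1\end{smallmatrix}\right)\succeq0$. The objective is $\Tr(CZ^\sharp)=\tfrac12\bigl(\Tr(XS_X)+\Tr(X\ad S_X\ad)\bigr)$, and \eqref{eq:sgnaux1}--\eqref{eq:sgnaux2} identify $XS_X=\sqrt{XX\ad}$ and $X\ad S_X\ad=\sqrt{X\ad X}$, each of trace $\nnorm{X}$, so $\Tr(CZ^\sharp)=\nnorm{X}$. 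Hence $Z^\sharp$ is primal optimal iff $\nnorm{X}=\jnorm{X}$.

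For $(\ref{item:Xsharp})\Rightarrow(\ref{item:reduced_propto})\Rightarrow(\ref{item:reduced_eq})$: strong duality (Lemma~\ref{lem:Watrous_SDP}) supplies a dual optimum $Y^\sharp$; write its diagonal blocks as $\lambda_0,\lambda_1\in\RR$ and $Y_0,Y_1\in\Herm(\W\otimes\V)$ — the off-diagonal blocks are immaterial because both $\Xi\ad$ in \eqref{eq:Xi_ad} and the dual objective $\Tr(D\argdot)$ see only the diagonal. Complementary slackness (Lemma~\ref{lem:comp_slack}) gives $\Xi\ad(Y^\sharp)Z^\sharp=CZ^\sharp$; matching the four diagonal blocks forces $Y_0=\tfrac n2\sqrt{XX\ad}$, $Y_1=\tfrac n2\sqrt{X\ad X}$, $\Tr_\W(Y_0)=\lambda_0\1_\V$ and $\Tr_\W(Y_1)=\lambda_1\1_\V$, while the off-diagonal blocks match automatically using $\sqrt{XX\ad}\,S_X\ad=X$. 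Combining these gives $\Tr_\W(\sqrt{XX\ad})=\tfrac{2\lambda_0}n\1_\V$ and $\Tr_\W(\sqrt{X\ad X})=\tfrac{2\lambda_1}n\1_\V$, i.e.\ $(\ref{item:reduced_propto})$; taking $\Tr$ of both sides and using $\Tr\bigl(\Tr_\W(\sqrt{XX\ad})\bigr)=\Tr\sqrt{XX\ad}=\nnorm{X}$ forces $\lambda_0=\lambda_1=\nnorm{X}/2$, which upgrades this to \eqref{eq:equality_proof} and, incidentally, shows $Y^\sharp$ already has the form \eqref{eq:Y_sharp}.

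For the remaining arcs: given $(\ref{item:reduced_eq})$, define $Y^\sharp$ by \eqref{eq:Y_sharp} (off-diagonal blocks zero); via \eqref{eq:Xi_ad} the dual constraint $\Xi\ad(Y^\sharp)\succeq C$ reduces to $\tfrac{\nnorm X}2\1_\V\succeq\tfrac n2\Tr_\W(\sqrt{XX\ad})$, the analogue for $X\ad X$ (both equalities by \eqref{eq:equality_proof}), and $\left(\begin{smallmatrix}\sqrt{XX\ad} & -X\\ -X\ad & \sqrt{X\ad X}\end{smallmatrix}\right)\succeq0$, which holds for every $X$ by passing to its singular value decomposition. Thus $Y^\sharp$ is dual feasible with $\Tr(DY^\sharp)=\lambda_0+\lambda_1=\nnorm{X}$, and weak duality together with \eqref{eq:nnorm_leq_jnorm} gives $\jnorm{X}\le\nnorm{X}\le\jnorm{X}$, so $\jnorm{X}=\nnorm{X}$, $Y^\sharp$ is optimal, and $(\ref{item:Ysharp})$ as well as $(\ref{item:extremal})$ hold; the same weak-duality sandwich gives $(\ref{item:Ysharp})\Rightarrow(\ref{item:extremal})$ directly, and the simplified dual \eqref{eq:Watrous_SDP_dual} with $Y=\sqrt{XX\ad}$, $Z=\sqrt{X\ad X}$ offers a more transparent route to the same conclusion. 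I expect the only real difficulty to be organizational: the standard-form SDP of Lemma~\ref{lem:Watrous_SDP} is bulky, so one must be disciplined about which block of each $4$-by-$4$ block matrix carries information and discard the rest, and one must note explicitly that the dual optimum used in $(\ref{item:Xsharp})\Rightarrow(\ref{item:reduced_propto})$ need not be block-diagonal (harmless, since $\Xi\ad$ and $\Tr(D\argdot)$ ignore its off-diagonal blocks). Everything else reduces to the sign-matrix algebra $XS_X=\sqrt{XX\ad}$, $X\ad S_X\ad=\sqrt{X\ad X}$, $\sqrt{XX\ad}\,S_X\ad=X$ and the Schur-complement fact $\left(\begin{smallmatrix}\1 & S_X\ad\\ S_X & \1\end{smallmatrix}\right)\succeq0$.
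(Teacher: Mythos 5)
Your proposal is correct and follows essentially the same route as the paper's proof: the same primal candidate $Z^\sharp$ with objective value $\nnorm{X}$, the same complementary-slackness argument pinning down the dual optimum's diagonal blocks, and the same weak-duality sandwich (dual feasibility of $Y^\sharp$ plus the bound $\nnorm{X}\leq\jnorm{X}$) to close the cycle. The only differences are organizational --- you route the cycle as $(i)\Leftrightarrow(ii)\Rightarrow(iv)\Rightarrow(v)\Rightarrow(iii)\Rightarrow(i)$ where the paper extracts $(iii)$ and $(iv)$ jointly from $(ii)$, and you spell out the dual feasibility check (the SVD argument for the block matrix) that the paper leaves as ``direct computation'' --- so no substantive comparison is needed.
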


Similar to the actual SDP, the optimal feasible points presented in Theorem~\ref{thm:extremal} have simplified counterparts that correspond to optimal feasible points of the simplified SDPs \eqref{eq:Watrous_primal} and \eqref{eq:Watrous_SDP_dual}. For the sake of completeness, we present them in the following corollary.

\begin{corollary}
For any $X \in \L (\W \otimes \V)$,
optimal feasible points of the primal \eqref{eq:Watrous_primal} and the dual SDP \eqref{eq:Watrous_SDP_dual} for the square norm are given by the following.
\begin{align}
\text{Primal optimal feasible point:}  & \quad Z = S_X, \rho = \sigma = \1_\V \, ,
\\
\text{Dual optimal feasible point:}  & \quad Y = \sqrt{X X\ad}, \; Z = \sqrt{X\ad X} \, .
\end{align}
\end{corollary}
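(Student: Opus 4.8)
The plan is to obtain this as a transcription of Theorem~\ref{thm:extremal} into the simplified SDP pair. The SDPs \eqref{eq:Watrous_primal} and \eqref{eq:Watrous_SDP_dual} arise from the standard form $(\Xi,C,D)$ of Lemma~\ref{lem:Watrous_SDP} by the elementary reductions alluded to there (discarding the two scalar blocks on which $\Xi$ merely records a trace, folding $\Tr(W_0)=\Tr(W_1)=1$ into $\Tr(\rho)=\Tr(\sigma)=\dim(\V)$ after rescaling by $n=\dim(\V)$, and eliminating the forced identities $Z_i=\1_\W\otimes W_i$). Under this dictionary the optimal primal point $Z^\sharp$ of Theorem~\ref{thm:extremal} corresponds to $(Z=S_X,\ \rho=\sigma=\1_\V)$ and the optimal dual point $Y^\sharp$ corresponds to $(Y=\sqrt{XX\ad},\ Z=\sqrt{X\ad X})$; so the first step is to make this correspondence explicit and check that it sends feasible points to feasible points and preserves objective values.

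The route I would actually write out in full is the self-contained one: verify the two points directly. For the primal point, $\Tr(\1_\V)=\dim(\V)$ is immediate, and the block constraint $\bigl(\begin{smallmatrix}\1_{\W\otimes\V}&S_X\\ S_X\ad&\1_{\W\otimes\V}\end{smallmatrix}\bigr)\succeq0$ follows from Lemma~\ref{lem:block_matrix_psd1} applied to $A=S_X$, since $S_X$ is unitary and hence $\snorm{S_X}=1$. For the dual point, $\sqrt{XX\ad},\sqrt{X\ad X}\in\Pos(\W\otimes\V)$ by definition of the operator square root, and using $X=\sqrt{XX\ad}\,S_X\ad$, $X\ad=S_X\sqrt{XX\ad}$ and $\sqrt{X\ad X}=S_X\sqrt{XX\ad}S_X\ad$ (all consequences of Definition~\ref{def:sgn} together with Eqs.~\eqref{eq:sgnaux1}--\eqref{eq:sgnaux2}) one reads off the factorization
\[
\begin{pmatrix}\sqrt{XX\ad}&-X\\ -X\ad&\sqrt{X\ad X}\end{pmatrix}
=\begin{pmatrix}\1_{\W\otimes\V}\\ -S_X\end{pmatrix}\sqrt{XX\ad}\,\begin{pmatrix}\1_{\W\otimes\V}&-S_X\ad\end{pmatrix}\succeq0,
\]
which is manifestly positive semidefinite.

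Optimality then follows from weak duality once both points are shown to attain the same objective value. The primal objective at $(S_X,\1_\V,\1_\V)$ equals $\tfrac12\Tr(XS_X)+\tfrac12\Tr(X\ad S_X\ad)=\tfrac12\Tr(\sqrt{XX\ad})+\tfrac12\Tr(\sqrt{X\ad X})=\nnorm{X}$ by \eqref{eq:sgnaux1}--\eqref{eq:sgnaux2}. For the dual objective at $(\sqrt{XX\ad},\sqrt{X\ad X})$ I would invoke Theorem~\ref{thm:extremality}: since we are in the regime $\nnorm{X}=\jnorm{X}$, Eq.~\eqref{eq:equality_proof} gives $\Tr_\W(\sqrt{XX\ad})=\Tr_\W(\sqrt{X\ad X})=\tfrac{\nnorm{X}}{\dim(\V)}\1_\V$, so each spectral-norm term equals $\nnorm{X}/\dim(\V)$ and the objective is $\tfrac{\dim(\V)}{2}\cdot\tfrac{2\nnorm{X}}{\dim(\V)}=\nnorm{X}$. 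As both exhibited points are feasible and share the value $\nnorm{X}$, weak duality forces each to be optimal for its program. Note that feasibility of the two points needs no assumption on $X$; it is their optimality that is the simplified-SDP shadow of the equivalences in Theorem~\ref{thm:extremal} and therefore rests on the extremality relation.

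The only mildly delicate point is the constant-chasing in the first paragraph: keeping the factors $1/n$ and $n/2$ straight when passing between the standard-form SDP and the simplified pair, and confirming that the off-diagonal blocks of $Z^\sharp$ (which $\Xi$ ignores) are exactly what the positivity constraint in \eqref{eq:Watrous_primal} demands. Taking the direct route of the previous two paragraphs sidesteps this entirely, and there the proof contains essentially no obstacle — both verifications reduce to the two block-positivity facts above plus the sign-matrix identities, all of which are already available.
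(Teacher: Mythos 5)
Your proof is correct and, in substance, matches what the paper intends: the paper's own justification is a one-line deferral to Theorem~\ref{thm:extremal} via the reduction from the standard-form SDP to the pair \eqref{eq:Watrous_primal}--\eqref{eq:Watrous_SDP_dual}, and your first paragraph is exactly that argument. Your preferred direct route recapitulates, inside the simplified SDPs, the same computations that appear in the proof of Theorem~\ref{thm:extremal} (primal feasibility via Lemma~\ref{lem:block_matrix_psd1} with $\snorm{S_X}=1$, the objective evaluations via Eqs.~\eqref{eq:sgnaux1}--\eqref{eq:sgnaux2}, and weak duality to conclude optimality); the explicit rank factorization of the dual block matrix through $\sqrt{XX\ad}$ is a clean, self-contained replacement for the paper's implicit feasibility check. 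You are also right on the one point where care is needed: feasibility of both points holds for arbitrary $X$, but their optimality requires the extremality condition $\nnorm{X}=\jnorm{X}$ (otherwise the primal point attains only $\nnorm{X}<\jnorm{X}$), so the corollary's literal ``for any $X$'' must be read under the standing hypothesis of Theorem~\ref{thm:extremal}. No gaps.
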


This statement follows straightforwardly from Theorem~\ref{thm:extremal} by considering the reduced formulations \eqref{eq:Watrous_primal} and \eqref{eq:Watrous_SDP_dual} of the SDP from Lemma~\ref{lem:Watrous_SDP}. 

\begin{proof}[Proof of Theorem~\ref{thm:extremal}]
For $X=0$ all statements are evident. From now on, we assume that $X\neq 0$. 

\begin{proof}[Proof of \ref{item:extremal} $\Rightarrow$ \ref{item:Xsharp}]
Note that $Z^\sharp \succeq 0$ by Lemma~\ref{lem:block_matrix_psd1}. 
Straightforward evaluation of $\Xi (Z^\sharp)$ from Lemma~\ref{lem:Watrous_SDP} reveals that $Z^\sharp$ is indeed a primal feasible point:
\begin{equation*}
\begin{aligned}
\Xi \bigl( Z^\sharp \bigr)
&= 
\kw{n}
\begin{pmatrix}
\Tr(\1_{\V}) & 0  
&  0 &  0
\\
0 & \Tr(\1_{\V}) & 0 &  0 
\\
0 & 0 & 
\1_{\W \otimes \V} - \1_\W \otimes  \1_\V& 0 
\\
0 & 0 & 0
& \1_{\W \otimes \V} - \1_\W \otimes  \1_\V
\end{pmatrix} 
\\
&=
\begin{pmatrix}
1 & 0  & 0 &  0\\
0 & 1 & 0 & 0 \\
0 & 0 & 0  & 0 \\
0 & 0 & 0 & 0
\end{pmatrix}  
= D \, .
\end{aligned}
\end{equation*}
In order to show optimality,  
we evaluate the primal SDP's objective function given by $C$ in Eq.~\eqref{eq:Watrous_C}. Employing formulas \eqref{eq:sgnaux1} and \eqref{eq:sgnaux2} to express the absolute values of $X$, we obtain
\begin{equation}
\begin{aligned}
\Tr \left( C Z^\sharp \right)
&= 
\frac{n}{2}\Tr \Bigl[
\begin{pmatrix}
	0 & 0 & 0 & 0 \\
	0 & 0 & 0 & 0 \\
	0 & 0 & 0 & X \\
	0 & 0 &  X\ad & 0  \\
\end{pmatrix}
\cdot\frac{1}{n}\Zsharp \Bigr]
\\
&= \frac{1}{2} \left( \Tr \left( X S_{X} \right)
+ \Tr \left(X\ad S_{X}\ad \right) \right) \\
&= \frac{1}{2} \left( \Tr \left( \sqrt{X X\ad} \right) + \Tr \left( \sqrt{X\ad X}\right)  \right) \\
&= \frac{1}{2} \left( \nnorm{X} + \nnorm{X} \right)
= \nnorm{X} \, .
\end{aligned}
\end{equation}
By assumption \eqref{eq:extremality_assumption}, this is indeed optimal.
\end{proof}

\begin{proof}[Proof of \ref{item:Xsharp} $\Rightarrow$ \ref{item:Ysharp} and \ref{item:reduced_propto}]
\renewcommand{\qedsymbol}{}
Strong duality of Watrous' SDP from Lemma~\ref{lem:Watrous_SDP} assures that an optimal dual solution $Y^\sharp$ exists and that complementary slackness holds. Since $\Xi\ad$ from Eq.~\eqref{eq:Xi_ad} does not depend on block off-diagonal terms, optimal feasibility only depends on the block diagonal parts. Hence, we write $Y^\sharp$ as 
\begin{equation}
Y^\sharp
=
\begin{pmatrix}
\lambda_0 & \cdot & \cdot & \cdot \\
\cdot & \lambda_1 & \cdot & \cdot \\
\cdot & \cdot & Y_0 & \cdot \\
\cdot & \cdot & \cdot & Y_1
\end{pmatrix} .
\end{equation}
Complementary slackness (Lemma~\ref{lem:comp_slack}) implies that 
\begin{equation*}
\begin{aligned}
\Xi\ad \bigl( Y^\sharp \bigr) Z^\sharp
&
=
\frac{1}{n}
\begin{pmatrix}
\lambda_0 \1_\V - \Tr_{\W} \left( Y_0 \right)& 0 & 0 & 0 \\
0 & \lambda_1 \1_\V - \Tr_{\W} \left( Y_1 \right)& 0 & 0 \\
0 & 0 & Y_0 &  0  \\
0 & 0 &  0 & Y_1\\
\end{pmatrix}
\cdot 
\Zsharp
\\
&
=
\frac{1}{n}
\begin{pmatrix}
\lambda_0 \1_\V - \Tr_{\W} \left( Y_0 \right)& 0 & 0 & 0 \\
0 & \lambda_1 \1_\V - \Tr_{\W} \left( Y_1 \right)& 0 & 0 \\
0 & 0 & Y_0 &  Y_0 \; S_{X}\ad \\
0 & 0 &  Y_1 \; S_{X} & Y_1\\
\end{pmatrix} \label{eq:slackness1}
\end{aligned}
\end{equation*}
and 
\begin{equation}
C Z^\sharp
=
\frac{1}{2}
\begin{pmatrix}
	0 & 0 & 0 & 0 \\
	0 & 0 & 0 & 0 \\
	0 & 0 & X S_{X} & X \\
	0 & 0 &  X\ad & X\ad S_{X}\ad\\
\end{pmatrix}  \label{eq:slackness2}
\end{equation}
must equal each other. This in turn demands that 
\begin{align}
Y_0 &= \frac{n}{2} X S_{X} = \frac{n}{2} \sqrt{ X X\ad} \quad \textrm{as well as} \\
Y_1 &= \frac{n}{2} X\ad S_{X}\ad = \frac{n}{2} \sqrt{X\ad X},
\end{align}
where we have once more employed identities \eqref{eq:sgnaux1} and \eqref{eq:sgnaux2} for $S_X$ to obtain the absolute values of $X$. Equality of \eqref{eq:slackness1} and \eqref{eq:slackness2} in the first two diagonal entries (also guaranteed by complementary slackness) furthermore assures
\begin{align}
\lambda_0 \1_{\V} - \Tr_\W \left( Y_0 \right)
&= \lambda_0 \1_{\V} - \frac{n}{2} \Tr_\W \left( \sqrt{X X\ad} \right)
= 0 
\intertext{and}
\lambda_1 \1_{\V} - \Tr_\W \left( Y_1 \right)
&= \lambda_0 \1_{\V} - \frac{n}{2} \Tr_\W \left( \sqrt{X\ad X} \right)
= 0. 
\end{align}
Hence, 
\begin{align}
  \lambda_0\,n &= \frac{n}{2} \nnorm{\Tr_\W(Y_0)} =\frac{n}{2} \nnorm{X}
  \quad \textrm{and}
  \\
  \lambda_1\,n &= \frac{n}{2} \nnorm{\Tr_\W(Y_1)} = \frac{n}{2} \nnorm{X} 
\end{align}
and both, \ref{item:Ysharp} and \ref{item:reduced_propto} follow. 
\end{proof}
\begin{proof}[Proof of \ref{item:reduced_propto} $\Rightarrow$ \ref{item:reduced_eq}]
\renewcommand{\qedsymbol}{}
Let $c_1, c_2 >0$ be constants such that
\begin{align}
\Tr_\W \left( \sqrt{ X X\ad} \right) 
&= c_1 \1_\V \quad \textrm{and} \label{eq:constant_aux1} \\
\Tr_\W \left( \sqrt{ X\ad X} \right) \label{eq:constant_aux2}
&= c_2 \1_\V.
\end{align}
Taking the trace of both equations and recognizing the nuclear norm reveals that
\begin{equation}
\begin{aligned}
  \nnorm{X} &= \Tr\Bigl(\sqrt{XX\ad} \Bigr) 
  \\
  &= \Tr \left( \Tr_\W \left( \sqrt{XX\ad} \right) \right) \nnorm{X} 
\\
  &= c_1 \Tr \left( \1_\V \right)
\\
  &= c_1\, n
\end{aligned}
\end{equation}
and, similarly, 
\begin{equation}
  \nnorm{X} = c_2\, n \, ,
\end{equation}
which proves the claimed implication. 
\end{proof}
\begin{proof}[Proof of \ref{item:reduced_eq} $\Rightarrow$ \ref{item:extremal}]
\renewcommand{\qedsymbol}{}
The crucial observation for this implication is that Assumption~\ref{item:reduced_eq}
alone assures that $Y^\sharp$ defined in Eq.~\eqref{eq:Y_sharp} with all off-diagonal blocks set to zero is a feasible point of Watrous' dual SDP, albeit not necessarily an optimal one. 
This claim is easily verified by direct computation. Inserting this dual feasible point into the SDP's objective function results in
\begin{equation*}
\begin{aligned}
\Tr \bigl[ D Y^\sharp \bigr]
&= 
\Tr\Bigl[
\begin{pmatrix}
1 & 0  &  0 &  0\\
0 & 1 & 0 &  0 \\
0 & 0 & 0 & 0 \\
0 & 0 & 0 & 0 
\end{pmatrix}
  \cdot \frac{1}{2}
  \begin{pmatrix}
  \nnorm{X} & 0  &  0 &  0\\
  0 & \nnorm{X} & 0 & 0 \\
  0 & 0 & n \sqrt{X X^\dagger} & 0 \\
  0 & 0 & 0 & n \sqrt{X^\dagger X} 
  \end{pmatrix}
\Bigr]
\\
&= \frac{1}{2} \bigl( \nnorm{X} + \nnorm{X} \bigr)
= \nnorm{X}.
\end{aligned}
\end{equation*}
Since every dual SDP corresponds to a constrained minimization, evaluating the dual objective function at any feasible point results in an upper bound on the optimal value. 
In our case, we obtain the upper bound $\jnorm{X} \leq \nnorm{X}$, which together with the converse bound from Proposition~\ref{prop:Gaussian_CS}, 
implies equality between the two.
\end{proof}
  
\end{proof}

\section{Discussion and outlook}

We conclude by mentioning several observations and research directions that may merit
further attention.

\paragraph*{Recovery guarantees}
In this work, we have shown that for matrices saturating the norm inequality 
\eqref{eq:nnorm_leq_jnorm}, recovery guarantees for square norm regularization are inherited from certain recovery guarantees for nuclear norm regularization. 
We give a geometric argument which makes it plausible that even a better performance can be expected and we identify it numerically. 
A promising route of future research is to develop methods allowing to prove recovery guarantees for the square norm directly. The tensorial character of the square norm is a challenge in such an endeavour that needs to be overcome and might also lead to new insights in other tensorial reconstruction problems. 

\paragraph*{Measurement errors}
In our analysis we considered reconstructed matrices $X^\mysquare_\eta$ and $X^\ast_\eta$ from Eqs.~\eqref{eq:Mdiamond} and \eqref{eq:Mast} that are required to be $\eta$-close to the ideal operator $X_0$. Such a reconstruction stably tolerates additive errors $\ev$ as in Eq.~\eqref{eq:measurement} as long as they obey $\fnorm{\ev} \leq \eta$. For operators $X_0$ satisfying the extremality \eqref{eq:equality} we prove that recovery guarantees for $X^\ast_\eta$ are inherited by $X^\mysquare_\eta$. 
A similar situation is true for the reconstruction of maps $M_0$ by means of diamond norm minimization.  For the idealized setting of noiseless measurements ($\ev =0$), we demonstrate numerically that often $\fnorm{M^\diamond_\eta - M_0}$ vanishes while $\fnorm{M^\ast_\eta - M_0}$ is large. 
A numerical analysis for the noisy case $\ev>0$ yields similar results as for $\ev=0$. For the noisy case the phase transition from having no recovery to almost always recovering the signal up to $\eta \gtrsim \fnorm{\ev}$ broadens equally for both diamond and nuclear norm regularization. 

\paragraph*{Partial derandomizations}
While initial theoretical results often rely on measurements that
follow a Gaussian distribution, later on significant effort has been
put into derandomizing the measurement process. On the one hand,
recovery guarantees for structured measurements were proven \cite{CanStroVor13}. On the other hand, also the distributions from which the measurements are drawn were partially derandomized \cite{GroKraKue15_partial,KueRauTer15,KabKueRau15} (see also Section~\ref{sec:application_low_rank}), relying on above mentioned $t$-designs. The later methods rely on an analysis of the measurement map's descent cone. Hence, such recovery guarantees for partially derandomized measurements are also inherited by our reconstruction via diamond norm minimization.
In a similar setting, a partial derandomization of the random unitaries used as part of the measurements for the retrieval of unitary basis changes (Section~\ref{sec:UV}) and for quantum process tomography (Section~\ref{sec:quantum}) seems very promising. Here, structural insights \cite{DanCleEme09,GroAudEis07,Zhu15,Web15} on unitary designs could be used in future work. 

\paragraph*{Improvement from structured measurements}
We numerically performed the reconstruction of unitary basis changes in Section~\ref{sec:UV} for two different measurement settings: Gaussian measurements and certain structured measurements. For the nuclear norm, the reconstruction from Gaussian measurements performed slightly better than the one from structured measurements, just as expected. Perhaps surprisingly, we observed the converse for the diamond norm reconstructions. Here, the structure of the measurements seems to be favourable for the reconstruction process. This observation motivates the search for recovery guarantees for diamond norm reconstruction with structured measurements. Such structured measurements are also crucial for the quantum process tomography in Section~\ref{sec:quantum} and blind matrix convolution in Section~\ref{sec:deconv}. 

\paragraph*{CPT as a constraint in the quantum channel reconstructions}
A map $M \in \M(\V,\W)$ is a quantum channel if and only if
\begin{equation}
  M\ad (\1_\W ) = \1_\V  
  \quad \text{and} \quad
  J(M) \succeq 0 \, .
\end{equation}
When aiming at reconstructing quantum channels, 
these additional constraints can, in principle, be included in the SDPs \eqref{eq:Mdiamond} and \eqref{eq:Mast} for the diamond norm and nuclear norm reconstructions. Doing so leads to a significant overhead in the numerical reconstruction process. Numerically, one can observe that the recovery success of the diamond norm reconstruction \eqref{eq:Mdiamond} is almost unchanged, while the nuclear norm reconstruction \eqref{eq:Mast} performs significantly better. In fact, it seems to perform roughly as well as the diamond norm reconstruction when these constraints are included in the SDP \eqref{eq:Mast}.
In this sense, the CPT structure can be used in the nuclear norm reconstruction at the expense of a longer computation time to reduce the number of measurements, while in the diamond norm reconstruction the CPT structure is already inbuilt. The run-time of the diamond norm reconstruction and the nuclear norm reconstruction are practically the same for a given number of measurements and scale polynomially with the number of constraints. Therefore, the diamond norm reconstruction can help to render larger quantum systems accessible to quantum process tomography.

\paragraph*{The robust rank null space property (NSP)}
This property is a certain norm inequality giving rise to yet a stronger version of uniform recovery guarantees \cite{KabKueRau15}. The bound \eqref{eq:nnorm_leq_jnorm} implies that if the NSP is fulfilled for the nuclear norm then it is also fulfilled with the diamond norm. 
Here, it is certainly a promising research endeavour to also search for recovery guarantees based on an NSP. 
We expect that besides that stability toward measurement noise such recovery guarantees can also feature robustness against violations of the model assumptions of low rank and, possibly, the saturation of the norm bound \eqref{eq:nnorm_leq_jnorm}.

\paragraph*{The noise parameter \texorpdfstring{$\eta$}{eta} in the reconstruction and other conceivable optimization methods}
Our reconstruction schemes are versions of the one in Eq.~\eqref{eq:general_reconstruction}. Here, an upper bound $\eta$ on the noise level needs to be given as an input to the reconstruction procedure. 
In applications one can, however, not always expect to have a good upper bound. 
Here, one can potentially resolve to other optimization methods. Instead of the optimization \eqref{eq:general_reconstruction}, the two following types of reconstructions are commonly used in compressed sensing. 

The first one is given by
\begin{equation}
   x^{f} = \argmin_x \{ \fnorm{\A(x) - y} \ : f(x)\leq \tau\} \, ,
\end{equation} 
where $\tau>0$ is some parameter, which needs to be chosen. Denoting the original signal by $x_0$, one knows $f(x_0)$ in many applications such as in those presented in this work. Hence, in this case, one can choose $\tau = f(x_0)$. 

The second common optimization method is given by
\begin{equation}
   x^{f,\lambda} = \argmin_x \{ \lambda\,f(x)+ \fnorm{\A(x) - y}\}
\end{equation}
for some fixed value $\lambda>0$. 

Solutions of the three optimization methods can be related to each other, which is made precise by \cite[Proposition~3.2]{FouRau13}. This proposition is formulated for a more specialized situation but it is clear that it also holds in greater generality. 

We leave a detailed comparison of different optimization methods with the square or diamond norm for future work. 

\section{Acknowledgements}
We thank C.\ Riofr\'io, I.\ Roth, and D.\ Suess for
insightful discussions and S.\ Kimmel and Y.K.\ Liu 
for advanced access to Ref.\ \cite{KimLiu15}. 
The research of MK and JE has been supported by the EU
(AQuS, RAQUEL), 
the BMBF (Q.com), the DFG projects EI 519/9-1 (SPP1798 CoSIP) and CRC183, the ERC (TAQ) and the Templeton Foundation. 
The work of DG and RK is supported by the Excellence Initiative of the German Federal and State Governments (Grants ZUK 43 \& 81), the ARO under contract W911NF-14-1-0098 (Quantum Characterization,
Verification, and Validation), the DFG (SPP1798 CoSIP), and the State Graduate Funding Program of Baden-W\"urttemberg.

\appendix
\section{Appendix}
In this appendix we provide known material to make this work more self contained. 
We provide a brief introduction to tensor products and a basis independent definition of the Choi-Jamio{\l}kowski isomorphism in the first section. 
We devote the second section to low-rank matrix recovery. We show how the statements presented in Section~\ref{sec:application_low_rank} can be derived using geometric proof techniques. Unlike the first part of the appendix, this section does include technical novelties. 

\subsection{Basic concepts of multilinear algebra and the Choi-Jamio{\l}kowski isomorphism}\label{sec:appendixChoi}
The core objects of this work are tensors of order four and naturally fall into the realm of multilinear algebra. Here we give a brief introduction on core concepts of multilinear algebra that can be found in any textbook on that topic. Our presentation here is influenced by \cite{Wat11}.
Let $\V_1,\ldots,\V_k$ be (finite dimensional, complex) vector spaces with associated dual spaces $\V_1^\ast,\ldots,\V_k^\ast$. 
A function 
\begin{equation}
f: \V_1\times \cdots \times \V_k \to \CC
\end{equation}
is \emph{multilinear}, if it is linear in each $\V_i$. The space of such functions constitutes the \emph{tensor product} of $\V_1^\ast,\ldots,\V_k^\ast$ and we denote it by
$\V_1^\ast \otimes \cdots \otimes \V_k^\ast$. 
By reflexivity $\V \cong \V^{\ast \ast}$, the tensor product $\V_1 \otimes \cdots \otimes \V_k$ 
is the space of all multilinear functions
\begin{equation}
f: \V_1^\ast \times \cdots \times \V_k^\ast \to \CC.
\label{eq:tensor_product}
\end{equation}
Its elementary elements $z_1 \otimes \cdots \otimes z_k$ are the \emph{tensor product} of vectors $x_1 \in \V_1,\ldots, x_k \in \V_k$ which alternatively can be constructed by means of the Kronecker product -- however, such an explicit construction requires explicit choices of bases in $\V_1,\ldots,\V_k$.

With such a notation, the space of linear maps $\L(\V \to \W)$ (matrices) corresponds to the tensor product $\W \otimes \V^\ast$ which is spanned by rank-one operators $\left\{ y \otimes x^\ast: x \in \V, y \in \W \right\}$.
With this identification, it is straightforward to define the tensor product of
$\L \left( \W_1 \to \W_2 \right)$ and $\L \left( \V_1 \to \V_2 \right)$ to be
\begin{equation}
\begin{aligned}
\L \left( \W_1 \to \W_2 \right) \otimes \L \left( \V_1 \to \V_2 \right) 
&\cong 
\left( \W_2 \otimes \W_1^\ast \right) \otimes \left(  \V_2 \otimes \V_1^\ast \right)
\\
&\cong \L \left( \V_1 \otimes \W_1 \to \V_2 \otimes \W_2 \right) \, . 
\end{aligned}
\end{equation}
Analogously to before, the elementary $Y \otimes X$ of this space are the \emph{tensor product} 
of maps $Y \in \L \left( \W_1 \to \W_2 \right)$ and $X \in \L \left( \V_1 \to \V_2 \right)$. 
Restricting to tensor products of endomorphisms, i.e.\ $\W_2 \cong \W_1$ and $\V_2 \cong \V_1$, the \emph{partial trace} (over the first tensor factor) for elementary elements is defined to be
\begin{equation}
\begin{aligned}
\Tr_\W: \L(\W) \otimes \L (\V) & \to \L \left( \W \right) \\
Y \otimes X  & \mapsto  \Tr(X) \, Y
\end{aligned} 
\end{equation}
and extended linearly to $\L(\W) \otimes \L(\V)$. 
Note that with the identification $\L (\W) \cong \W \otimes \W^\ast$, $\Tr_\W$ corresponds to the natural contraction between $\W$ and $\W^\ast$. 
This is illustrated in Figure~\ref{fig:tensor_diagram}.

Similarly to $\L \left( \V_1 \to \V_2 \right)$, the maps $\L(\L(\V_1 \to \V_2) \to \L(\W_1 \to \W_2))$ introduced in Section~\ref{sec:notation_maps} can be viewed as elements of the tensor product space
\begin{equation}
\left( \W_2 \otimes \W_1^\ast \right) \otimes \left( \V_2 \otimes \V_1^\ast \right)^\ast
\cong \W_2 \otimes \W_1^\ast \otimes \V_2^\ast \otimes \V_1 \, ,
\label{eq:four_linear_tensors}
\end{equation}
which can be seen as a four-linear vector space. There are several equivalent ways to interpret its elements. For the given applications of our work, we have made heavy use of the \emph{Choi-Jamio{\l}kowski isomorphism}
which acts on four-linear tensors by permuting tensor factors:
\begin{equation}
\begin{aligned}
\tilde J : \V_1 \otimes \V_2\otimes \V_3 \otimes \V_4 
&\to 
\V_1 \otimes \V_3\otimes \V_2 \otimes \V_4 \, ,
\\
  v_1\otimes v_2 \otimes v_3 \otimes v_4 
 &\mapsto
 v_1\otimes v_3 \otimes v_2 \otimes v_4 \, . 
\end{aligned}
\end{equation}
Applied to the four-linear space of maps \eqref{eq:four_linear_tensors}
we obtain
\begin{equation}
\begin{aligned}
 \L(\L(\V_1 \to \V_2) \to \L(\W_1 \to \W_2)) 
 &\cong
 \L(\V_2 \otimes \V_1^\ast \to \W_2 \otimes \W_1^\ast )
 \\
 &\cong
 \W_2\otimes \W_1^\ast \otimes \V_2^\ast\otimes \V_1 \, ,
\end{aligned} 
\end{equation}
and 
\begin{equation}
 \L(\W_1\otimes \V_1^\ast \to \W_2\otimes \V_2^\ast) 
  \cong 
 \W_2\otimes \V_2^\ast \otimes \W_1^\ast \otimes \V_1
\end{equation}
which are basis independent. 
Consequently, the Choi-Jamio{\l}kowski isomorphism is linear bijection from
maps to operators,
\begin{equation}
 \begin{aligned}
\tilde J: \
\L(\L(&\V_1 \to \V_2) \to \L(\W_1 \to \W_2)) 
\ \to \
\L(\W_1 \otimes \V_1^\ast \to \W_2\otimes \V_2^\ast) \, .
 \end{aligned}
\end{equation}
Its explicit definitions \eqref{eq:choi} and \eqref{eq:Jdef} in the main text
are just basis-dependent realization of this more general identification, where 
$\V = \V_1 = \V_2$ and $\W = \W_1 = \W_2$. 
Moreover, this realization requires the identification $\V \cong \V^*$ via an inner product on $\V$. This identification also induces a complex conjugation and the corresponding transposition on $\L(\V)$.
We illustrated this fact pictorially in  Figure~\ref{fig:tensor_diagram}
by resorting to \emph{tensor network} \cite{WerJasSil16} or \emph{wiring diagrams} \cite{Lan12}. 

\subsection{Uniform recovery guarantees and partial derandomizations} \label{sub:uniform}
Our main geometric insight -- Corollary \ref{cor:subset} -- asserts that any square norm descent cone is always contained in the corresponding one of the nuclear norm, provided that the operators in question obey
$\jnorm{X} = \nnorm{X}$.
When applying this idea to low-rank matrix recovery, we started with mentioning 
Proposition~\ref{prop:Gaussian_CS}. 
This is a non-uniform recovery guarantee that is stable towards additive noise. 
However, with some additional work, Corollary \ref{cor:subset} allows for stronger conclusions.
Some of them are summarized in Proposition~\ref{prop:fourth_moments} and Proposition~\ref{prop:rank_one}, respectively. 
Here, we outline how these results are obtained. 
In Section \ref{sub:convex_recovery} we introduced widely used geometric proof techniques for low-rank matrix recovery mainly following Ref.\ \cite{Tro14}. 
These aim at recovery of a fixed object $X_0$ of interest and thus it suffices to focus on precisely one descent cone, namely $\DC(X_0,\nnorm{\argdot})$, or $\DC(X_0,\dnorm{\argdot})$, respectively.
By taking a closer look at the actual proof techniques -- most notably Mendelson's small ball method \cite{Men14}, or Tropp's bowling scheme \cite{Tro14} -- one can see that such a restriction to a single object of interest is not necessary. 
Up to our knowledge, this was first pointed out in Ref.\ \cite{KueRauTer15} and 
at the heart of this observation is the following technical statement.

\begin{lemma} \label{lem:effective_low_rank}
Fix $1 \leq r \leq n$ and let $K_r = \bigcup_{\rank(X)=r} \DC(\nnorm{\argdot},X ) \subset \L (\V)$ be the union of all descent cones 
anchored in nonzero matrices $0 \neq X \in \L (\V )$ of rank at most $r$.
Then, every element $Y \in K_r$ obeys 
\begin{equation}
\nnorm{Y} \leq (1+\sqrt{2}) \sqrt{r} \fnorm{Y}\, .
\end{equation}
\end{lemma}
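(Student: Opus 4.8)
The plan is to reduce the assertion to one scale-invariant inequality about full operators and then run the standard tangent-space decomposition, being careful to squeeze out the sharp constant $1+\sqrt2$ rather than the crude $2\sqrt2$ that a naive triangle inequality gives. First I would fix $0\neq Y\in K_r$; by definition of $K_r$ there are a matrix $X$ with $1\leq\rank(X)\leq r$ and a scalar $\tau>0$ such that $\nnorm{X+\tau Y}\leq\nnorm{X}$. Since $\nnorm{Y}\leq(1+\sqrt2)\sqrt r\,\fnorm{Y}$ is homogeneous of degree one in $Y$, I may replace $Y$ by $\tau Y$ and thus assume outright that $\nnorm{X+Y}\leq\nnorm{X}$; the case $Y=0$ is trivial.

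Next I would set up the tangent space at $X$. Fixing a thin singular value decomposition $X=U\Sigma V\ad$ with $U$ and $V$ carrying $\rank(X)$ orthonormal columns, let $T$ be the linear span of all operators of the form $UM\ad$ or $NV\ad$, and let $P_T,P_{T^\perp}$ be the associated orthogonal projections on $\L(\W\otimes\V)$, so that $P_{T^\perp}(Y)=(\1-UU\ad)\,Y\,(\1-VV\ad)$. Three elementary facts enter: $\rank(P_T(Y))\leq 2\rank(X)\leq 2r$; $\fnorm{P_T(Y)}\leq\fnorm{Y}$; and $P_{T^\perp}(Y)$ has column space orthogonal to that of $X$ and row space orthogonal to that of $X$ (equivalently $UU\ad\,P_{T^\perp}(Y)=P_{T^\perp}(Y)\,VV\ad=0$).

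The crux is the estimate $\nnorm{P_{T^\perp}(Y)}\leq\sqrt r\,\fnorm{P_T(Y)}$. I would obtain it from a dual certificate $G\coloneqq UV\ad+\tilde S$, where $\tilde S\in T^\perp$ is dual to $P_{T^\perp}(Y)$ in the sense $\snorm{\tilde S}\leq1$ and $\langle\tilde S,P_{T^\perp}(Y)\rangle=\nnorm{P_{T^\perp}(Y)}$ (build $\tilde S$ from the SVD of $P_{T^\perp}(Y)$, so that its left and right supports lie in the orthogonal complements above). Because $UV\ad$ and $\tilde S$ are supported on mutually orthogonal left and right subspaces, the combination is block diagonal and $\snorm{G}\leq1$; hence, by the duality of nuclear and spectral norms, $\nnorm{X}\geq\nnorm{X+Y}\geq\Re\langle G,X+Y\rangle$. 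Expanding the right-hand side with $\langle UV\ad,X\rangle=\nnorm{X}$, $\langle\tilde S,X\rangle=0$ (since $X\in T$, $\tilde S\in T^\perp$), $\langle\tilde S,Y\rangle=\langle\tilde S,P_{T^\perp}(Y)\rangle=\nnorm{P_{T^\perp}(Y)}$, and $|\langle UV\ad,Y\rangle|=|\langle UV\ad,P_T(Y)\rangle|\leq\fnorm{UV\ad}\,\fnorm{P_T(Y)}=\sqrt{\rank(X)}\,\fnorm{P_T(Y)}\leq\sqrt r\,\fnorm{P_T(Y)}$, the term $\nnorm{X}$ cancels and the estimate drops out. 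Combining it with the generic bound $\nnorm{M}\leq\sqrt{\rank(M)}\,\fnorm{M}$ applied to $P_T(Y)$ yields
\begin{equation}
\nnorm{Y}\leq\nnorm{P_T(Y)}+\nnorm{P_{T^\perp}(Y)}
\leq\sqrt{2r}\,\fnorm{P_T(Y)}+\sqrt r\,\fnorm{P_T(Y)}
\leq(1+\sqrt2)\sqrt r\,\fnorm{Y},
\end{equation}
which is the claim.

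The only genuine obstacle I anticipate is precisely the sharp estimate $\nnorm{P_{T^\perp}(Y)}\leq\sqrt r\,\fnorm{P_T(Y)}$: the naive route — triangle inequality together with the pinching identity $\nnorm{X+P_{T^\perp}(Y)}=\nnorm{X}+\nnorm{P_{T^\perp}(Y)}$ — only gives $\nnorm{P_{T^\perp}(Y)}\leq\nnorm{P_T(Y)}\leq\sqrt{2r}\,\fnorm{P_T(Y)}$ and hence the weaker constant $2\sqrt2$. Upgrading it via the certificate above requires three small checks: that $\snorm{UV\ad+\tilde S}\leq1$ although both summands have unit spectral norm (this is exactly where orthogonality of supports is used), that $\fnorm{UV\ad}=\sqrt{\rank(X)}\leq\sqrt r$ rather than $\sqrt{2r}$, and the routine care with real parts when $\V,\W$ are complex. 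Everything else is bookkeeping with orthogonal projections and Schatten-norm inequalities.
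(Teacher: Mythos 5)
Your proof is correct, and it follows the same overall strategy as the paper's: reduce to $\nnorm{X+Y}\leq\nnorm{X}$, decompose $Y=P_T(Y)+P_{T^\perp}(Y)$ with respect to the tangent space at $X$, establish $\nnorm{P_{T^\perp}(Y)}\leq\sqrt{r}\,\fnorm{\argdot}$, and finish with $\nnorm{P_T(Y)}\leq\sqrt{2r}\,\fnorm{P_T(Y)}$ and the triangle inequality. The genuine difference is in how the key estimate on the $T^\perp$ block is extracted. The paper first invokes a pinching inequality (proved as a separate auxiliary lemma) to obtain $\nnorm{X+\tau Y}\geq\nnorm{X+\tau P_\C Y P_\R}+\tau\nnorm{P_{T^\perp}(Y)}$, and then lower-bounds the first term via H\"older's inequality against the sign matrix $S_X$, which yields $\nnorm{P_{T^\perp}(Y)}\leq\sqrt{r}\,\fnorm{Y}$. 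You instead assemble a single dual certificate $G=UV\ad+\tilde S$ with $\snorm{G}\leq 1$ (using orthogonality of the left and right supports of the two summands) and apply nuclear--spectral duality once; this avoids the pinching lemma entirely and gives the marginally sharper intermediate bound $\nnorm{P_{T^\perp}(Y)}\leq\sqrt{r}\,\fnorm{P_T(Y)}$, though the final constant $1+\sqrt{2}$ is the same. Your route is closer to the classical dual-certificate arguments of low-rank recovery; the paper's is organized around the sign-matrix machinery it has already set up for the SDP analysis, at the cost of needing the pinching inequality as an extra ingredient. All the checks you flag (the block-orthogonality giving $\snorm{UV\ad+\tilde S}\leq 1$, $\fnorm{UV\ad}=\sqrt{\rank(X)}\leq\sqrt{r}$, and taking real parts) go through as you describe.
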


For Hermitian matrices, a slightly stronger statement of this type was presented in \cite[Lemma 10]{KueRauTer15}. Here, we provide a different proof that does not require hermiticity and exploits a variant of pinching.

\begin{lemma}[Pinching inequality] \label{thm:pinching} 
Let $P,Q \in \L (\V)$ be orthogonal projectors with complements $P^\perp = \1_{\V}-P$ and 
$Q^\perp = \1_{\V}-Q$.
Also, let $\norm{\argdot}_p$ be any Schatten-$p$ norm. Then, every $Z \in \L (\V)$ obeys
\begin{equation}
\| P Z Q \|_p^p + \| P^\perp Z Q^\perp\|_p^p
\leq \| Z \|_p^p \, . 
\label{eq:pinching}
\end{equation}

\end{lemma}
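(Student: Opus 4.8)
The plan is to recognize the ``two-sided pinching'' $Z \mapsto PZQ + P^\perp Z Q^\perp$ as the average of $Z$ with a unitarily conjugated copy of itself, and then to exploit the fact that the two summands $PZQ$ and $P^\perp Z Q^\perp$ have pairwise orthogonal column spaces \emph{and} pairwise orthogonal row spaces, so that their lists of singular values simply merge. First I would introduce the self-adjoint unitaries $U_P \coloneqq P - P^\perp$ and $U_Q \coloneqq Q - Q^\perp$; these are unitary since $U_P\ad U_P = P + P^\perp = \1_\V$ and likewise for $U_Q$. Expanding $Z = (P+P^\perp)Z(Q+Q^\perp)$ and $U_P Z U_Q = (P-P^\perp)Z(Q-Q^\perp)$ into their four ``matrix blocks'' and adding, the mixed terms $PZQ^\perp$ and $P^\perp Z Q$ cancel while the diagonal ones double, giving
\begin{equation*}
PZQ + P^\perp Z Q^\perp = \tfrac{1}{2}\bigl( Z + U_P Z U_Q \bigr).
\end{equation*}
Since every Schatten-$p$ norm with $p \geq 1$ is a unitarily invariant norm, the triangle inequality then yields $\norm{PZQ + P^\perp Z Q^\perp}_p \leq \tfrac{1}{2}\bigl(\norm{Z}_p + \norm{U_P Z U_Q}_p\bigr) = \norm{Z}_p$.

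Next I would upgrade this bound on the norm of the \emph{sum} to the claimed bound on the sum of the \emph{$p$-th powers}. Writing $A \coloneqq PZQ$ and $B \coloneqq P^\perp Z Q^\perp$, a one-line computation gives $A\ad B = QZ\ad(PP^\perp)ZQ^\perp = 0$ and $AB\ad = PZ(QQ^\perp)Z\ad P^\perp = 0$. The first identity forces $\ran(B) \perp \ran(A)$ and the second $\ran(B\ad) \perp \ran(A\ad)$. Hence $(A+B)\ad(A+B) = A\ad A + B\ad B$ is an orthogonal direct sum of positive semidefinite operators supported on mutually orthogonal subspaces, so its spectrum is the multiset union of those of $A\ad A$ and $B\ad B$; equivalently, the nonzero singular values of $A+B$ are precisely those of $A$ together with those of $B$, counted with multiplicity. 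Therefore $\norm{A+B}_p^p = \norm{A}_p^p + \norm{B}_p^p$, and combining with the previous paragraph,
\begin{equation*}
\norm{PZQ}_p^p + \norm{P^\perp Z Q^\perp}_p^p = \bigl\| PZQ + P^\perp Z Q^\perp \bigr\|_p^p \leq \norm{Z}_p^p .
\end{equation*}

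I do not expect a real obstacle: both ingredients — unitary invariance plus the triangle inequality, and the splitting of singular value spectra for operators with orthogonal supports on both sides — are elementary. The one point that needs care is that one must use \emph{both} $A\ad B = 0$ and $AB\ad = 0$: orthogonality of column spaces alone would not let the $p$-th powers split. One should also note that the argument genuinely needs $p \geq 1$ (for $0<p<1$ the Schatten-$p$ functional is only a quasi-norm and the first step already fails), which is consistent with the lemma being phrased for honest norms. Finally I would emphasize that nothing above uses Hermiticity of $Z$, $P$, or $Q$ — exactly the generality required for the intended application to Lemma~\ref{lem:effective_low_rank}.
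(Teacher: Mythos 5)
Your proposal is correct and follows essentially the same route as the paper: both rest on the identity $PZQ + P^\perp Z Q^\perp = \tfrac{1}{2}\bigl(Z + (P-P^\perp)Z(Q-Q^\perp)\bigr)$ together with unitary invariance of the Schatten-$p$ norms, the only cosmetic difference being that the paper invokes convexity of $\norm{\argdot}_p^p$ directly where you apply the triangle inequality first and raise to the $p$-th power afterwards. Your explicit justification (via $A\ad B = 0$ and $AB\ad = 0$) of the step $\normn{PZQ}_p^p + \normn{P^\perp Z Q^\perp}_p^p = \normn{PZQ + P^\perp Z Q^\perp}_p^p$ spells out a detail the paper merely asserts "follows from the definition of the Schatten-$p$ norms."
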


\begin{proof}
Note that for any $Z \in \L (V)$ it follows from the definition of the Schatten-$p$ norms that the left hand side of Eq.~\eqref{eq:pinching} coincides with
$ \| P Z Q + P^\perp Z Q^\perp \|_p^p$. Using this identity and the decomposition
\begin{equation}
P Z Q + P^\perp Z Q^\perp 
= \frac{1}{2} Z + \frac{1}{2} \bigl(P-P^\perp \bigr) Z \bigl( Q - Q^\perp \bigr)
\end{equation}
allows us to conclude
\begin{equation}
\begin{aligned}
\| P Z Q + P^\perp Z Q^\perp \|_p^p 
 &= \normb{\frac{1}{2} Z + \frac{1}{2} \bigl(P-P^\perp \bigr) Z \bigl( Q - Q^\perp \bigr)}_p^p 
\\
 &\leq  \frac{1}{2} \| Z \|_p^p + \frac{1}{2} \normb{\bigl(P-P^\perp\bigr)Z \bigl(Q- Q^\perp\bigr)}_p^p 
\\
 &= \frac{1}{2} \| Z \|_p^p +  \frac{1}{2} \| Z \|_p^p
 = \| Z \|_p^p \, ,
\end{aligned}
\end{equation}
where we have exploited unitary invariance of Schatten-$p$ norms and the fact that both $P-P^\perp$ and $Q - Q^\perp$ are unitary matrices.
\end{proof}

\begin{proof}[Proof of Lemma~\ref{lem:effective_low_rank}]
It suffices to prove this statement for any fixed descent cone $ \DC(\nnorm{\argdot},X )$, where $X \in \L (\V)$ has rank at most $r$.
Let $\C\coloneqq \ran(X)$ and $\R \coloneqq \ran(X\ad)$ be the column and row ranges of $X$ 
(these need not coincide, since $X$ need not necessarily be Hermitian) 
and let $P_\C,P_\R \in \L (\V)$ be orthogonal projections onto these subspaces. 
Note that if $X$ has a singular value decomposition $X = U \Sigma V\ad$, then $P_\C = U \Sigma^0 U\ad$ and $P_\R = V \Sigma^0 V\ad$, where $\Sigma^0$ is defined component-wise by $\Sigma^0_{i,j}\coloneqq 1$ if $\Sigma_{i,j} \neq 0$ and $\Sigma^0_{i,j}\coloneqq 0$ otherwise. Introducing orthogonal complements $P_\C^\perp = \1_{\V(\L)} - P_\C$ and $P_\R^\perp = \1_{\L (\V)}-P_\R$ allows us to define 
\begin{equation}
\mathcal{P}_T^\perp: \L (\V) \to \L (\V), \quad
Z \mapsto P_\C^\perp Z P_\R^\perp \, .
\end{equation}
This is an orthogonal projection with respect to the Frobenius inner product \eqref{eq:inner_prod} and 
obeys $\mathcal{P}_T^\perp(X) = 0$ by construction.
Its complement amounts to 
\begin{equation}
\mathcal{P}_T(Z) = Z - P_\C^\perp Z P_\R^\perp
= P_\C Z + Z P_\R - P_\C Z P_\R 
\end{equation}
which obeys $\mathcal{P}_T(X) = X$. Note that this is a straightforward generalization of the $T$-space introduced in \cite[Equation (2)]{Gro11} to non-Hermitian matrices.
Analogously to there, a decomposition $Z = Z_T + Z_T^\perp := \mathcal{P}_T(Z) + \mathcal{P}_T^\perp(Z)$ is valid for every $Z \in \L (\V)$ and every  $Z_T := \mathcal{P}_T(Z)$ has rank at most $2r$ by construction.

Now choose $Y \in  \DC(\nnorm{\argdot},X )$ and note that by definition
$\nnorm{X} \geq \nnorm{X+\tau Y}$ must be valid for some $\tau >0$. Combining this with Lemma~\ref{thm:pinching} (Pinching) assures
\begin{equation}\label{eq:app_aux1}
\begin{aligned}
\nnorm{X}
 &\geq  \nnorm{X + \tau Y}\\
&\geq \nnorm{P_\C (X+\tau Y) P_\R} + \nnormb{P_\C^\perp (X+\tau Y) P_\R^\perp} 
\\
&=  
\nnorm{X+\tau P_\C Y P_\R} + \nnormb{ \mathcal{P}_T^\perp (X + \tau Y)}
\\
&= 
\nnorm{X+\tau P_\C Y P_\R} + \tau \nnormb{Y_T^\perp} \, , 
\end{aligned}
\end{equation}
where we have employed $P_\C X P_\R = X$ and $\mathcal{P}_T^\perp(X) = 0$.
Also, note that H\"older's inequality assures $|\Tr \left( U Z \right)|\leq \nnorm{Z}$ for any $Z \in \L (\V)$ and unitary $U$. 
Employing this for $U = S_X$, where the sign matrix $S_X$ of $X$ was defined in Definition~\ref{def:sgn}, 
reveals that 
\begin{equation}\label{eq:app_aux2}
\begin{aligned}
\nnorm{X+\tau P_\C Y P_\R}
& \geq 
\Tr \left( S_X X \right) + \tau\, \left|\Tr \left( S_X P_\C Y P_\R \right)\right| 
\\
 &\geq  
\nnorm{X} - \tau \snorm{S_X} \nnorm{P_\C Y P_\R} 
\\
&\geq 
\nnorm{X}-\tau \sqrt{r} \fnorm{P_\C Y P_\R}
\\
&\geq 
\nnorm{X} - \tau \sqrt{r} \fnorm{Y},
\end{aligned}
\end{equation}
where we have in addition used that $P_\C Y P_\R$ has rank at most $r$ and Frobenius norm smaller than or equal to $\fnorm{Y}$. Combining the bounds~\eqref{eq:app_aux1} and \eqref{eq:app_aux2} implies 
\begin{equation}
\nnorm{X} \geq \nnorm{X} + \tau \left( \nnormb{Y_T^\perp} - \sqrt{r} \fnorm{Y} \right).
\end{equation}
Since $\tau >0$, this bound implies
$\nnorm{Y_T^\perp} \leq \sqrt{r} \fnorm{Y}$.
Finally, this relation allows us to infer the result,
\begin{equation}
\begin{aligned}
\nnorm{Y} &= \nnormb{Y_T+Y_T^\perp}
\\
&\leq 
\nnorm{Y_T} + \nnormb{Y_T^\perp}
\\
&\leq 
\sqrt{2r} \fnorm{Y_T} + \sqrt{r} \fnorm{Y}
\\
&= (1+\sqrt{2})\sqrt{r} \fnorm{Y} \, ,
\end{aligned}
\end{equation}
where we also exploited the fact that $Y_T$ has rank at most $2r$.

\end{proof}

Lemma~\ref{lem:effective_low_rank}
asserts that any matrix that lies in the nuclear norm's descent cone of any low-rank matrix, is ``effectively'' a low-rank matrix as well. 
This structural property together with Mendelson's small ball method \cite{Men14,KolMen14} is enough to bound the minimal conic singular value of a measurement map $\A$ with respect to the union of all possible descent cones.
Here we provide a particular realization of Mendelson's small ball method that is directly applicable to low-rank matrix recovery (see e.g.\ Ref.\ \cite[Section 4]{KueRauTer15}).

\begin{theorem}[A variant of Mendelson's small ball method] \label{thm:mendelson}
Let $\mc L\subset \L(\V)$  be real subspace of linear maps and 
let $\A: \mc L \to \RR^m$ be a measurement map $\A(X) = \sum_{i=1}^m \Tr \left( A_i X \right) e_i$, where each $A_i$ is an independent copy of a random matrix $A \in \L (\V)$ and $e_1,\ldots,e_m$ denotes the standard basis in $\RR^m$. Also, let $E_r = \left\{ Y \in K_r: \; \fnorm{Y}=1 \right\}$,
where $K_r$ was defined in Lemma~\ref{lem:effective_low_rank}.
Then for  any $\xi,t >0$, the bound
\begin{equation}
\lambda_{\mathrm{min}} \left( \A, K_r \right) \geq \xi \sqrt{m} Q_{2 \xi} \left( E_r ; A \right)
- 2 W_m \left( E_r, \A \right) - \xi t
\end{equation}
holds with probability at least $1 - \mathrm{e}^{-2t^2}$. Here,
\begin{equation}
\begin{aligned}
Q_\xi \left( E_r, A \right) 
&= 
\inf_{Y \in E_r} \mathrm{Pr} \left[ \bigl| \Tr( A\ad Y ) \bigr| \geq \xi \right] ,
\\
W_m \left( E_r, \A \right) 
&= 
\EE \left[ \sup_{Y \in E_r} \Tr( H\ad Y) \right] , 
\end{aligned}
\end{equation}
where
\begin{equation}
H = \frac{1}{\sqrt{m}} \sum_{j=1}^m \epsilon_j A_j
\end{equation}
and $\epsilon_1,\ldots,\epsilon_m$ being a Rademacher sequence\footnote{
A Rademacher sequence is a sequence of independent random variables that take the values $\pm 1$ with equal probability.}.
\end{theorem}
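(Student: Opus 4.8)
The plan is to run Mendelson's small ball method in the concrete, low‑rank‑tailored form of Tropp's bowling scheme \cite{Men14,Tro14,KueRauTer15}. Because $K_r$ is a cone, $\lambda_{\mathrm{min}}(\A,K_r)=\inf_{Y\in E_r}\fnorm{\A(Y)}$, with $\fnorm{\A(Y)}^2=\sum_{i=1}^m|\Tr(A_iY)|^2$. The first step is entirely deterministic. Introduce the clamped ramp $g_\xi(s)\coloneqq\min\{1,\max\{0,\,|s|/\xi-1\}\}$, which is $\tfrac1\xi$‑Lipschitz, vanishes at $s=0$, and obeys $\1[\,|s|\geq 2\xi\,]\leq g_\xi(s)\leq\1[\,|s|\geq\xi\,]$. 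Using $|s|^2\,\1[\,|s|\geq\xi\,]\geq\xi^2 g_\xi(s)$ termwise and then $0\leq g_\xi\leq 1$ (so the sum is at most $m$), one obtains, for every $Y$,
$$
\fnorm{\A(Y)}\;\geq\;\xi\Bigl(\sum_{i=1}^m g_\xi(\Tr(A_iY))\Bigr)^{1/2}\;\geq\;\xi\sqrt m\cdot\frac1m\sum_{i=1}^m g_\xi\!\bigl(\Tr(A_iY)\bigr).
$$
So it suffices to show that, with probability at least $1-\mathrm{e}^{-2t^2}$, the empirical average on the right exceeds $Q_{2\xi}(E_r;A)-\tfrac{2}{\xi\sqrt m}W_m(E_r,\A)-\tfrac t{\sqrt m}$ simultaneously for all $Y\in E_r$.

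Next I would center the empirical process: for each $Y\in E_r$,
$$
\frac1m\sum_{i=1}^m g_\xi(\Tr(A_iY))\;\geq\;\EE\bigl[g_\xi(\Tr(AY))\bigr]\;-\;Z,\qquad Z\coloneqq\sup_{Y'\in E_r}\Bigl(\EE[g_\xi(\Tr(AY'))]-\tfrac1m{\textstyle\sum_{i=1}^m}g_\xi(\Tr(A_iY'))\Bigr).
$$
The mean is bounded below by the small‑ball functional via $g_\xi\geq\1[\,|\cdot|\geq 2\xi\,]$: indeed $\EE[g_\xi(\Tr(AY))]\geq\Pr[\,|\Tr(AY)|\geq 2\xi\,]\geq Q_{2\xi}(E_r;A)$, uniformly in $Y\in E_r$. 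Everything then reduces to an upper bound on the supremum $Z$ of the centered process indexed by $E_r$.

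For $Z$ I would invoke the two standard devices. Since $g_\xi$ takes values in $[0,1]$, replacing a single sample $A_i$ changes $Z$ by at most $1/m$, so McDiarmid's bounded‑differences inequality (with $\sum_i(1/m)^2=1/m$) gives $Z\leq\EE Z+t/\sqrt m$ with probability at least $1-\mathrm{e}^{-2t^2}$. For $\EE Z$, one‑sided symmetrization gives $\EE Z\leq 2\,\EE\sup_{Y'\in E_r}\tfrac1m\sum_i\epsilon_i\,g_\xi(\Tr(A_iY'))$, and the one‑sided Rademacher contraction principle applied to the $\tfrac1\xi$‑Lipschitz map $s\mapsto g_\xi(s)$ that vanishes at $0$ yields $\EE Z\leq\tfrac2\xi\,\EE\sup_{Y'\in E_r}\tfrac1m\sum_i\epsilon_i\Tr(A_iY')=\tfrac2{\xi\sqrt m}\,\EE\sup_{Y'\in E_r}\Tr(H^\dagger Y')=\tfrac2{\xi\sqrt m}\,W_m(E_r,\A)$, where $H=\tfrac1{\sqrt m}\sum_i\epsilon_iA_i$ and, to pass to $W_m$ without an absolute value, I use that $E_r$ is symmetric — which holds since ranks are preserved under $X\mapsto -X$ and $\DC(\nnorm{\argdot},-X)=-\DC(\nnorm{\argdot},X)$. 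Combining, on the event $\{Z\leq\EE Z+t/\sqrt m\}$ of probability at least $1-\mathrm{e}^{-2t^2}$, every $Y\in E_r$ satisfies
$$
\fnorm{\A(Y)}\;\geq\;\xi\sqrt m\Bigl(Q_{2\xi}(E_r;A)-\tfrac2{\xi\sqrt m}W_m(E_r,\A)-\tfrac t{\sqrt m}\Bigr)\;=\;\xi\sqrt m\,Q_{2\xi}(E_r;A)-2W_m(E_r,\A)-\xi t,
$$
and taking the infimum over $Y\in E_r$ proves the stated bound on $\lambda_{\mathrm{min}}(\A,K_r)$.

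The genuinely delicate point is the empirical‑process step, and within it the accounting of constants: one must use the \emph{one‑sided} forms of both symmetrization and the contraction principle so that no spurious factor of two survives, and one must check that the Lipschitz constant $1/\xi$ of $g_\xi$ is \emph{exactly} what cancels the $\xi\sqrt m$ pulled out in front, so that the complexity term enters as $2\,W_m$ with no residual $\xi$‑dependence. The remaining ingredients — the deterministic ramp inequality, the centering, and the McDiarmid estimate — are routine. (Incidentally, the sharper relaxation $s\mapsto\min\{1,|s|/\xi\}$ in place of $g_\xi$ would replace $Q_{2\xi}$ by the larger $Q_\xi$; we keep $g_\xi$ to match the statement verbatim.)
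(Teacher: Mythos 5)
Your proof is correct. The paper does not actually prove Theorem~\ref{thm:mendelson}; it quotes it as a known realization of Mendelson's small ball method, citing \cite{Men14,Tro14,KolMen14} and \cite[Section 4]{KueRauTer15}. Your argument is precisely the standard one from those references -- the truncated ramp $g_\xi$ sandwiched between the indicators $\1[\,|s|\geq 2\xi\,]$ and $\1[\,|s|\geq\xi\,]$, the Cauchy--Schwarz-type passage from $\bigl(\sum_i g_\xi\bigr)^{1/2}$ to $\tfrac{1}{\sqrt m}\sum_i g_\xi$, centering, McDiarmid with increments $1/m$, and one-sided symmetrization plus contraction with Lipschitz constant $1/\xi$ -- and all the constants check out, so it correctly fills in the proof the paper omits.
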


Important examples for the space of considered operators are $\mc L = \Herm(\V)$ and real matrices. 

Thanks to Lemma~\ref{lem:effective_low_rank} and H\"older's inequality we can bound $W_m \left( E_r, \A \right)$ in Theorem~\ref{thm:mendelson} by
\begin{equation}\label{eq:Wm}
\begin{aligned}
W_m \left( E_r, \A \right) 
&= \EE \left[ \sup_{Y \in E_r} \Tr \left( H\ad Y \right) \right]
\\
&\leq \EE \left[ \sup_{Y \in E_r} \nnorm{Y} \snormb{H\ad} \right] 
\\
&\leq \EE \left[ \sup_{Y \in E_r} (1+\sqrt{2}) \sqrt{r} \fnorm{Y} \snorm{H} \right] 
\\
&= (1+\sqrt{2}) \sqrt{r}\, \EE \left[ \snorm{H} \right] \, ,
\end{aligned}
\end{equation}
which is much easier to handle. 
This simplification together with Mendelson's small ball method -- Theorem~\ref{thm:mendelson} -- and the geometric error bound for convex recovery -- Proposition~\ref{prop:general_reconstruction} -- provide a convenient sufficient means to assure that a given measurement process $\A$ allows for 
uniform and stable low-rank matrix recovery via nuclear norm minimization:

\begin{proposition} [Sufficient criteria for uniform recovery]\label{prop:sufficient_criteria}
Let $\A: \L (\V) \to \CC^m$ be a measurement map as defined in Theorem~\ref{thm:mendelson} and fix $1 \leq r \leq n$. Suppose that this measurement map obeys
$
Q_{2 \xi}(E_r; \A) \geq C_1 
$ for some $\xi >0$ and also $ \EE \left[ \snorm{H} \right] \leq C_2 \sqrt{m/r}$, where $C_1$ and $C_2$ are positive constants
obeying $\xi C_1 > 2(1+\sqrt{2})C_2$.

Then, with probability at least $1  - \mathrm{e}^{-C^\ast_4 m}$, this measurement map is capable of stably reconstructing any matrix $X_0$ of rank at most $r$ 
from noisy measurements of the form $y = \A (X_0) + \epsilon$ obeying $\fnorm{\epsilon} \leq \eta$ 
by means of nuclear norm minimization. Concretely, the solution $X_\eta^\ast$ of the optimization \eqref{eq:Mast} obeys
\begin{equation}
 \fnorm{X^\ast_\eta - X_0} \leq \frac{\eta}{C^\ast_3\sqrt{m}}.
\end{equation}
Here $C^\ast_3,C^\ast_4 >0$ denote sufficiently small absolute constants.
\end{proposition}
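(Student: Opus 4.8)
The plan is to combine three ingredients that are already in place: the ``effective low-rank'' property of nuclear-norm descent cones (Lemma~\ref{lem:effective_low_rank}), Mendelson's small ball method in the form of Theorem~\ref{thm:mendelson}, and Tropp's deterministic error bound (Proposition~\ref{prop:general_reconstruction}). The decisive structural point is that all of the randomness can be concentrated into a single event that lower-bounds the minimal conic singular value of $\A$ with respect to the \emph{union} $K_r$ of all descent cones of rank-$r$ matrices; once that event occurs, every rank-$r$ matrix is handled at once, which is precisely what makes the guarantee uniform.

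First I would apply Theorem~\ref{thm:mendelson} with the given $\xi$ and a free parameter $t$, obtaining
\[
\lambda_{\min}(\A, K_r) \geq \xi \sqrt{m}\, Q_{2\xi}(E_r; A) - 2 W_m(E_r, \A) - \xi t
\]
with probability at least $1-\e^{-2t^2}$. Next I would bound the mean-width term via inequality~\eqref{eq:Wm}, which follows from Lemma~\ref{lem:effective_low_rank} and H\"older's inequality: $W_m(E_r,\A) \leq (1+\sqrt{2})\sqrt{r}\,\EE[\snorm{H}]$. Inserting the two hypotheses $Q_{2\xi}(E_r; A) \geq C_1$ and $\EE[\snorm{H}] \leq C_2\sqrt{m/r}$ then collapses this to
\[
\lambda_{\min}(\A, K_r) \geq \bigl(\xi C_1 - 2(1+\sqrt{2})C_2\bigr)\sqrt{m} - \xi t .
\]
Writing $\delta \coloneqq \xi C_1 - 2(1+\sqrt{2})C_2 > 0$ and choosing $t = \tfrac{\delta}{2\xi}\sqrt{m}$ gives $\lambda_{\min}(\A, K_r) \geq \tfrac{\delta}{2}\sqrt{m}$ with probability at least $1 - \e^{-C^\ast_4 m}$, where $C^\ast_4 = \delta^2/(2\xi^2)$.

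It then remains to pass to an individual signal and apply the deterministic estimate. For any fixed $X_0$ of rank at most $r$, the very definition of $K_r$ in Lemma~\ref{lem:effective_low_rank} gives $\DC(\nnorm{\argdot}, X_0) \subset K_r$, hence on the good event $\lambda_{\min}(\A; \DC(\nnorm{\argdot}, X_0)) \geq \lambda_{\min}(\A; K_r) \geq \tfrac{\delta}{2}\sqrt{m}$. Plugging this into Proposition~\ref{prop:general_reconstruction} yields
\[
\fnorm{X^\ast_\eta - X_0} \leq \frac{2\eta}{\lambda_{\min}(\A; \DC(\nnorm{\argdot},X_0))} \leq \frac{4\eta}{\delta\sqrt{m}},
\]
so the claim follows with $C^\ast_3 = \delta/4$. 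Since the good event does not depend on $X_0$, this bound holds simultaneously for all rank-$r$ matrices.

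There is no genuine obstacle here, since the heavy lifting is already done by Lemma~\ref{lem:effective_low_rank} and Theorem~\ref{thm:mendelson}; the proof is essentially a matter of chaining these estimates. The two points that require care are the choice of $t$ proportional to $\sqrt{m}$ -- so that the deviation term $\xi t$ is dominated by the leading $\delta\sqrt{m}$ contribution while the failure probability still decays exponentially in $m$ -- and the bookkeeping of constants together with the real-versus-complex conventions: Theorem~\ref{thm:mendelson} is phrased for a real subspace $\mc L$, which is exactly the setting ($\mc L = \Herm(\V)$, or real matrices) relevant for Propositions~\ref{prop:fourth_moments} and~\ref{prop:rank_one}.
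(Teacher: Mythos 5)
Your proposal is correct and follows essentially the same route as the paper's own proof: apply Theorem~\ref{thm:mendelson}, bound the mean-width term via Eq.~\eqref{eq:Wm}, insert the two hypotheses, choose $t$ proportional to $\sqrt{m}$ so that $\lambda_{\min}(\A,K_r)\geq \tfrac{\delta}{2}\sqrt{m}$ on an exponentially likely event, and conclude with Proposition~\ref{prop:general_reconstruction}. The only difference is that you spell out the (correct) final step --- that $\DC(\nnorm{\argdot},X_0)\subset K_r$ makes the bound uniform over all rank-$r$ signals --- which the paper leaves implicit; your constants match as well.
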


Note that unlike Proposition~\ref{prop:Gaussian_CS}, such a recovery statement is \emph{uniform}, in the sense that with high probability a single measurement map 
suffices to recover any low-rank matrix. 
However, it still relies on the geometric proof technique of bounding the widths of nuclear norm descent cones. This is because the set $K_r$ is just the union over all possible nuclear norm descent cones anchored at matrices of rank at most $r$. As a result, Observation~\ref{obs:smaller_DC} (``the smaller the descent cone, the better the recovery'') is also valid in this setting and Corollary~\ref{cor:subset}
allows us to draw the following conclusion.

\begin{corollary} [Uniform recovery from square norm regularization]
\label{cor:sufficient_diamond}
The assertions of Proposition~\ref{prop:sufficient_criteria} remain true for recovery via square norm regularization \eqref{eq:Mdiamond},
for the case of uniform recovery of rank-$r$ maps 
$X_0 \in \L \left( \V \otimes \W \right)$ satisfying $\jnorm{X_0} = \nnorm{X_0}$. 
Moreover, the corresponding constants obey $C_3^\mysquare \geq C_3^\ast$ and $C_4^\mysquare \geq C_3^\ast$, meaning that the recovery statement cannot be worse.
\end{corollary}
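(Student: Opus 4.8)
The plan is to deduce the statement directly from the descent-cone inclusion of Corollary~\ref{cor:subset} together with the lower bound on the minimum conic singular value $\lambda_{\min}(\A;K_r)$ that is already furnished inside the proof of Proposition~\ref{prop:sufficient_criteria}. Recall that $K_r$ is, by definition, the union of \emph{all} nuclear-norm descent cones anchored at nonzero matrices of rank at most $r$, and that the hypotheses $Q_{2\xi}(E_r;\A)\geq C_1$ and $\EE[\snorm{H}]\leq C_2\sqrt{m/r}$, fed through Theorem~\ref{thm:mendelson} and the effective-low-rank estimate of Lemma~\ref{lem:effective_low_rank}, produce a bound of the form $\lambda_{\min}(\A;K_r)\geq c\sqrt{m}$ valid on a single event of probability at least $1-\e^{-C_4^\ast m}$. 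It is this one event that powers the uniform nuclear-norm guarantee, and the key observation is that it is a statement about the whole cone $K_r$, not about one fixed descent cone.

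First I would work on that same event and fix an arbitrary bipartite operator $X_0\in\L(\W\otimes\V)$ of rank at most $r$ with $\jnorm{X_0}=\nnorm{X_0}$. Taking $A=B=\1_\V$ in Corollary~\ref{cor:subset} gives $\DC(\jnorm{\argdot};X_0)\subset\DC(\nnorm{\argdot};X_0)$, and since $X_0$ has rank at most $r$ the right-hand cone is contained in $K_r$ by construction. Monotonicity of $\lambda_{\min}(\A;\argdot)$ under cone inclusion then yields
\[
\lambda_{\min}\bigl(\A;\DC(\jnorm{\argdot};X_0)\bigr)\;\geq\;\lambda_{\min}(\A;K_r)\;\geq\;c\sqrt{m}.
\]
Next I would invoke Proposition~\ref{prop:general_reconstruction} with $f=\jnorm{\argdot}$: whenever $\fnorm{\epsilon}\leq\eta$, the minimiser $X^\mysquare_\eta$ of the square-norm program \eqref{eq:Mdiamond} satisfies $\fnorm{X^\mysquare_\eta-X_0}\leq 2\eta/\lambda_{\min}(\A;\DC(\jnorm{\argdot};X_0))$, which by the displayed inequality is at most $\eta/(C_3^\mysquare\sqrt{m})$ for a suitable constant $C_3^\mysquare$. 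Because the bound on $\lambda_{\min}(\A;K_r)$ held uniformly over $K_r$, this conclusion is simultaneously valid for every admissible $X_0$ on the given event, which is precisely the asserted uniform, stable recovery statement. Finally, the estimate of $\lambda_{\min}$ used here is literally the one appearing in Proposition~\ref{prop:sufficient_criteria}, merely evaluated on a smaller cone — which can only increase it — so one may take $C_3^\mysquare\geq C_3^\ast$ and $C_4^\mysquare\geq C_4^\ast$, reflecting Observation~\ref{obs:smaller_DC} that the square-norm recovery is never worse.

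The step requiring the most care is not the chain of inequalities, which is routine, but making explicit that the high-probability event extracted from the proof of Proposition~\ref{prop:sufficient_criteria} is genuinely an assertion about $K_r$ as a whole, so that a single realization of $\A$ reconstructs all admissible $X_0$ at once; this uniformity is exactly what Lemma~\ref{lem:effective_low_rank} buys, and it must be threaded through Theorem~\ref{thm:mendelson} carefully. A secondary, purely bookkeeping point is that Proposition~\ref{prop:sufficient_criteria} is phrased for a real subspace $\mc L\subset\L(\V)$: one simply instantiates it with the ambient matrix space $\L(\W\otimes\V)$ (or its real/Hermitian subspace, as in Propositions~\ref{prop:fourth_moments} and~\ref{prop:rank_one}), the tensorial structure playing no role in the argument itself — it enters only through the hypothesis $\jnorm{X_0}=\nnorm{X_0}$ that restricts the class of recoverable $X_0$.
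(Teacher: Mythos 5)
Your proposal is correct and follows essentially the same route the paper takes: it combines the descent-cone inclusion from Corollary~\ref{cor:subset} (via $A=B=\1_\V$) with the uniform bound $\lambda_{\min}(\A;K_r)\gtrsim\sqrt{m}$ already established in the proof of Proposition~\ref{prop:sufficient_criteria}, monotonicity of the minimum conic singular value under cone inclusion, and Proposition~\ref{prop:general_reconstruction}. Your explicit emphasis that the high-probability event concerns the whole cone $K_r$ (so a single realization of $\A$ serves all admissible $X_0$) is exactly the point the paper relies on, and the constant comparison $C_3^\mysquare\geq C_3^\ast$, $C_4^\mysquare\geq C_4^\ast$ follows as you say.
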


\begin{proof}[Proof of Proposition~\ref{prop:sufficient_criteria}]
Theorem~\ref{thm:mendelson} together with Eq.~\eqref{eq:Wm} and the assumptions on $\A$ assure for any $t > 0$
\begin{equation}\label{eq:lambda_min_bound}
\begin{aligned}
\lambda_{\mathrm{min}} \left( \A, K_r \right) 
&\geq \xi \sqrt{m} Q_{2 \xi}(E_r; \A ) - 2 W_m (E_r,\A) - \xi t \\
&\geq \xi \sqrt{m} Q_{2 \xi} (E_r; \A) - 2 (1+\sqrt{2}) \sqrt{r} \EE \left[ \snorm{H} \right] - \xi t \\
& \geq \xi C_1 \sqrt{m} - 2(1+\sqrt{2}) C_2 \sqrt{m} - \xi t 
\end{aligned}
\end{equation}
with probability at least $1  - \mathrm{e}^{-2t^2}$.
Introducing $C_3 = (\xi C_1 - 2(1+\sqrt{2}) C_2)/2$ -- which is strictly positive by assumption -- 
and setting $t = C_3 \sqrt{m}/\xi$ then implies
\begin{equation}
\lambda_{\mathrm{min}} \left( \A, K_r \right)  \geq C_3 \sqrt{m}
\end{equation}
with probability at least $1  - \mathrm{e}^{- C_4 m}$, where $C_4 = C_3^2 / \xi^2 >0$. 
With such an estimate at hand, the claim follows from applying Proposition~\ref{prop:general_reconstruction}.
\end{proof}

We conclude this section with presenting a selection of measurement ensembles that meet the criteria of Proposition~\ref{prop:sufficient_criteria} and as a consequence also the ones of Corollary~\ref{cor:sufficient_diamond}. 
We start with measurement ensembles that allow for recovering real-valued matrices 
$X \in \L (\V)$.

\begin{corollary} \label{cor:derandomizations1}
Suppose that $\V$ is a real-valued vector spaces and let $\A: \L (\V) \to \RR^m$ be the measurement map $\A (X) = \sum_{i=1}^m \Tr \left( A_i X \right) e_i$, where each $A_i$ is a random matrix with independent entries obeying
\begin{equation}
\EE \left[ a_{i,j} \right] = 0,\quad 
\EE \left[ a_{i,j}^2 \right] = 1, \quad
\EE \left[ a_{i,j}^4 \right] \leq F,
\end{equation}
where $F$ is a constant. Then a sampling rate of $m \geq C r n$ suffices to
meet the requirements of Proposition~\ref{prop:sufficient_criteria}.
\end{corollary}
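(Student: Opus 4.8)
The plan is to verify directly the two quantitative hypotheses of Proposition~\ref{prop:sufficient_criteria} --- a small-ball lower bound $Q_{2\xi}(E_r;A)\geq C_1$ and an upper bound $\EE[\snorm{H}]\leq C_2\sqrt{m/r}$ on the Rademacher average, with $H=\frac{1}{\sqrt{m}}\sum_{j=1}^m\epsilon_jA_j$ --- since the corollary asserts nothing more than that $m\geq Crn$ makes these hold. I will choose $\xi$ and $C_1$ depending only on $F$, and I will show $\EE[\snorm{H}]\leq K_F\sqrt{n}$ for a constant $K_F=K_F(F)$. The remaining requirement $\xi C_1>2(1+\sqrt2)C_2$ is then secured by setting $C_2\coloneqq\xi C_1/(4(1+\sqrt2))$, and finally $C\coloneqq(K_F/C_2)^2$: for $m\geq Crn$ one has $\sqrt{m/r}\geq\sqrt{Cn}=(K_F/C_2)\sqrt{n}$, whence $\EE[\snorm{H}]\leq K_F\sqrt{n}\leq C_2\sqrt{m/r}$. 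Because $E_r$ consists of unit-Frobenius-norm matrices, the small-ball estimate will be proved uniformly over all $Y\in\L(\V)$ with $\fnorm{Y}=1$.

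For the small-ball step I would apply the Paley--Zygmund inequality to $Z\coloneqq\Tr(A\ad Y)=\sum_{i,j}a_{i,j}Y_{i,j}$, a sum of independent centred variables. The moment hypotheses give $\EE[Z^2]=\sum_{i,j}Y_{i,j}^2=\fnorm{Y}^2=1$, and, expanding $Z^4$ and using independence together with $\EE[a_{i,j}]=0$ (so that only fully paired monomials survive),
\begin{equation}
\EE[Z^4]=\sum_{i,j}\EE[a_{i,j}^4]\,Y_{i,j}^4+3\!\!\sum_{(i,j)\neq(k,l)}\!\!Y_{i,j}^2Y_{k,l}^2\leq F\,\fnorm{Y}^4+3\,\fnorm{Y}^4=F+3 .
\end{equation}
Paley--Zygmund then yields $\mathrm{Pr}\bigl[|Z|\geq\sqrt{\theta}\,\bigr]\geq(1-\theta)^2/(F+3)$ for all $\theta\in(0,1)$; taking $\theta=1/2$ fixes $\xi=1/(2\sqrt2)$ and $C_1=1/(4(F+3))$, uniformly in $Y$ and hence for the infimum defining $Q_{2\xi}(E_r;A)$. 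This step is routine.

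The spectral-norm step is the crux and the place where the real difficulty lies. Conditioning on the signs $\epsilon$, the matrix $G\coloneqq\sum_{j=1}^m\epsilon_jA_j$ has entries $G_{i,k}=\sum_j\epsilon_j(A_j)_{i,k}$ that are independent across $(i,k)$ given $\epsilon$ (each depends on a disjoint family of underlying entries), with conditional mean $0$, conditional variance $m$, and, by the same pairing argument, conditional fourth moment $\EE[G_{i,k}^4\mid\epsilon]\leq mF+3m(m-1)\leq(F+3)m^2$. I would then invoke a log-free operator-norm bound for $n\times n$ matrices with independent mean-zero entries (Seginer/Lata\l{}a/Bandeira--van Handel; see also Ref.~\cite{KabKueRau15}), which bounds $\EE[\snorm{G}\mid\epsilon]$ by a universal multiple of
\begin{equation}
\max_i\Bigl(\textstyle\sum_k\EE[G_{i,k}^2\mid\epsilon]\Bigr)^{1/2}+\max_k\Bigl(\textstyle\sum_i\EE[G_{i,k}^2\mid\epsilon]\Bigr)^{1/2}+\Bigl(\textstyle\sum_{i,k}\EE[G_{i,k}^4\mid\epsilon]\Bigr)^{1/4}\leq 2\sqrt{nm}+\bigl((F+3)n^2m^2\bigr)^{1/4},
\end{equation}
which is at most $c_F\sqrt{nm}$ with $c_F$ depending only on $F$ and --- crucially --- independent of $\epsilon$. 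Taking the expectation over $\epsilon$ gives $\EE[\snorm{G}]\leq c_F\sqrt{nm}$, hence $\EE[\snorm{H}]=\frac{1}{\sqrt{m}}\EE[\snorm{G}]\leq c_F\sqrt{n}$, i.e.\ the bound of the first paragraph with $K_F=c_F$. The main obstacle is exactly this random-matrix input: with only a bounded fourth moment the $A_j$ carry no sub-exponential control, so matrix-Bernstein-type tools are unavailable and one must rely on sharp heavy-tailed operator-norm estimates for matrices with independent entries; granting that, the rest is the constant bookkeeping above, which in turn feeds Corollary~\ref{cor:sufficient_diamond}.
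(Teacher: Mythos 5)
Your proposal is correct and follows essentially the same route as the paper: the paper verifies the two hypotheses of Proposition~\ref{prop:sufficient_criteria} by citing \cite[Lemmas 11 and 12]{KabRauTer15a} for the small-ball bound $Q_{1/\sqrt{2}}\geq 1/(4\max\{3,F\})$ and for $\EE[\snorm{H}]\leq C_F\sqrt{n}$, and then performs the same constant bookkeeping with $\xi=2^{-3/2}$. The only difference is that you supply self-contained derivations of those two cited lemmas (Paley--Zygmund for the small-ball estimate, and conditioning on the Rademacher signs followed by a Lata\l{}a-type operator-norm bound for $\EE[\snorm{H}]$), which is exactly the content of the referenced results.
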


The result quoted in Corollary~\ref{cor:derandomizations1} was not established as a subroutine of a geometric proof technique for nuclear norm recovery, but consists of auxiliary statements that help to establish the Frobenius stable null space property
\cite[Definition 10]{KabKueRau15} -- a powerful alternative to geometric proof techniques relying on Proposition~\ref{prop:general_reconstruction}.
However, if embedded properly into the framework of geometric recovery proof techniques, the auxiliary statements in Ref.\ \cite{KabKueRau15} -- see also Ref.\ \cite{KabRauTer15a,KabRauTer15b} -- 
can still be used to establish recovery guarantees that rely on bounding the widths of descent cones. 
For our purposes, such a geometric proof environment is crucial, and this entire section is devoted to develop it. However, we point out that introducing and analyzing the square norm analogue of the Frobenius stable null space property -- which is geared towards nuclear norm minimization -- does constitute an intriguing follow-up problem. We leave this to future work.

\begin{proof}[Proof of Corollary~\ref{cor:derandomizations1}]
For a proof of this statement, we utilize auxiliary statements from Ref.\ \cite{KabRauTer15a}. 
Lemma 11 in loc.\ cit.\ asserts that such random matrices with bounded fourth moments obey $Q_{1/\sqrt{2}} \geq 1/4\max \left\{3, F \right\}$, where $F$ is the fourth-moment bound. 
Also, Ref.\ \cite[Lemma 12]{KabRauTer15a} assures $\EE \left[ \snorm{H} \right] \leq C_F \sqrt{n}$,
where $C_F$ is a constant that only depends on $F$. This, in particular, assures that
\begin{equation}
\EE \left[ \snorm{H} \right] \leq C_F \sqrt{n} \leq \frac{C_F}{\sqrt{C}} \sqrt{\frac{m}{r}}
\end{equation}
and we can set $\xi = 2^{-3/2}$, $C_2 = C_F / \sqrt{C}$ and $C_1 = 1/4 \max \left\{3,F \right\}$. 
Choosing the constant $C$ in the sampling rate large enough assures that these constants obey $\xi C_1 > 2(1+\sqrt{2}) C_2$ for $\xi = 2^{-3/2}$
and all the requirements of Proposition~\ref{prop:sufficient_criteria} are met. The claim then follows from applying this statement.
\end{proof}

We conclude this section with embedding the main results of Ref.\ \cite{KueRauTer15} into this framework.
In fact, the entire apparatus presented in this section is a condensed version of the proofs in loc.\ cit. However, the reader's convenience, we include the corresponding statement here as well.

\begin{corollary} \label{cor:derandomizations2}
Consider measurement maps $\A: \Herm(\V) \to \RR^m$ of the form $\A (X) = \sum_{i=1}^m \Tr \left( A_i X \right) e_i$.
Then the following measurement ensembles meet the requirements of Proposition~\ref{prop:sufficient_criteria}, if restricted to the recovery of Hermitian matrices:
\begin{enumerate}
\item $m \geq C_G r n$ and each $A_i = a_i a_i\ad$ corresponds to the outer product of a complex standard Gaussian vector $a_i \in \V$ with itself,
\item $m \geq C_{4D} r n \log (2n)$ and each $A_i = a_i a_i\ad$ is the outer product of a randomly selected element $a_i$ of a complex projective $4$-design.
\end{enumerate}
Once more, $C_G$ and $C_{4D}$ denote sufficiently large constants.
\end{corollary}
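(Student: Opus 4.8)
The plan is to verify, for each of the two ensembles separately, the two quantitative hypotheses of Proposition~\ref{prop:sufficient_criteria} — a constant lower bound $Q_{2\xi}(E_r;\A)\geq C_1$ on the marginal tail function and an upper bound $\EE[\snorm{H}]\leq C_2\sqrt{m/r}$ on the expected spectral norm of the Rademacher-weighted empirical sum $H = m^{-1/2}\sum_{j=1}^m \epsilon_j A_j$ — and then to pick the absolute constants $C_G$, respectively $C_{4D}$, large enough that the compatibility condition $\xi C_1 > 2(1+\sqrt{2})C_2$ holds. Everything needed here is, in essentially the present form, already contained in Ref.~\cite{KueRauTer15}; I would simply reassemble the relevant pieces so that they feed directly into Proposition~\ref{prop:sufficient_criteria}. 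Throughout I would work with $\mc L = \Herm(\V)$ and use that for Hermitian $A_i = a_i a_i\ad$ and Hermitian $Y$ one has $\Tr(A_i Y) = \langle a_i, Y a_i\rangle \in \RR$, so that the marginal in question is the distribution of the real quadratic form $\langle a, Y a\rangle$, with $a$ either a complex standard Gaussian vector or a uniformly drawn element of a complex projective $4$-design.

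For the lower bound on $Q_{2\xi}$ I would invoke the Paley--Zygmund inequality: it suffices to bound the second moment of $\langle a, Y a\rangle$ from below and the fourth moment from above, uniformly over $Y\in E_r$. The second moment equals $\fnorm{Y}^2 + |\Tr(Y)|^2 \geq 1$ for the Gaussian ensemble, and the same holds for the design because a complex projective $4$-design reproduces the moments of Haar measure up to order eight, which certainly covers the degree-eight quadratic form $\langle a, Ya\rangle^4$ (and a fortiori its square). The fourth moment is controlled uniformly over $E_r$ precisely by Lemma~\ref{lem:effective_low_rank}: every $Y\in E_r$ satisfies $\nnorm{Y}\leq (1+\sqrt{2})\sqrt{r}\,\fnorm{Y}$, and this ``effective low-rankness'' bounds $\EE[\langle a, Ya\rangle^4]$ by an absolute constant times $r$ in both cases — this is where the rank enters, and it is cancelled later by the sampling rate. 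Paley--Zygmund then delivers $Q_{2\xi}(E_r;\A)\geq C_1$ for a suitable fixed $\xi>0$ and absolute $C_1>0$.

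For $\EE[\snorm H]$ I would apply a non-commutative Khintchine / matrix Bernstein estimate. In the Gaussian case this gives $\EE[\snorm H]\leq C\sqrt{n}$, so the sampling rate $m\geq C_G r n$ yields $\EE[\snorm H]\leq (C/\sqrt{C_G})\sqrt{m/r}$, i.e.\ the required bound with $C_2 = C/\sqrt{C_G}$ as small as we wish by enlarging $C_G$. For a $4$-design the analogous operator-Bernstein bound picks up a logarithmic factor, $\EE[\snorm H]\leq C\sqrt{n\log(2n)}$, which is exactly why the sampling rate is inflated to $m\geq C_{4D} r n\log(2n)$; with that rate one again obtains $\EE[\snorm H]\leq (C/\sqrt{C_{4D}})\sqrt{m/r}$. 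In either case, choosing $C_G$ or $C_{4D}$ large enough forces $\xi C_1 > 2(1+\sqrt{2})C_2$, so all hypotheses of Proposition~\ref{prop:sufficient_criteria} are met and the conclusion follows (the transfer to square norm minimization is then handled downstream by Corollary~\ref{cor:sufficient_diamond}).

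The main obstacle is the \emph{uniform} marginal lower bound $Q_{2\xi}(E_r;\A)\geq C_1$: one must control the fourth moment $\EE[\langle a, Ya\rangle^4]$ simultaneously over the whole union of descent cones $E_r$, not merely at a fixed low-rank $Y$. This is exactly where Lemma~\ref{lem:effective_low_rank} is indispensable, and where the $4$-design property — reproducing Haar moments up to order eight, just enough to match the relevant moments of the quadratic form $\langle a, Ya\rangle$ — substitutes for genuine Gaussianity. The spectral-norm bound, while technical, is comparatively routine given the non-commutative Khintchine inequality, with the design case simply paying the customary logarithmic price in the sampling rate.
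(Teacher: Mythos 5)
Your overall architecture is the right one and matches the paper's: verify the two hypotheses of Proposition~\ref{prop:sufficient_criteria} for each ensemble and then tune the constants; the paper itself does nothing more than import the bounds $Q_{1/\sqrt 2}(E_r;\A)\ge 1/96$ (Gaussian), $Q_\xi(E_r;\A)\ge(1-\xi^2)^2/24$ ($4$-designs) and $\EE[\snorm H]\lesssim\sqrt n$ resp.\ $\sqrt{n\log(2n)}$ from Ref.~\cite{KueRauTer15} and choose $\xi$, $C_G$, $C_{4D}$ accordingly. However, your treatment of the marginal tail bound --- which you correctly identify as the main obstacle --- contains a genuine error. You bound $\EE[\langle a,Ya\rangle^4]$ by a constant times $r$ via Lemma~\ref{lem:effective_low_rank} and assert that Paley--Zygmund then yields an \emph{absolute} constant $C_1$, with the $r$ ``cancelled later by the sampling rate.'' Neither half of this works: Paley--Zygmund gives $\Pr[Z\ge\theta\,\EE Z]\ge(1-\theta)^2(\EE Z)^2/\EE[Z^2]$ with $Z=|\langle a,Ya\rangle|^2$, so a fourth moment of order $r$ against a second moment of order $1$ produces $C_1=\mathrm{O}(1/r)$, not an absolute constant. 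Moreover $Q_{2\xi}$ is a property of a single measurement vector and does not see $m$ at all, so the sampling rate cannot absorb this loss except through the compatibility condition $\xi C_1>2(1+\sqrt2)C_2$, which with $C_1\sim 1/r$ would force $C_G\gtrsim r^2$, i.e.\ $m\gtrsim r^3 n$ --- far worse than the claimed $m\gtrsim rn$.

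The correct route, and the one underlying the results the paper cites, needs no rank information in the $Q$ bound: one shows $\EE[\langle a,Ya\rangle^4]\le C\bigl(\EE[\langle a,Ya\rangle^2]\bigr)^2$ with an \emph{absolute} constant $C$, valid uniformly over all Hermitian $Y$ (by Gaussian hypercontractivity or explicit Wick calculus for degree-two polynomials; for $4$-designs the identical bound holds because the degree-eight Haar moments are reproduced exactly). Paley--Zygmund then gives a rank-independent $C_1$. Lemma~\ref{lem:effective_low_rank} enters the argument only once, in bounding the mean empirical width, $W_m(E_r,\A)\le(1+\sqrt2)\sqrt r\,\EE[\snorm H]$ as in Eq.~\eqref{eq:Wm}; that is the sole place where the factor $\sqrt r$ appears and where it is absorbed by the sampling rates $m\ge C_G rn$ and $m\ge C_{4D}rn\log(2n)$. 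Your description of the $\EE[\snorm H]$ estimates themselves (non-commutative Khintchine, with the logarithm accounting for the design case) is fine.
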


\begin{proof}

Let us start with the Gaussian case. In Ref.\ \cite[Section 4.1.]{KueRauTer15} 
the bounds $Q_{1/\sqrt{2}}~\geq~1/96$ and $\EE \left[ \snorm{H} \right] \leq c_1 \sqrt{n}$ are derived under the assumption $m \geq c_2 n$, where $c_1$ is sufficiently large. Thus, similarly to the proof of Corollary~\ref{cor:derandomizations1}, setting $\xi = 2^{-3/2}$ and choosing the constant $C_G$ in $m$ sufficiently large indeed meets the requirements of Proposition~\ref{prop:sufficient_criteria}.

For the $4$-design case, \cite[Proposition 12]{KueRauTer15} assures that the bound
$Q_{\xi} \left( E_r, \A \right) \geq  \left( 1 - \xi^2 \right)^2/24$ is valid for any $\xi \in [0,1]$.
Also, Ref.\ \cite[Proposition 13]{KueRauTer15} implies 
\begin{equation}
\EE \left[ \snorm{H} \right]
\leq 3.1049 \sqrt{ n \log (2n)}
\leq \frac{3.1049}{\sqrt{C_{4D}}} \sqrt{\frac{m}{r}},
\end{equation}
where we have inserted $m \geq C_{4D} rn \log (2n)$. 
Thus, choosing $\xi$ appropriately and the constant $C_{4D}$ in the sampling rate $m$ large enough again assures that the requirements of Proposition~\ref{prop:sufficient_criteria} are met.
\end{proof}

\bibliographystyle{apsrev4-1}
\bibliography{martin}

\begin{thebibliography}{66}%
\makeatletter
\providecommand \@ifxundefined [1]{%
 \@ifx{#1\undefined}
}%
\providecommand \@ifnum [1]{%
 \ifnum #1\expandafter \@firstoftwo
 \else \expandafter \@secondoftwo
 \fi
}%
\providecommand \@ifx [1]{%
 \ifx #1\expandafter \@firstoftwo
 \else \expandafter \@secondoftwo
 \fi
}%
\providecommand \natexlab [1]{#1}%
\providecommand \enquote  [1]{``#1''}%
\providecommand \bibnamefont  [1]{#1}%
\providecommand \bibfnamefont [1]{#1}%
\providecommand \citenamefont [1]{#1}%
\providecommand \href@noop [0]{\@secondoftwo}%
\providecommand \href [0]{\begingroup \@sanitize@url \@href}%
\providecommand \@href[1]{\@@startlink{#1}\@@href}%
\providecommand \@@href[1]{\endgroup#1\@@endlink}%
\providecommand \@sanitize@url [0]{\catcode `\\12\catcode `\$12\catcode
  `\&12\catcode `\#12\catcode `\^12\catcode `\_12\catcode `\%12\relax}%
\providecommand \@@startlink[1]{}%
\providecommand \@@endlink[0]{}%
\providecommand \url  [0]{\begingroup\@sanitize@url \@url }%
\providecommand \@url [1]{\endgroup\@href {#1}{\urlprefix }}%
\providecommand \urlprefix  [0]{URL }%
\providecommand \Eprint [0]{\href }%
\providecommand \doibase [0]{http://dx.doi.org/}%
\providecommand \selectlanguage [0]{\@gobble}%
\providecommand \bibinfo  [0]{\@secondoftwo}%
\providecommand \bibfield  [0]{\@secondoftwo}%
\providecommand \translation [1]{[#1]}%
\providecommand \BibitemOpen [0]{}%
\providecommand \bibitemStop [0]{}%
\providecommand \bibitemNoStop [0]{.\EOS\space}%
\providecommand \EOS [0]{\spacefactor3000\relax}%
\providecommand \BibitemShut  [1]{\csname bibitem#1\endcsname}%
\let\auto@bib@innerbib\@empty
\bibitem [{\citenamefont {Koren}\ \emph {et~al.}(2009)\citenamefont {Koren},
  \citenamefont {Bell},\ and\ \citenamefont {Volinsky}}]{KorBellVol09}%
  \BibitemOpen
  \bibfield  {author} {\bibinfo {author} {\bibfnamefont {Y.}~\bibnamefont
  {Koren}}, \bibinfo {author} {\bibfnamefont {R.}~\bibnamefont {Bell}}, \ and\
  \bibinfo {author} {\bibfnamefont {C.}~\bibnamefont {Volinsky}},\ }\bibinfo
  {title} {\emph {Matrix Factorization Techniques for Recommender Systems}},\
  \href {\doibase 10.1109/MC.2009.263} {\bibfield  {journal} {\bibinfo
  {journal} {Computer}\ }\textbf {\bibinfo {volume} {42}},\ \bibinfo {pages}
  {30} (\bibinfo {year} {2009})}\BibitemShut {NoStop}%
\bibitem [{\citenamefont {Gross}\ \emph {et~al.}(2010)\citenamefont {Gross},
  \citenamefont {Liu}, \citenamefont {Flammia}, \citenamefont {Becker},\ and\
  \citenamefont {Eisert}}]{GroLiuFla10}%
  \BibitemOpen
  \bibfield  {author} {\bibinfo {author} {\bibfnamefont {D.}~\bibnamefont
  {Gross}}, \bibinfo {author} {\bibfnamefont {Y.-K.}\ \bibnamefont {Liu}},
  \bibinfo {author} {\bibfnamefont {S.~T.}\ \bibnamefont {Flammia}}, \bibinfo
  {author} {\bibfnamefont {S.}~\bibnamefont {Becker}}, \ and\ \bibinfo {author}
  {\bibfnamefont {J.}~\bibnamefont {Eisert}},\ }\bibinfo {title} {\emph
  {Quantum State Tomography via Compressed Sensing}},\ \href {\doibase
  10.1103/PhysRevLett.105.150401} {\bibfield  {journal} {\bibinfo  {journal}
  {Phys. Rev. Lett.}\ }\textbf {\bibinfo {volume} {105}},\ \bibinfo {pages}
  {150401} (\bibinfo {year} {2010})},\ \Eprint {http://arxiv.org/abs/0909.3304}
  {arXiv:0909.3304 [quant-ph]}\BibitemShut {NoStop}%
\bibitem [{\citenamefont {Flammia}\ \emph {et~al.}(2012)\citenamefont
  {Flammia}, \citenamefont {Gross}, \citenamefont {Liu},\ and\ \citenamefont
  {Eisert}}]{FlaGroLiu12}%
  \BibitemOpen
  \bibfield  {author} {\bibinfo {author} {\bibfnamefont {S.~T.}\ \bibnamefont
  {Flammia}}, \bibinfo {author} {\bibfnamefont {D.}~\bibnamefont {Gross}},
  \bibinfo {author} {\bibfnamefont {Y.-K.}\ \bibnamefont {Liu}}, \ and\
  \bibinfo {author} {\bibfnamefont {J.}~\bibnamefont {Eisert}},\ }\bibinfo
  {title} {\emph {Quantum tomography via compressed sensing: error bounds,
  sample complexity and efficient estimators}},\ \href {\doibase
  10.1088/1367-2630/14/9/095022} {\bibfield  {journal} {\bibinfo  {journal}
  {New J. Phys.}\ }\textbf {\bibinfo {volume} {14}},\ \bibinfo {pages} {095022}
  (\bibinfo {year} {2012})},\ \Eprint {http://arxiv.org/abs/1205.2300}
  {arXiv:1205.2300 [quant-ph]}\BibitemShut {NoStop}%
\bibitem [{\citenamefont {Bickel}\ and\ \citenamefont
  {Levina}(2008)}]{BicLev08}%
  \BibitemOpen
  \bibfield  {author} {\bibinfo {author} {\bibfnamefont {P.}~\bibnamefont
  {Bickel}}\ and\ \bibinfo {author} {\bibfnamefont {E.}~\bibnamefont
  {Levina}},\ }\bibinfo {title} {\emph {Regularized estimation of large
  covariance matrices}},\ \href {\doibase 10.1214/009053607000000758}
  {\bibfield  {journal} {\bibinfo  {journal} {Ann. Statist.}\ }\textbf
  {\bibinfo {volume} {36}},\ \bibinfo {pages} {199} (\bibinfo {year} {2008})},\
  \Eprint {http://arxiv.org/abs/0803.1909} {arXiv:0803.1909}\BibitemShut
  {NoStop}%
\bibitem [{\citenamefont {Chen}\ \emph {et~al.}(2015)\citenamefont {Chen},
  \citenamefont {Chi},\ and\ \citenamefont {Goldsmith}}]{CheChiGol15}%
  \BibitemOpen
  \bibfield  {author} {\bibinfo {author} {\bibfnamefont {Y.}~\bibnamefont
  {Chen}}, \bibinfo {author} {\bibfnamefont {Y.}~\bibnamefont {Chi}}, \ and\
  \bibinfo {author} {\bibfnamefont {A.}~\bibnamefont {Goldsmith}},\ }\bibinfo
  {title} {\emph {Exact and stable covariance estimation from quadratic
  sampling via convex programming}},\ \href {\doibase 10.1109/TIT.2015.2429594}
  {\bibfield  {journal} {\bibinfo  {journal} {IEEE Trans Inf. Th.}\ }\textbf
  {\bibinfo {volume} {61}},\ \bibinfo {pages} {4034} (\bibinfo {year}
  {2015})},\ \Eprint {http://arxiv.org/abs/1310.0807} {arXiv:1310.0807
  [cs.IT]}\BibitemShut {NoStop}%
\bibitem [{\citenamefont {Basri}\ and\ \citenamefont
  {Jacobs}(2003)}]{BasJac03}%
  \BibitemOpen
  \bibfield  {author} {\bibinfo {author} {\bibfnamefont {R.}~\bibnamefont
  {Basri}}\ and\ \bibinfo {author} {\bibfnamefont {D.~W.}\ \bibnamefont
  {Jacobs}},\ }\bibinfo {title} {\emph {{L}ambertian reflectance and linear
  sub-spaces}},\ \href {\doibase 10.1109/TPAMI.2003.1177153} {\bibfield
  {journal} {\bibinfo  {journal} {IEEE Trans. Pattern Anal. Mach. Intell.}\
  }\textbf {\bibinfo {volume} {25}},\ \bibinfo {pages} {218} (\bibinfo {year}
  {2003})}\BibitemShut {NoStop}%
\bibitem [{\citenamefont {Natarajan}(1995)}]{Nat95}%
  \BibitemOpen
  \bibfield  {author} {\bibinfo {author} {\bibfnamefont {B.~K.}\ \bibnamefont
  {Natarajan}},\ }\bibinfo {title} {\emph {Sparse approximate solutions to
  linear systems}},\ \href {\doibase 10.1137/S0097539792240406} {\bibfield
  {journal} {\bibinfo  {journal} {SIAM J. Comp.}\ }\textbf {\bibinfo {volume}
  {24}},\ \bibinfo {pages} {227} (\bibinfo {year} {1995})}\BibitemShut
  {NoStop}%
\bibitem [{\citenamefont {Foucart}\ and\ \citenamefont
  {Rauhut}(2013)}]{FouRau13}%
  \BibitemOpen
  \bibfield  {author} {\bibinfo {author} {\bibfnamefont {S.}~\bibnamefont
  {Foucart}}\ and\ \bibinfo {author} {\bibfnamefont {H.}~\bibnamefont
  {Rauhut}},\ }\href@noop {} {\emph {\bibinfo {title} {A mathematical
  introduction to compressive sensing}}}\ (\bibinfo  {publisher} {Springer},\
  \bibinfo {year} {2013})\BibitemShut {NoStop}%
\bibitem [{\citenamefont {Fazel}\ \emph {et~al.}(2001)\citenamefont {Fazel},
  \citenamefont {Hindi},\ and\ \citenamefont {Boyd}}]{FazHinBoy01}%
  \BibitemOpen
  \bibfield  {author} {\bibinfo {author} {\bibfnamefont {M.}~\bibnamefont
  {Fazel}}, \bibinfo {author} {\bibfnamefont {H.}~\bibnamefont {Hindi}}, \ and\
  \bibinfo {author} {\bibfnamefont {S.}~\bibnamefont {Boyd}},\ }\bibfield
  {title} {\emph {\bibinfo {title} {\emph {A rank minimization heuristic with
  application to minimum order system approximation}},\ }}in\ \href {\doibase
  10.1109/ACC.2001.945730} {\emph {\bibinfo {booktitle} {Proceedings American
  Control Conference}}},\ Vol.~\bibinfo {volume} {6}\ (\bibinfo {year} {2001})\
  pp.\ \bibinfo {pages} {4734--4739}\BibitemShut {NoStop}%
\bibitem [{\citenamefont {Cand\`{e}s}\ and\ \citenamefont
  {Recht}(2009)}]{CanRec09}%
  \BibitemOpen
  \bibfield  {author} {\bibinfo {author} {\bibfnamefont {E.}~\bibnamefont
  {Cand\`{e}s}}\ and\ \bibinfo {author} {\bibfnamefont {B.}~\bibnamefont
  {Recht}},\ }\bibinfo {title} {\emph {Exact Matrix Completion via Convex
  Optimization}},\ \href {\doibase 10.1007/s10208-009-9045-5} {\bibfield
  {journal} {\bibinfo  {journal} {Found. Comput. Math.}\ }\textbf {\bibinfo
  {volume} {9}},\ \bibinfo {pages} {717} (\bibinfo {year} {2009})},\ \Eprint
  {http://arxiv.org/abs/0805.4471} {arXiv:0805.4471 [cs.IT]}\BibitemShut
  {NoStop}%
\bibitem [{\citenamefont {Cand{\`e}s}\ and\ \citenamefont
  {Tao}(2010)}]{CanTao10}%
  \BibitemOpen
  \bibfield  {author} {\bibinfo {author} {\bibfnamefont {E.~J.}\ \bibnamefont
  {Cand{\`e}s}}\ and\ \bibinfo {author} {\bibfnamefont {T.}~\bibnamefont
  {Tao}},\ }\bibinfo {title} {\emph {The power of convex relaxation:
  near-optimal matrix completion}},\ \href {\doibase 10.1109/TIT.2010.2044061}
  {\bibfield  {journal} {\bibinfo  {journal} {IEEE Trans. Inf. Th.}\ }\textbf
  {\bibinfo {volume} {56}},\ \bibinfo {pages} {2053} (\bibinfo {year}
  {2010})},\ \Eprint {http://arxiv.org/abs/0903.1476} {arXiv:0903.1476
  [cs.IT]}\BibitemShut {NoStop}%
\bibitem [{\citenamefont {{Recht}}\ \emph {et~al.}(2010)\citenamefont
  {{Recht}}, \citenamefont {{Fazel}},\ and\ \citenamefont
  {{Parrilo}}}]{RecFazPar10}%
  \BibitemOpen
  \bibfield  {author} {\bibinfo {author} {\bibfnamefont {B.}~\bibnamefont
  {{Recht}}}, \bibinfo {author} {\bibfnamefont {M.}~\bibnamefont {{Fazel}}}, \
  and\ \bibinfo {author} {\bibfnamefont {P.~A.}\ \bibnamefont {{Parrilo}}},\
  }\bibinfo {title} {\emph {Guaranteed Minimum-Rank Solutions of Linear Matrix
  Equations via Nuclear Norm Minimization}},\ \href {\doibase
  10.1137/070697835} {\bibfield  {journal} {\bibinfo  {journal} {SIAM Rev.}\
  }\textbf {\bibinfo {volume} {52}},\ \bibinfo {pages} {471} (\bibinfo {year}
  {2010})},\ \Eprint {http://arxiv.org/abs/0706.4138} {arXiv:0706.4138
  [math.OC]}\BibitemShut {NoStop}%
\bibitem [{\citenamefont {Gross}(2011)}]{Gro11}%
  \BibitemOpen
  \bibfield  {author} {\bibinfo {author} {\bibfnamefont {D.}~\bibnamefont
  {Gross}},\ }\bibinfo {title} {\emph {Recovering Low-Rank Matrices From Few
  Coefficients in Any Basis}},\ \href {\doibase 10.1109/TIT.2011.2104999}
  {\bibfield  {journal} {\bibinfo  {journal} {IEEE Trans. Inf. Th.}\ }\textbf
  {\bibinfo {volume} {57}},\ \bibinfo {pages} {1548} (\bibinfo {year}
  {2011})},\ \Eprint {http://arxiv.org/abs/0910.1879} {arXiv:0910.1879
  [cs.IT]}\BibitemShut {NoStop}%
\bibitem [{\citenamefont {{C}and{\`e}s}\ and\ \citenamefont
  {{P}lan}(2011)}]{CanPla11}%
  \BibitemOpen
  \bibfield  {author} {\bibinfo {author} {\bibfnamefont {E.~J.}\ \bibnamefont
  {{C}and{\`e}s}}\ and\ \bibinfo {author} {\bibfnamefont {Y.}~\bibnamefont
  {{P}lan}},\ }\bibinfo {title} {\emph {{T}ight oracle bounds for low-rank
  matrix recovery from a minimal number of random measurements}},\ \href
  {\doibase 10.1109/TIT.2011.2111771} {\bibfield  {journal} {\bibinfo
  {journal} {{I}{E}{E}{E} {T}rans. {I}nform. {T}heory}\ }\textbf {\bibinfo
  {volume} {57}},\ \bibinfo {pages} {2342} (\bibinfo {year} {2011})},\ \Eprint
  {http://arxiv.org/abs/1001.0339} {arXiv:1001.0339 [cs.IT]}\BibitemShut
  {NoStop}%
\bibitem [{\citenamefont {Ahmed}\ \emph {et~al.}(2014)\citenamefont {Ahmed},
  \citenamefont {Recht},\ and\ \citenamefont {Romberg}}]{AhmRecRom12}%
  \BibitemOpen
  \bibfield  {author} {\bibinfo {author} {\bibfnamefont {A.}~\bibnamefont
  {Ahmed}}, \bibinfo {author} {\bibfnamefont {B.}~\bibnamefont {Recht}}, \ and\
  \bibinfo {author} {\bibfnamefont {J.}~\bibnamefont {Romberg}},\ }\bibinfo
  {title} {\emph {Blind Deconvolution Using Convex Programming}},\ \href
  {\doibase 10.1109/TIT.2013.2294644} {\bibfield  {journal} {\bibinfo
  {journal} {IEEE Trans. Inf. Th.}\ }\textbf {\bibinfo {volume} {60}},\
  \bibinfo {pages} {1711} (\bibinfo {year} {2014})},\ \Eprint
  {http://arxiv.org/abs/1211.5608} {arXiv:1211.5608 [cs.IT]}\BibitemShut
  {NoStop}%
\bibitem [{\citenamefont {Kueng}\ \emph {et~al.}(2017)\citenamefont {Kueng},
  \citenamefont {Rauhut},\ and\ \citenamefont {Terstiege}}]{KueRauTer15}%
  \BibitemOpen
  \bibfield  {author} {\bibinfo {author} {\bibfnamefont {R.}~\bibnamefont
  {Kueng}}, \bibinfo {author} {\bibfnamefont {H.}~\bibnamefont {Rauhut}}, \
  and\ \bibinfo {author} {\bibfnamefont {U.}~\bibnamefont {Terstiege}},\
  }\bibinfo {title} {\emph {Low rank matrix recovery from rank one
  measurements}},\ \href {\doibase 10.1016/j.acha.2015.07.007} {\bibfield
  {journal} {\bibinfo  {journal} {Appl. Comp. Harm. Anal.}\ }\textbf {\bibinfo
  {volume} {42}},\ \bibinfo {pages} {88} (\bibinfo {year} {2017})},\ \Eprint
  {http://arxiv.org/abs/1410.6913} {arXiv:1410.6913 [cs.IT]}\BibitemShut
  {NoStop}%
\bibitem [{\citenamefont {{Kabanava}}\ \emph {et~al.}(2015)\citenamefont
  {{Kabanava}}, \citenamefont {{Kueng}}, \citenamefont {{Rauhut}},\ and\
  \citenamefont {{Terstiege}}}]{KabKueRau15}%
  \BibitemOpen
  \bibfield  {author} {\bibinfo {author} {\bibfnamefont {M.}~\bibnamefont
  {{Kabanava}}}, \bibinfo {author} {\bibfnamefont {R.}~\bibnamefont {{Kueng}}},
  \bibinfo {author} {\bibfnamefont {H.}~\bibnamefont {{Rauhut}}}, \ and\
  \bibinfo {author} {\bibfnamefont {U.}~\bibnamefont {{Terstiege}}},\
  }\href@noop {} {\bibinfo {title} {\emph {Stable low-rank matrix recovery via
  null space properties}},\ }\Eprint {http://arxiv.org/abs/1507.07184}
  {arXiv:1507.07184 [cs.IT]}\BibitemShut {NoStop}%
\bibitem [{\citenamefont {Tropp}(2015)}]{Tro14}%
  \BibitemOpen
  \bibfield  {author} {\bibinfo {author} {\bibfnamefont {J.~A.}\ \bibnamefont
  {Tropp}},\ }\bibinfo {title} {\emph {Convex recovery of a structured signal
  from independent random linear measurements}},\ in\ \href {\doibase
  10.1007/978-3-319-19749-4_2} {\emph {\bibinfo {booktitle} {Sampling Theory, a
  Renaissance}}},\ \bibinfo {editor} {edited by\ \bibinfo {editor}
  {\bibfnamefont {E.~G.}\ \bibnamefont {Pfander}}}\ (\bibinfo  {publisher}
  {Springer},\ \bibinfo {year} {2015})\ pp.\ \bibinfo {pages} {67--101},\
  \Eprint {http://arxiv.org/abs/1405.1102} {arXiv:1405.1102}\BibitemShut
  {NoStop}%
\bibitem [{\citenamefont {Mendelson}(2015)}]{Men14}%
  \BibitemOpen
  \bibfield  {author} {\bibinfo {author} {\bibfnamefont {S.}~\bibnamefont
  {Mendelson}},\ }\bibinfo {title} {\emph {Learning Without Concentration}},\
  \href {\doibase 10.1145/2699439} {\bibfield  {journal} {\bibinfo  {journal}
  {J. ACM}\ }\textbf {\bibinfo {volume} {62}},\ \bibinfo {pages} {21:1}
  (\bibinfo {year} {2015})},\ \Eprint {http://arxiv.org/abs/1401.0304}
  {arXiv:1401.0304 [cs.LG]}\BibitemShut {NoStop}%
\bibitem [{\citenamefont {Koltchinskii}\ and\ \citenamefont
  {Mendelson}(2015)}]{KolMen14}%
  \BibitemOpen
  \bibfield  {author} {\bibinfo {author} {\bibfnamefont {V.}~\bibnamefont
  {Koltchinskii}}\ and\ \bibinfo {author} {\bibfnamefont {S.}~\bibnamefont
  {Mendelson}},\ }\bibinfo {title} {\emph {Bounding the smallest singular value
  of a random matrix without concentration}},\ \href {\doibase
  10.1093/imrn/rnv096} {\bibfield  {journal} {\bibinfo  {journal}
  {International Mathematics Research Notices}\ ,\ \bibinfo {pages} {rnv096}}
  (\bibinfo {year} {2015})},\ \Eprint {http://arxiv.org/abs/1312.3580}
  {arXiv:1312.3580 [math.PR]}\BibitemShut {NoStop}%
\bibitem [{\citenamefont {Kitaev}\ \emph {et~al.}(2002)\citenamefont {Kitaev},
  \citenamefont {Shen},\ and\ \citenamefont {Vyalyi}}]{KitSheVya02}%
  \BibitemOpen
  \bibfield  {author} {\bibinfo {author} {\bibfnamefont {A.~Y.}\ \bibnamefont
  {Kitaev}}, \bibinfo {author} {\bibfnamefont {A.}~\bibnamefont {Shen}}, \ and\
  \bibinfo {author} {\bibfnamefont {M.~N.}\ \bibnamefont {Vyalyi}},\
  }\href@noop {} {\emph {\bibinfo {title} {Classical and quantum
  computation}}},\ Vol.~\bibinfo {volume} {47}\ (\bibinfo  {publisher}
  {American Mathematical Society},\ \bibinfo {year} {2002})\BibitemShut
  {NoStop}%
\bibitem [{\citenamefont {Paulsen}(2002)}]{Pau02}%
  \BibitemOpen
  \bibfield  {author} {\bibinfo {author} {\bibfnamefont {V.}~\bibnamefont
  {Paulsen}},\ }\href@noop {} {\emph {\bibinfo {title} {Completely bounded maps
  and operator algebras}}},\ Vol.~\bibinfo {volume} {78}\ (\bibinfo
  {publisher} {Cambridge University Press},\ \bibinfo {year}
  {2002})\BibitemShut {NoStop}%
\bibitem [{\citenamefont {Watrous}(2009)}]{Wat09}%
  \BibitemOpen
  \bibfield  {author} {\bibinfo {author} {\bibfnamefont {J.}~\bibnamefont
  {Watrous}},\ }\bibinfo {title} {\emph {Semidefinite Programs for Completely
  Bounded Norms}},\ \href {\doibase 10.4086/toc.2009.v005a011} {\bibfield
  {journal} {\bibinfo  {journal} {Theory of Computing}\ }\textbf {\bibinfo
  {volume} {5}},\ \bibinfo {pages} {217} (\bibinfo {year} {2009})},\ \Eprint
  {http://arxiv.org/abs/0901.4709} {arXiv:0901.4709 [quant-ph]}\BibitemShut
  {NoStop}%
\bibitem [{\citenamefont {Ben-Aroya}\ and\ \citenamefont
  {Ta-Shma}(2010)}]{BenTaS09}%
  \BibitemOpen
  \bibfield  {author} {\bibinfo {author} {\bibfnamefont {A.}~\bibnamefont
  {Ben-Aroya}}\ and\ \bibinfo {author} {\bibfnamefont {A.}~\bibnamefont
  {Ta-Shma}},\ }\bibinfo {title} {\emph {On the Complexity of Approximating the
  Diamond Norm}},\ \href@noop {} {\bibfield  {journal} {\bibinfo  {journal}
  {Quantum Info. Comput.}\ }\textbf {\bibinfo {volume} {10}},\ \bibinfo {pages}
  {77} (\bibinfo {year} {2010})},\ \Eprint {http://arxiv.org/abs/0902.3397}
  {arXiv:0902.3397 [quant-ph]}\BibitemShut {NoStop}%
\bibitem [{\citenamefont {{Watrous}}(2012)}]{Wat12}%
  \BibitemOpen
  \bibfield  {author} {\bibinfo {author} {\bibfnamefont {J.}~\bibnamefont
  {{Watrous}}},\ }\href@noop {} {\bibinfo {title} {\emph {Simpler semidefinite
  programs for completely bounded norms}},\ }\Eprint
  {http://arxiv.org/abs/1207.5726} {arXiv:1207.5726 [quant-ph]}\BibitemShut
  {NoStop}%
\bibitem [{\citenamefont {Shabani}\ \emph {et~al.}(2011)\citenamefont
  {Shabani}, \citenamefont {Kosut}, \citenamefont {Mohseni}, \citenamefont
  {Rabitz}, \citenamefont {Broome}, \citenamefont {Almeida}, \citenamefont
  {Fedrizzi},\ and\ \citenamefont {White}}]{ShaKosMoh11}%
  \BibitemOpen
  \bibfield  {author} {\bibinfo {author} {\bibfnamefont {A.}~\bibnamefont
  {Shabani}}, \bibinfo {author} {\bibfnamefont {R.~L.}\ \bibnamefont {Kosut}},
  \bibinfo {author} {\bibfnamefont {M.}~\bibnamefont {Mohseni}}, \bibinfo
  {author} {\bibfnamefont {H.}~\bibnamefont {Rabitz}}, \bibinfo {author}
  {\bibfnamefont {M.~A.}\ \bibnamefont {Broome}}, \bibinfo {author}
  {\bibfnamefont {M.~P.}\ \bibnamefont {Almeida}}, \bibinfo {author}
  {\bibfnamefont {A.}~\bibnamefont {Fedrizzi}}, \ and\ \bibinfo {author}
  {\bibfnamefont {A.~G.}\ \bibnamefont {White}},\ }\bibinfo {title} {\emph
  {Efficient Measurement of Quantum Dynamics via Compressive Sensing}},\ \href
  {\doibase 10.1103/PhysRevLett.106.100401} {\bibfield  {journal} {\bibinfo
  {journal} {Phys. Rev. Lett.}\ }\textbf {\bibinfo {volume} {106}},\ \bibinfo
  {pages} {100401} (\bibinfo {year} {2011})},\ \Eprint
  {http://arxiv.org/abs/0910.5498} {arXiv:0910.5498 [quant-ph]}\BibitemShut
  {NoStop}%
\bibitem [{\citenamefont {Voroninski}(2013)}]{Vor13}%
  \BibitemOpen
  \bibfield  {author} {\bibinfo {author} {\bibfnamefont {V.}~\bibnamefont
  {Voroninski}},\ }\href@noop {} {\bibinfo {title} {\emph {Quantum Tomography
  From Few Full-Rank Observables}},\ }\Eprint {http://arxiv.org/abs/1309.7669}
  {arXiv:1309.7669 [math-ph]}\BibitemShut {NoStop}%
\bibitem [{\citenamefont {Kueng}(2015)}]{Kue15}%
  \BibitemOpen
  \bibfield  {author} {\bibinfo {author} {\bibfnamefont {R.}~\bibnamefont
  {Kueng}},\ }\bibfield  {title} {\emph {\bibinfo {title} {\emph {Low rank
  matrix recovery from few orthonormal basis measurements}},\ }}in\ \href
  {\doibase 10.1109/SAMPTA.2015.7148921} {\emph {\bibinfo {booktitle} {Sampling
  Theory and Applications (SampTA), 2015 International Conference on}}}\
  (\bibinfo {year} {2015})\ pp.\ \bibinfo {pages} {402--406}\BibitemShut
  {NoStop}%
\bibitem [{\citenamefont {Walther}(1963)}]{Wal63}%
  \BibitemOpen
  \bibfield  {author} {\bibinfo {author} {\bibfnamefont {A.}~\bibnamefont
  {Walther}},\ }\bibinfo {title} {\emph {The question of phase retrieval in
  optics}},\ \href {\doibase 10.1080/713817747} {\bibfield  {journal} {\bibinfo
   {journal} {J. Mod. Opt.}\ }\textbf {\bibinfo {volume} {10}},\ \bibinfo
  {pages} {41} (\bibinfo {year} {1963})}\BibitemShut {NoStop}%
\bibitem [{\citenamefont {Cand\`{e}s}\ \emph {et~al.}(2013)\citenamefont
  {Cand\`{e}s}, \citenamefont {Eldar}, \citenamefont {Strohmer},\ and\
  \citenamefont {Voroninski}}]{CanEldStr11}%
  \BibitemOpen
  \bibfield  {author} {\bibinfo {author} {\bibfnamefont {E.~J.}\ \bibnamefont
  {Cand\`{e}s}}, \bibinfo {author} {\bibfnamefont {Y.~C.}\ \bibnamefont
  {Eldar}}, \bibinfo {author} {\bibfnamefont {T.}~\bibnamefont {Strohmer}}, \
  and\ \bibinfo {author} {\bibfnamefont {V.}~\bibnamefont {Voroninski}},\
  }\bibinfo {title} {\emph {Phase Retrieval via Matrix Completion}},\ \href
  {\doibase 10.1137/110848074} {\bibfield  {journal} {\bibinfo  {journal} {SIAM
  Journal on Imaging Sciences}\ }\textbf {\bibinfo {volume} {6}},\ \bibinfo
  {pages} {199} (\bibinfo {year} {2013})},\ \Eprint
  {http://arxiv.org/abs/1109.0573} {arXiv:1109.0573 [cs.IT]}\BibitemShut
  {NoStop}%
\bibitem [{\citenamefont {Cand\`{e}s}\ and\ \citenamefont
  {Li}(2014)}]{CanLi13}%
  \BibitemOpen
  \bibfield  {author} {\bibinfo {author} {\bibfnamefont {E.}~\bibnamefont
  {Cand\`{e}s}}\ and\ \bibinfo {author} {\bibfnamefont {X.}~\bibnamefont
  {Li}},\ }\bibinfo {title} {\emph {Solving Quadratic Equations via {PhaseLift}
  When There Are About as Many Equations as Unknowns}},\ \href {\doibase
  10.1007/s10208-013-9162-z} {\bibfield  {journal} {\bibinfo  {journal} {Found.
  Comput. Math.}\ }\textbf {\bibinfo {volume} {14}},\ \bibinfo {pages} {1017}
  (\bibinfo {year} {2014})},\ \Eprint {http://arxiv.org/abs/1208.6247}
  {arXiv:1208.6247 [cs.IT]}\BibitemShut {NoStop}%
\bibitem [{\citenamefont {{Cand\`es}}\ \emph {et~al.}(2013)\citenamefont
  {{Cand\`es}}, \citenamefont {{Strohmer}},\ and\ \citenamefont
  {{Voroninski}}}]{CanStroVor13}%
  \BibitemOpen
  \bibfield  {author} {\bibinfo {author} {\bibfnamefont {E.~J.}\ \bibnamefont
  {{Cand\`es}}}, \bibinfo {author} {\bibfnamefont {T.}~\bibnamefont
  {{Strohmer}}}, \ and\ \bibinfo {author} {\bibfnamefont {V.}~\bibnamefont
  {{Voroninski}}},\ }\bibinfo {title} {\emph {{Phaselift: exact and stable
  signal recovery from magnitude measurements via convex programming.}}}\ \href
  {\doibase 10.1002/cpa.21432} {\bibfield  {journal} {\bibinfo  {journal}
  {{Commun. Pure Appl. Math.}}\ }\textbf {\bibinfo {volume} {66}},\ \bibinfo
  {pages} {1241} (\bibinfo {year} {2013})},\ \Eprint
  {http://arxiv.org/abs/1109.4499} {arXiv:1109.4499 [cs.IT]}\BibitemShut
  {NoStop}%
\bibitem [{\citenamefont {Alexeev}\ \emph {et~al.}(2014)\citenamefont
  {Alexeev}, \citenamefont {Bandeira}, \citenamefont {Fickus},\ and\
  \citenamefont {Mixon}}]{AleBanFicMix14}%
  \BibitemOpen
  \bibfield  {author} {\bibinfo {author} {\bibfnamefont {B.}~\bibnamefont
  {Alexeev}}, \bibinfo {author} {\bibfnamefont {A.~S.}\ \bibnamefont
  {Bandeira}}, \bibinfo {author} {\bibfnamefont {M.}~\bibnamefont {Fickus}}, \
  and\ \bibinfo {author} {\bibfnamefont {D.~G.}\ \bibnamefont {Mixon}},\
  }\bibinfo {title} {\emph {Phase retrieval with polarization}},\ \href
  {\doibase 10.1137/12089939X} {\bibfield  {journal} {\bibinfo  {journal} {SIAM
  J. Imaging Sci.}\ }\textbf {\bibinfo {volume} {7}},\ \bibinfo {pages} {35}
  (\bibinfo {year} {2014})},\ \Eprint {http://arxiv.org/abs/1210.7752}
  {arXiv:1210.7752 [cs.IT]}\BibitemShut {NoStop}%
\bibitem [{\citenamefont {Cand\`es}\ \emph {et~al.}(2015)\citenamefont
  {Cand\`es}, \citenamefont {Li},\ and\ \citenamefont
  {Soltanolkotabi}}]{CanLiSol14}%
  \BibitemOpen
  \bibfield  {author} {\bibinfo {author} {\bibfnamefont {E.~J.}\ \bibnamefont
  {Cand\`es}}, \bibinfo {author} {\bibfnamefont {X.}~\bibnamefont {Li}}, \ and\
  \bibinfo {author} {\bibfnamefont {M.}~\bibnamefont {Soltanolkotabi}},\
  }\bibinfo {title} {\emph {Phase retrieval from coded diffraction patterns}},\
  \href {\doibase 10.1016/j.acha.2014.09.004} {\bibfield  {journal} {\bibinfo
  {journal} {Appl. Comput. Harmon. Anal.}\ }\textbf {\bibinfo {volume} {39}},\
  \bibinfo {pages} {277 } (\bibinfo {year} {2015})},\ \Eprint
  {http://arxiv.org/abs/1310.3240} {arXiv:1310.3240 [cs.IT]}\BibitemShut
  {NoStop}%
\bibitem [{\citenamefont {Gross}\ \emph {et~al.}(2015)\citenamefont {Gross},
  \citenamefont {Krahmer},\ and\ \citenamefont {Kueng}}]{GroKraKue15_partial}%
  \BibitemOpen
  \bibfield  {author} {\bibinfo {author} {\bibfnamefont {D.}~\bibnamefont
  {Gross}}, \bibinfo {author} {\bibfnamefont {F.}~\bibnamefont {Krahmer}}, \
  and\ \bibinfo {author} {\bibfnamefont {R.}~\bibnamefont {Kueng}},\ }\bibinfo
  {title} {\emph {A Partial Derandomization of {PhaseLift} Using Spherical
  Designs}},\ \href {\doibase 10.1007/s00041-014-9361-2} {\bibfield  {journal}
  {\bibinfo  {journal} {J. Fourier Anal. Appl.}\ }\textbf {\bibinfo {volume}
  {21}},\ \bibinfo {pages} {229} (\bibinfo {year} {2015})},\ \Eprint
  {http://arxiv.org/abs/1310.2267} {arXiv:1310.2267 [cs.IT]}\BibitemShut
  {NoStop}%
\bibitem [{\citenamefont {Gross}\ \emph {et~al.}(2017)\citenamefont {Gross},
  \citenamefont {Krahmer},\ and\ \citenamefont {Kueng}}]{GroKraKue15_masked}%
  \BibitemOpen
  \bibfield  {author} {\bibinfo {author} {\bibfnamefont {D.}~\bibnamefont
  {Gross}}, \bibinfo {author} {\bibfnamefont {F.}~\bibnamefont {Krahmer}}, \
  and\ \bibinfo {author} {\bibfnamefont {R.}~\bibnamefont {Kueng}},\ }\bibinfo
  {title} {\emph {Improved recovery guarantees for phase retrieval from coded
  diffraction patterns}},\ \href {\doibase 10.1016/j.acha.2015.05.004}
  {\bibfield  {journal} {\bibinfo  {journal} {Appl. Comput. Harmon. Anal.}\
  }\textbf {\bibinfo {volume} {42}},\ \bibinfo {pages} {37 } (\bibinfo {year}
  {2017})},\ \Eprint {http://arxiv.org/abs/1402.6286} {arXiv:1402.6286
  [cs.IT]}\BibitemShut {NoStop}%
\bibitem [{\citenamefont {Walk}\ and\ \citenamefont {Jung}(2013)}]{WalJun13}%
  \BibitemOpen
  \bibfield  {author} {\bibinfo {author} {\bibfnamefont {P.}~\bibnamefont
  {Walk}}\ and\ \bibinfo {author} {\bibfnamefont {P.}~\bibnamefont {Jung}},\
  }\bibfield  {title} {\emph {\bibinfo {title} {\emph {On a reverse
  $l_2$-inequality for sparse circular convolutions}},\ }}in\ \href {\doibase
  10.1109/ICASSP.2013.6638539} {\emph {\bibinfo {booktitle} {Proc. IEEE
  International Conference on Acoustics, Speech, and Signal Processing}}}\
  (\bibinfo {year} {2013})\ pp.\ \bibinfo {pages} {4638--4642}\BibitemShut
  {NoStop}%
\bibitem [{\citenamefont {{Rauhut}}\ and\ \citenamefont
  {{Stojanac}}(2015)}]{RauSto15}%
  \BibitemOpen
  \bibfield  {author} {\bibinfo {author} {\bibfnamefont {H.}~\bibnamefont
  {{Rauhut}}}\ and\ \bibinfo {author} {\bibfnamefont {{\v Z}.}~\bibnamefont
  {{Stojanac}}},\ }\href@noop {} {\bibinfo {title} {\emph {Tensor theta norms
  and low rank recovery}},\ }\Eprint {http://arxiv.org/abs/1505.05175}
  {arXiv:1505.05175 [cs.IT]}\BibitemShut {NoStop}%
\bibitem [{\citenamefont {Barvinok}(2002)}]{Bar02}%
  \BibitemOpen
  \bibfield  {author} {\bibinfo {author} {\bibfnamefont {A.}~\bibnamefont
  {Barvinok}},\ }\href@noop {} {\emph {\bibinfo {title} {A course in
  convexity}}},\ \bibinfo {series} {Graduate Studies in Mathematics},
  Vol.~\bibinfo {volume} {54}\ (\bibinfo  {publisher} {American Mathematical
  Society},\ \bibinfo {year} {2002})\BibitemShut {NoStop}%
\bibitem [{\citenamefont {Chandrasekaran}\ \emph {et~al.}(2012)\citenamefont
  {Chandrasekaran}, \citenamefont {Recht}, \citenamefont {Parrilo},\ and\
  \citenamefont {Willsky}}]{ChaRecPar12}%
  \BibitemOpen
  \bibfield  {author} {\bibinfo {author} {\bibfnamefont {V.}~\bibnamefont
  {Chandrasekaran}}, \bibinfo {author} {\bibfnamefont {B.}~\bibnamefont
  {Recht}}, \bibinfo {author} {\bibfnamefont {P.}~\bibnamefont {Parrilo}}, \
  and\ \bibinfo {author} {\bibfnamefont {A.}~\bibnamefont {Willsky}},\
  }\bibinfo {title} {\emph {The Convex Geometry of Linear Inverse Problems}},\
  \href {\doibase 10.1007/s10208-012-9135-7} {\bibfield  {journal} {\bibinfo
  {journal} {Found. Comput. Math.}\ }\textbf {\bibinfo {volume} {12}},\
  \bibinfo {pages} {805} (\bibinfo {year} {2012})},\ \Eprint
  {http://arxiv.org/abs/1012.0621} {arXiv:1012.0621 [math.OC]}\BibitemShut
  {NoStop}%
\bibitem [{\citenamefont {Boyd}\ and\ \citenamefont
  {Vandenberghe}(2004)}]{Boy04}%
  \BibitemOpen
  \bibfield  {author} {\bibinfo {author} {\bibfnamefont {S.}~\bibnamefont
  {Boyd}}\ and\ \bibinfo {author} {\bibfnamefont {L.}~\bibnamefont
  {Vandenberghe}},\ }\href {\doibase doi:10.1007/PL00011414} {\emph {\bibinfo
  {title} {Convex optimization}}}\ (\bibinfo  {publisher} {Cambridge University
  Press},\ \bibinfo {address} {New York},\ \bibinfo {year} {2004})\BibitemShut
  {NoStop}%
\bibitem [{\citenamefont {Grant}\ and\ \citenamefont {Boyd}(2014)}]{cvx}%
  \BibitemOpen
  \bibfield  {author} {\bibinfo {author} {\bibfnamefont {M.}~\bibnamefont
  {Grant}}\ and\ \bibinfo {author} {\bibfnamefont {S.}~\bibnamefont {Boyd}},\
  }\href@noop {} {\bibinfo {title} {\emph {{CVX}: Matlab Software for
  Disciplined Convex Programming, version 2.1}},\ }\bibinfo {howpublished}
  {\url{http://cvxr.com/cvx}} (\bibinfo {year} {2014})\BibitemShut {NoStop}%
\bibitem [{\citenamefont {Grant}\ and\ \citenamefont {Boyd}(2008)}]{GraBoy08}%
  \BibitemOpen
  \bibfield  {author} {\bibinfo {author} {\bibfnamefont {M.}~\bibnamefont
  {Grant}}\ and\ \bibinfo {author} {\bibfnamefont {S.}~\bibnamefont {Boyd}},\
  }in\ \href@noop {} {\emph {\bibinfo {booktitle} {Recent advances in learning
  and control}}},\ \bibinfo {series and number} {Lecture Notes in Control and
  Information Sciences},\ \bibinfo {editor} {edited by\ \bibinfo {editor}
  {\bibfnamefont {V.}~\bibnamefont {Blondel}}, \bibinfo {editor} {\bibfnamefont
  {S.}~\bibnamefont {Boyd}}, \ and\ \bibinfo {editor} {\bibfnamefont
  {H.}~\bibnamefont {Kimura}}}\ (\bibinfo  {publisher} {Springer-Verlag
  Limited},\ \bibinfo {year} {2008})\ pp.\ \bibinfo {pages} {95--110},\
  \bibinfo {note} {\url{http://stanford.edu/~boyd/graph_dcp.html}}\BibitemShut
  {NoStop}%
\bibitem [{\citenamefont {Kabanava}\ \emph
  {et~al.}(2015{\natexlab{a}})\citenamefont {Kabanava}, \citenamefont
  {Rauhut},\ and\ \citenamefont {Terstiege}}]{KabRauTer15a}%
  \BibitemOpen
  \bibfield  {author} {\bibinfo {author} {\bibfnamefont {M.}~\bibnamefont
  {Kabanava}}, \bibinfo {author} {\bibfnamefont {H.}~\bibnamefont {Rauhut}}, \
  and\ \bibinfo {author} {\bibfnamefont {U.}~\bibnamefont {Terstiege}},\
  }\bibfield  {title} {\emph {\bibinfo {title} {\emph {Analysis of low rank
  matrix recovery via {M}endelson's small ball method}},\ }}in\ \href {\doibase
  10.1109/SAMPTA.2015.7148918} {\emph {\bibinfo {booktitle} {Sampling Theory
  and Applications (SampTA), 2015 International Conference on}}}\ (\bibinfo
  {year} {2015})\ pp.\ \bibinfo {pages} {387--391}\BibitemShut {NoStop}%
\bibitem [{\citenamefont {Kabanava}\ \emph
  {et~al.}(2015{\natexlab{b}})\citenamefont {Kabanava}, \citenamefont
  {Rauhut},\ and\ \citenamefont {Terstiege}}]{KabRauTer15b}%
  \BibitemOpen
  \bibfield  {author} {\bibinfo {author} {\bibfnamefont {M.}~\bibnamefont
  {Kabanava}}, \bibinfo {author} {\bibfnamefont {H.}~\bibnamefont {Rauhut}}, \
  and\ \bibinfo {author} {\bibfnamefont {U.}~\bibnamefont {Terstiege}},\
  }\bibfield  {title} {\emph {\bibinfo {title} {\emph {On the minimal number of
  measurements in low-rank matrix recovery}},\ }}in\ \href {\doibase
  10.1109/SAMPTA.2015.7148917} {\emph {\bibinfo {booktitle} {Sampling Theory
  and Applications (SampTA), 2015 International Conference on}}}\ (\bibinfo
  {year} {2015})\ pp.\ \bibinfo {pages} {382--386}\BibitemShut {NoStop}%
\bibitem [{\citenamefont {Delsarte}\ \emph {et~al.}(1977)\citenamefont
  {Delsarte}, \citenamefont {Goethals},\ and\ \citenamefont
  {Seidel}}]{DelGoeSei77}%
  \BibitemOpen
  \bibfield  {author} {\bibinfo {author} {\bibfnamefont {P.}~\bibnamefont
  {Delsarte}}, \bibinfo {author} {\bibfnamefont {J.}~\bibnamefont {Goethals}},
  \ and\ \bibinfo {author} {\bibfnamefont {J.}~\bibnamefont {Seidel}},\
  }\bibinfo {title} {\emph {Spherical codes and designs}},\ \href {\doibase
  10.1007/BF03187604} {\bibfield  {journal} {\bibinfo  {journal} {Geom.
  Dedicata}\ }\textbf {\bibinfo {volume} {6}},\ \bibinfo {pages} {363}
  (\bibinfo {year} {1977})}\BibitemShut {NoStop}%
\bibitem [{\citenamefont {Renes}\ \emph {et~al.}(2004)\citenamefont {Renes},
  \citenamefont {Blume-Kohout}, \citenamefont {Scott},\ and\ \citenamefont
  {Caves}}]{RenBluRob04}%
  \BibitemOpen
  \bibfield  {author} {\bibinfo {author} {\bibfnamefont {J.~M.}\ \bibnamefont
  {Renes}}, \bibinfo {author} {\bibfnamefont {R.}~\bibnamefont {Blume-Kohout}},
  \bibinfo {author} {\bibfnamefont {A.~J.}\ \bibnamefont {Scott}}, \ and\
  \bibinfo {author} {\bibfnamefont {C.~M.}\ \bibnamefont {Caves}},\ }\bibinfo
  {title} {\emph {Symmetric informationally complete quantum measurements}},\
  \href {\doibase 10.1063/1.1737053} {\bibfield  {journal} {\bibinfo  {journal}
  {J. Math. Phys.}\ }\textbf {\bibinfo {volume} {45}},\ \bibinfo {pages} {2171}
  (\bibinfo {year} {2004})}\BibitemShut {NoStop}%
\bibitem [{\citenamefont {Ambainis}\ and\ \citenamefont
  {Emerson}(2007)}]{AmbEme07}%
  \BibitemOpen
  \bibfield  {author} {\bibinfo {author} {\bibfnamefont {A.}~\bibnamefont
  {Ambainis}}\ and\ \bibinfo {author} {\bibfnamefont {J.}~\bibnamefont
  {Emerson}},\ }\bibfield  {title} {\emph {\bibinfo {title} {\emph {Quantum
  t-designs: t-wise Independence in the Quantum World}},\ }}in\ \href {\doibase
  10.1109/CCC.2007.26} {\emph {\bibinfo {booktitle} {Computational Complexity,
  2007. CCC '07. Twenty-Second Annual IEEE Conference on}}}\ (\bibinfo {year}
  {2007})\ pp.\ \bibinfo {pages} {129--140},\ \Eprint
  {http://arxiv.org/abs/quant-ph/0701126} {quant-ph/0701126}\BibitemShut
  {NoStop}%
\bibitem [{\citenamefont {Choi}(1975)}]{Cho75}%
  \BibitemOpen
  \bibfield  {author} {\bibinfo {author} {\bibfnamefont {M.-D.}\ \bibnamefont
  {Choi}},\ }\bibinfo {title} {\emph {Completely positive linear maps on
  complex matrices}},\ \href {\doibase 10.1016/0024-3795(75)90075-0} {\bibfield
   {journal} {\bibinfo  {journal} {Lin. Alg. App.}\ }\textbf {\bibinfo {volume}
  {10}},\ \bibinfo {pages} {285 } (\bibinfo {year} {1975})}\BibitemShut
  {NoStop}%
\bibitem [{\citenamefont {Jamiolkowski}(1972)}]{Jam72}%
  \BibitemOpen
  \bibfield  {author} {\bibinfo {author} {\bibfnamefont {A.}~\bibnamefont
  {Jamiolkowski}},\ }\bibinfo {title} {\emph {Linear transformations which
  preserve trace and positive semidefiniteness of operators}},\ \href {\doibase
  10.1016/0034-4877(72)90011-0} {\bibfield  {journal} {\bibinfo  {journal}
  {Rep. Math. Phys.}\ }\textbf {\bibinfo {volume} {3}},\ \bibinfo {pages} {275
  } (\bibinfo {year} {1972})}\BibitemShut {NoStop}%
\bibitem [{\citenamefont {{Ling}}\ and\ \citenamefont
  {{Strohmer}}(2015)}]{LinStr15}%
  \BibitemOpen
  \bibfield  {author} {\bibinfo {author} {\bibfnamefont {S.}~\bibnamefont
  {{Ling}}}\ and\ \bibinfo {author} {\bibfnamefont {T.}~\bibnamefont
  {{Strohmer}}},\ }\href@noop {} {\bibinfo {title} {\emph {Self-Calibration and
  Biconvex Compressive Sensing}},\ }\Eprint {http://arxiv.org/abs/1501.06864}
  {arXiv:1501.06864 [cs.IT]}\BibitemShut {NoStop}%
\bibitem [{\citenamefont {Nielsen}\ and\ \citenamefont
  {Chuang}(2010)}]{NieChu10}%
  \BibitemOpen
  \bibfield  {author} {\bibinfo {author} {\bibfnamefont {M.~A.}\ \bibnamefont
  {Nielsen}}\ and\ \bibinfo {author} {\bibfnamefont {I.~L.}\ \bibnamefont
  {Chuang}},\ }\href@noop {} {\emph {\bibinfo {title} {Quantum computation and
  quantum information}}}\ (\bibinfo  {publisher} {Cambridge university press},\
  \bibinfo {year} {2010})\BibitemShut {NoStop}%
\bibitem [{\citenamefont {{Kueng}}\ \emph {et~al.}(2016)\citenamefont
  {{Kueng}}, \citenamefont {{Long}}, \citenamefont {{Doherty}},\ and\
  \citenamefont {{Flammia}}}]{KueLonFla15}%
  \BibitemOpen
  \bibfield  {author} {\bibinfo {author} {\bibfnamefont {R.}~\bibnamefont
  {{Kueng}}}, \bibinfo {author} {\bibfnamefont {D.~M.}\ \bibnamefont {{Long}}},
  \bibinfo {author} {\bibfnamefont {A.~C.}\ \bibnamefont {{Doherty}}}, \ and\
  \bibinfo {author} {\bibfnamefont {S.~T.}\ \bibnamefont {{Flammia}}},\
  }\bibinfo {title} {\emph {Comparing Experiments to the Fault-Tolerance
  Threshold}},\ \href {\doibase 10.1103/PhysRevLett.117.170502} {\bibfield
  {journal} {\bibinfo  {journal} {Phys. Rev. Lett.}\ }\textbf {\bibinfo
  {volume} {117}},\ \bibinfo {eid} {170502} (\bibinfo {year} {2016})},\ \Eprint
  {http://arxiv.org/abs/1510.05653} {arXiv:1510.05653 [quant-ph]}\BibitemShut
  {NoStop}%
\bibitem [{\citenamefont {{Kimmel}}\ and\ \citenamefont
  {{Liu}}(2015)}]{KimLiu15}%
  \BibitemOpen
  \bibfield  {author} {\bibinfo {author} {\bibfnamefont {S.}~\bibnamefont
  {{Kimmel}}}\ and\ \bibinfo {author} {\bibfnamefont {Y.-K.}\ \bibnamefont
  {{Liu}}},\ }\href@noop {} {\bibinfo {title} {\emph {{Quantum Compressed
  Sensing Using 2-Designs}}},\ }\Eprint {http://arxiv.org/abs/1510.08887}
  {arXiv:1510.08887 [quant-ph]}\BibitemShut {NoStop}%
\bibitem [{\citenamefont {Emerson}\ \emph {et~al.}(2005)\citenamefont
  {Emerson}, \citenamefont {Alicki},\ and\ \citenamefont
  {\.{Z}yczkowski}}]{EmeAliZyc05}%
  \BibitemOpen
  \bibfield  {author} {\bibinfo {author} {\bibfnamefont {J.}~\bibnamefont
  {Emerson}}, \bibinfo {author} {\bibfnamefont {R.}~\bibnamefont {Alicki}}, \
  and\ \bibinfo {author} {\bibfnamefont {K.}~\bibnamefont {\.{Z}yczkowski}},\
  }\bibinfo {title} {\emph {Scalable noise estimation with random unitary
  operators}},\ \href {\doibase 10.1088/1464-4266/7/10/021} {\bibfield
  {journal} {\bibinfo  {journal} {J. Opt. B}\ }\textbf {\bibinfo {volume}
  {7}},\ \bibinfo {pages} {S347} (\bibinfo {year} {2005})},\ \Eprint
  {http://arxiv.org/abs/arXiv:quant-ph/0503243}
  {arXiv:quant-ph/0503243}\BibitemShut {NoStop}%
\bibitem [{\citenamefont {Granade}\ \emph {et~al.}(2015)\citenamefont
  {Granade}, \citenamefont {Ferrie},\ and\ \citenamefont {Cory}}]{GraFerCor14}%
  \BibitemOpen
  \bibfield  {author} {\bibinfo {author} {\bibfnamefont {C.}~\bibnamefont
  {Granade}}, \bibinfo {author} {\bibfnamefont {C.}~\bibnamefont {Ferrie}}, \
  and\ \bibinfo {author} {\bibfnamefont {D.~G.}\ \bibnamefont {Cory}},\
  }\bibinfo {title} {\emph {{A}ccelerated randomized benchmarking}},\ \href
  {\doibase 10.1088/1367-2630/17/1/013042} {\bibfield  {journal} {\bibinfo
  {journal} {New J. Phys.}\ }\textbf {\bibinfo {volume} {17}},\ \bibinfo
  {pages} {013042} (\bibinfo {year} {2015})},\ \Eprint
  {http://arxiv.org/abs/1404.5275} {arXiv:1404.5275}\BibitemShut {NoStop}%
\bibitem [{\citenamefont {Ziman}(2008)}]{Zim08}%
  \BibitemOpen
  \bibfield  {author} {\bibinfo {author} {\bibfnamefont {M.}~\bibnamefont
  {Ziman}},\ }\bibinfo {title} {\emph {Process positive-operator-valued
  measure: A mathematical framework for the description of process tomography
  experiments}},\ \href {\doibase 10.1103/PhysRevA.77.062112} {\bibfield
  {journal} {\bibinfo  {journal} {Phys. Rev. A}\ }\textbf {\bibinfo {volume}
  {77}},\ \bibinfo {pages} {062112} (\bibinfo {year} {2008})},\ \Eprint
  {http://arxiv.org/abs/0802.3862} {arXiv:0802.3862 [quant-ph]}\BibitemShut
  {NoStop}%
\bibitem [{\citenamefont {{Michel}}\ \emph {et~al.}(2016)\citenamefont
  {{Michel}}, \citenamefont {{Kliesch}}, \citenamefont {{Kueng}},\ and\
  \citenamefont {{Gross}}}]{MicKliKue16}%
  \BibitemOpen
  \bibfield  {author} {\bibinfo {author} {\bibfnamefont {U.}~\bibnamefont
  {{Michel}}}, \bibinfo {author} {\bibfnamefont {M.}~\bibnamefont {{Kliesch}}},
  \bibinfo {author} {\bibfnamefont {R.}~\bibnamefont {{Kueng}}}, \ and\
  \bibinfo {author} {\bibfnamefont {D.}~\bibnamefont {{Gross}}},\ }\href@noop
  {} {\bibinfo {title} {\emph {Note on the saturation of the norm inequalities
  between diamond and nuclear norm}},\ }\Eprint
  {http://arxiv.org/abs/1612.07931} {arXiv:1612.07931 [cs.IT]}\BibitemShut
  {NoStop}%
\bibitem [{\citenamefont {Wallman}\ and\ \citenamefont
  {Flammia}(2014)}]{WalFla14}%
  \BibitemOpen
  \bibfield  {author} {\bibinfo {author} {\bibfnamefont {J.~J.}\ \bibnamefont
  {Wallman}}\ and\ \bibinfo {author} {\bibfnamefont {S.~T.}\ \bibnamefont
  {Flammia}},\ }\bibinfo {title} {\emph {Randomized benchmarking with
  confidence}},\ \href {\doibase 10.1088/1367-2630/16/10/103032} {\bibfield
  {journal} {\bibinfo  {journal} {New J. Phys.}\ }\textbf {\bibinfo {volume}
  {16}},\ \bibinfo {pages} {103032} (\bibinfo {year} {2014})},\ \Eprint
  {http://arxiv.org/abs/1404.6025} {arXiv:1404.6025 [quant-ph]}\BibitemShut
  {NoStop}%
\bibitem [{\citenamefont {Dankert}\ \emph {et~al.}(2009)\citenamefont
  {Dankert}, \citenamefont {Cleve}, \citenamefont {Emerson},\ and\
  \citenamefont {Livine}}]{DanCleEme09}%
  \BibitemOpen
  \bibfield  {author} {\bibinfo {author} {\bibfnamefont {C.}~\bibnamefont
  {Dankert}}, \bibinfo {author} {\bibfnamefont {R.}~\bibnamefont {Cleve}},
  \bibinfo {author} {\bibfnamefont {J.}~\bibnamefont {Emerson}}, \ and\
  \bibinfo {author} {\bibfnamefont {E.}~\bibnamefont {Livine}},\ }\bibinfo
  {title} {\emph {Exact and approximate unitary 2-designs and their application
  to fidelity estimation}},\ \href {\doibase 10.1103/PhysRevA.80.012304}
  {\bibfield  {journal} {\bibinfo  {journal} {Phys. Rev. A}\ }\textbf {\bibinfo
  {volume} {80}},\ \bibinfo {pages} {012304} (\bibinfo {year}
  {2009})}\BibitemShut {NoStop}%
\bibitem [{\citenamefont {Gross}\ \emph {et~al.}(2007)\citenamefont {Gross},
  \citenamefont {Audenaert},\ and\ \citenamefont {Eisert}}]{GroAudEis07}%
  \BibitemOpen
  \bibfield  {author} {\bibinfo {author} {\bibfnamefont {D.}~\bibnamefont
  {Gross}}, \bibinfo {author} {\bibfnamefont {K.~M.~R.}\ \bibnamefont
  {Audenaert}}, \ and\ \bibinfo {author} {\bibfnamefont {J.}~\bibnamefont
  {Eisert}},\ }\bibinfo {title} {\emph {Evenly distributed unitaries: on the
  structure of unitary designs}},\ \href {\doibase 10.1063/1.2716992}
  {\bibfield  {journal} {\bibinfo  {journal} {J. Math. Phys.}\ }\textbf
  {\bibinfo {volume} {48}},\ \bibinfo {pages} {052104} (\bibinfo {year}
  {2007})},\ \Eprint {http://arxiv.org/abs/quant-ph/0611002}
  {quant-ph/0611002}\BibitemShut {NoStop}%
\bibitem [{\citenamefont {Zhu}(2015)}]{Zhu15}%
  \BibitemOpen
  \bibfield  {author} {\bibinfo {author} {\bibfnamefont {H.}~\bibnamefont
  {Zhu}},\ }\href@noop {} {\bibinfo {title} {\emph {Multiqubit {C}lifford
  groups are unitary 3-designs}},\ }\Eprint {http://arxiv.org/abs/1510.02619}
  {arXiv:1510.02619 [quant-ph]}\BibitemShut {NoStop}%
\bibitem [{\citenamefont {Webb}(2015)}]{Web15}%
  \BibitemOpen
  \bibfield  {author} {\bibinfo {author} {\bibfnamefont {Z.}~\bibnamefont
  {Webb}},\ }\href@noop {} {\bibinfo {title} {\emph {The {C}lifford group forms
  a unitary 3-design}},\ }\Eprint {http://arxiv.org/abs/1510.02769}
  {arXiv:1510.02769 [quant-ph]}\BibitemShut {NoStop}%
\bibitem [{\citenamefont {Watrous}(2011)}]{Wat11}%
  \BibitemOpen
  \bibfield  {author} {\bibinfo {author} {\bibfnamefont {J.}~\bibnamefont
  {Watrous}},\ }\href@noop {} {\bibinfo {title} {\emph {{CS 766} {T}heory of
  Quantum Information}},\ }\bibinfo {howpublished} {Available online at
  \url{https://cs.uwaterloo.ca/~watrous/LectureNotes.html}} (\bibinfo {year}
  {2011})\BibitemShut {NoStop}%
\bibitem [{\citenamefont {{Werner}}\ \emph {et~al.}(2016)\citenamefont
  {{Werner}}, \citenamefont {{Jaschke}}, \citenamefont {{Silvi}}, \citenamefont
  {{Kliesch}}, \citenamefont {{Calarco}}, \citenamefont {{Eisert}},\ and\
  \citenamefont {{Montangero}}}]{WerJasSil16}%
  \BibitemOpen
  \bibfield  {author} {\bibinfo {author} {\bibfnamefont {A.~H.}\ \bibnamefont
  {{Werner}}}, \bibinfo {author} {\bibfnamefont {D.}~\bibnamefont {{Jaschke}}},
  \bibinfo {author} {\bibfnamefont {P.}~\bibnamefont {{Silvi}}}, \bibinfo
  {author} {\bibfnamefont {M.}~\bibnamefont {{Kliesch}}}, \bibinfo {author}
  {\bibfnamefont {T.}~\bibnamefont {{Calarco}}}, \bibinfo {author}
  {\bibfnamefont {J.}~\bibnamefont {{Eisert}}}, \ and\ \bibinfo {author}
  {\bibfnamefont {S.}~\bibnamefont {{Montangero}}},\ }\bibinfo {title} {\emph
  {Positive Tensor Network Approach for Simulating Open Quantum Many-Body
  Systems}},\ \href {\doibase 10.1103/PhysRevLett.116.237201} {\bibfield
  {journal} {\bibinfo  {journal} {Phys. Rev. Lett.}\ }\textbf {\bibinfo
  {volume} {116}},\ \bibinfo {eid} {237201} (\bibinfo {year} {2016})},\ \Eprint
  {http://arxiv.org/abs/1412.5746} {arXiv:1412.5746 [quant-ph]}\BibitemShut
  {NoStop}%
\bibitem [{\citenamefont {Landsberg}(2012)}]{Lan12}%
  \BibitemOpen
  \bibfield  {author} {\bibinfo {author} {\bibfnamefont {J.~M.}\ \bibnamefont
  {Landsberg}},\ }\href@noop {} {\emph {\bibinfo {title} {Tensors: geometry and
  applications}}}\ (\bibinfo  {publisher} {American Mathematical Society},\
  \bibinfo {year} {2012})\BibitemShut {NoStop}%
\end{thebibliography}%

\end{document}